\newcommand{\CH}{\mathcal{CH}}
\newcommand{\ee}{\varepsilon}
\newcommand{\NN}{\mathbb N}
\newcommand{\ZZ}{\mathbb Z}
\newcommand{\RR}{\mathbb R}
\newcommand{\man}{m}
\newcommand{\manStrat}{M}
\newcommand{\goal}{g}
\newcommand{\lion}{l}
\newcommand{\deltaT}{\Delta}
\newcommand{\rhoD}{\rho}
\newcommand{\rD}{r}
\DeclareMathOperator{\dist}{dist}
\newcommand{\mydef}{\; :=\;}
\newcommand{\dnear}{d_{\textnormal{near}}}
\newcommand{\dfar}{d_{\textnormal{far}}}
\newcommand{\lnear}{l_{\textnormal{near}}}
\newcommand{\lfar}{l_{\textnormal{far}}}
\newtheorem{theorem}{Theorem}
\newtheorem{claim}[theorem]{Claim}
\newtheorem{invariant}[theorem]{Invariant}
\newtheorem{lemma}[theorem]{Lemma}
\title{Best Laid Plans of Lions and Men}
\author{Mikkel Abrahamsen\thanks{Research partly supported by Mikkel Thorup's
    Advanced Grant DFF-0602-02499B from the Danish Council for Independent Research
    under the Sapere Aude research career programme.}
\and Jacob Holm \and 
Eva Rotenberg \and
Christian Wulff-Nilsen\\
%\author[2]{Joan R. Access}
\small
Department of Computer Science, University of Copenhagen, Denmark\\
%  Universitetsparken 5\\
%  DK-2100 Copanhagen \O\\
%  Denmark\\
  \texttt{\{miab, jaho, roden, koolooz\}@di.ku.dk}}
\let\copyrightline\@empty
\theoremstyle{theorem}
\newtheorem{observation}[theorem]{Observation}
\date{} 
\begin{document}

\maketitle

\begin{abstract}
We study the following question dating back to J.E.~Littlewood~(1885--1977):
Can two lions catch a man in a bounded area with rectifiable lakes?
The lions and the man are all assumed to be points moving with at most unit speed.
That the lakes are rectifiable means that their boundaries are finitely
long.
This requirement is necessary 
to avoid pathological examples where the man survives forever because any path
to the lions is infinitely long.

We show that three lions have a winning strategy against a man in a bounded region with finitely many rectifiable lakes. 
This is "tight" in the sense that there exists a region $R$ in the plane where the man has a strategy to
survive forever. We give a rigorous description of such a region $R$; a polygonal region with holes whose exterior and interior
boundaries are pairwise disjoint, simple polygons. 

Finally, we consider the following game played on the entire plane instead of a compact region:
There is any finite number of unit speed lions and one fast man who can
run with speed $1+\ee$ for some value $\ee>0$. Can the man always survive?
We answer the question in the affirmative for any $\ee>0$.
By letting the number of lions tend to infinity, we furthermore show that the man can survive against any countably infinite set of lions.
\end{abstract}

\paragraph{Note:}
The part about the fast man has been published independently~\cite{abrahamsen2020escaping}.
Since the publications of (parts of) this paper~\cite{abrahamsen2017best,abrahamsen2020escaping}, it has come to our attention that two of the main results were already known.
Bhadauria et al.~\cite{bhadauria2012capturing} described a similar polygonal region where the man can survive forever against two lions.
More details are given below.
Chernous'ko~\cite{CHERNOUSKO1976problem} proved already in 1976 that a fast man can escape any finite number of lions.
It seems that the same technique can be used to escape from an infinite set.
For work on a related game and more references, see the paper by Ibragimov et al.~\cite{ibragimov2012evasion}.

\section{Introduction}

`A lion and a man in a closed circular arena have equal maximum speeds.
What tactics should the lion employ to be sure of his meal?'\footnote{The
curve of pursuit ($L$ running always straight at $M$) takes infinite time,
so the wording has its point.} These words (including the footnote) introduce the
now famous lion and man problem, invented by R.~Rado in the
late thirties, in Littlewood's
Miscellany~\cite{littlewood1986littlewood}. It was for a long time believed that
in order to avoid the lion,
it was optimal for the man to run on the boundary of the arena. A simple
argument then shows that the lion
could always catch the man by staying on the radius $OM$ defined by the
man while approaching him as much as possible.
However, A.S.~Besicovitch
proved in 1952 that the man has a very simple strategy
(following which he will approach but not reach the boundary)
that enables him to avoid capture forever no matter
what the lion does. See~\cite{littlewood1986littlewood} for details.

Throughout this paper, all men, lions, and other animals are assumed to be
points.
One can prove that
two lions are enough to catch the man in a circular arena,
and Croft~\cite{croft1964lion}
proves that in general a necessary and sufficient number of %$n$-dimensional 
birds
to catch a
%n $n$-dimensional 
fly inside an $n$-dimensional spherical cage is just
$n$ (again, we assume that the fly and the birds have equal maximum speeds).

A well-known related discrete game is the \emph{cop and robber game}:
Let $G$ be a finite connected undirected graph. Two players called cop $C$ and
robber $R$ play a game on $G$ according to the following rules: First $C$ and then $R$
occupy some vertex of $G$. After that they move alternately along edges of $G$. The
cop $C$ wins if at some point in time $C$ and $R$ are on the same vertex.
If the
robber $R$ can prevent this situation forever, then $R$ wins. The robber
has a winning strategy on many graphs, including all cycles of length at least $4$.
Therefore,
the cop player $C$ can be given a better chance by allowing him,
say, $k$ cops $C_1,\ldots,C_k$.
At every turn $C$ moves any non-empty subset of $\{C_1,\ldots,C_k\}$.
Now, the \emph{cop-number of $G$} is the minimal number of cops needed for
$C$ to win. Aigner and Fromme~\cite{aigner1984game}
observe that the cop-number of the dodecahedron graph is at least
$3$, since if there are only $2$ cops,
the robber can always move to a vertex
not occupied by a cop and not in the neighbourhood of any. Furthermore,
they prove that the cop-number of any
planar graph is at most $3$. Thus, the cop-number of the dodecahedron is
exactly $3$.

Returning to the lion and man game, 
Bollob{\'a}s~\cite{bollobas2006art} writes that the following open problem
was already mentioned by J.E. Littlewood~(1885--1977):
Can two lions catch a man in a bounded (planar) area with rectifiable lakes?
An informal definition of a rectifiable curve
is that it has finite length.
We require that the boundaries of the lakes and the exterior boundary
are all rectifiable curves to avoid pathological examples
where the man survives forever, e.g.~because any path to the lions is infinitely long.
Bollob{\'a}s mentions the same problem in a comment in his edition of Littlewood's Miscellany~\cite{littlewood1986littlewood} and
in~\cite{bollobas2011lion}. The problem is also stated by Fokkink et al.~\cite{fokkink2013open}.
Berarducci and Intrigila~\cite{berarducci1993cop} prove that the man can
survive forever (for some initial positions of the man and lions)
if the area is a planar embedding of the dodecahedron graph
where each edge is a curve with the same
length, say length $1$. The proof is essentially the same as the proof by Aigner and
Fromme~\cite{aigner1984game} that the cop-number of the dodecahedron
is at least $3$: When the man is standing at a vertex, there will always be a neighbouring vertex with distance more than $1$ to the nearest lion. It is thus safe for the man to run to that vertex. This, however, is a one-dimensional example.
Berarducci and Intrigila raise the question whether it is possible to replace the one-dimensional edges
by two-dimensional thin lines.

The question by Berarducci and Intrigila is answered in the affirmative by Bhadauria et al.~\cite{bhadauria2012capturing}:
They present the game in the discrete time model where the players take turns.
When the man has his turn, he moves to a point within geodesic distance at most $1$.
Likewise, when the lions have their turn, each of them moves to a point within distance at most $1$.
It is easy to see that it is an advantage for the lions to play in the discrete time model as compared to the continuous model that we use in this paper, since, if the lions have a winning strategy in the continuous time model, the lions can simulate that strategy in the discrete model.
Therefore, the result that three lions are sometimes necessary in the continuous model follows from the work of Bhadauria et al.

In Section~\ref{sec:threeLions}, we show our first main result; that three lions are always enough to catch the man in a compact region with a finite number of rectifiable lakes.
Bhadauria et al.~\cite{bhadauria2012capturing} showed that this is also the case in the discrete time model in a polygonal region with polygonal lakes.
Our result does not follow from theirs since we do not assume the region to be polygonal. Furthermore, more lions are sometimes needed in the continuous time model as compared to the discrete time model.
For instance, when a man is playing against one lion in a closed disk of radius $r>0$, the lion has a strategy to win after $O(r^2)$ turns in the discrete time model~\cite{chung2011search}, whereas the man can always win in the continuous time model (this is the original game introduced by Rado).

In Section~\ref{unitSpeedManSec}, we include a description of a construction similar to Bhadauria et al.~\cite{bhadauria2012capturing} which we found independently\footnote{The result by Bhadauria et al.~\cite{bhadauria2012capturing} came to our attention after the preliminary version of the present paper appeared~\cite{abrahamsen2017best}.}. This is included for the sake of completeness, although our sole contribution is that we give a more detailed description than the one given by Bhadauria et al.
In particular, we provide a complete drawing of a region where the man can win against two lions, see Figure~\ref{fig:region}.

Rado and Rado~\cite{rado1974more} and
Jankovi{\'c}~\cite{jankovic1978about}
consider the problem where there are many lions
and one man, but where the game is played in the entire unbounded plane.
They prove that the lions can catch the man
if and only if the man starts in the interior of the convex hull of
the lions. Inspired by that problem,
we ask the following question: What if the lions have maximum speed
$1$ and the man has maximum speed $1+\ee$ for some $\ee>0$? As our second main result, we prove that
for any $\ee >0$, and any finite number of lions, such a fast man can survive forever provided that
he does not start at the same point as one of the lions.
We explain a strategy 
in Section~\ref{fastManSec}.
In Section~\ref{infLionsSec}, we show how to extend the argument to obtain a winning strategy against any countably infinite set of lions.
We find this result somewhat surprising.
Indeed, it is difficult to imagine how that man proceeds if, say, he starts in the point $(\sqrt 2,0)$ and there are lions at all points with two rational coordinates.

Other variants of the game with a faster man have been studied previously.
Flynn~\cite{flynn1973lion,flynn1974lion} and
Lewin~\cite{lewin1986lion} study the problem where there is one lion and one fast man
in a circular arena. The lion tries to get as close to the man as possible and the
man tries to keep the distance as large as possible.
Variants of the cop and robber game where the robber is faster than the cops
have also been studied. See for instance~\cite{alon2015chasing,fomin2010pursuing}.

\subsection{Definitions}\label{definitions}

We follow the conventions of Bollob{\'a}s et al.~\cite{bollobas2012lion}.
Let $R\subseteq\RR^2$ be a region in the plane on which the lion and man
game is to be played, and assume that the lion starts at point $l_0$ and the man
at point $m_0$. We define a \emph{man path} as a function
$m\colon [0, \infty)\longrightarrow R$ satisfying $m(0)=m_0$ and
the Lipschitz condition
$\lVert m(s)-m(t)\rVert\leq V\cdot \lvert s-t\rvert$, where $V$ is the speed of the man.
In our case, we either have $V=1$ or, in the case of a fast man,
$V=1+\ee$ for some small
$\ee>0$. Note that it follows from the Lipschitz condition
that any man path is continuous.
A \emph{lion path} $l$ is defined similarly, but the lions we consider always run
with at most unit speed.
%nfitely
Let $\mathcal L$ be the set of all lion paths and $\mathcal M$ be the set of all
man paths. Then a \emph{strategy} for the man
is a function $M\colon \mathcal L\longrightarrow\mathcal M$ such that
if $l,l'\in\mathcal L$ agree on $[0,t]$, then $M(l)$ and $M(l')$ also agree on
$[0,t]$. This last condition is a formal way to describe that the man's position
$M(l)(t)$, when he follows strategy $M$, depends only on
the position of the lion at points in time
before and including time $t$, i.e., he is not allowed to act based on the lion's future
movements.
(By the continuity of any man path, the man's position at time $t$ is in fact
determined by the lion's position at all times strictly before time $t$.)
A strategy $M$ for the man is \emph{winning} if for any $l\in\mathcal L$ and
any $t\in[0,\infty)$, it holds that $M(l)(t)\neq l(t)$.
Similarly, a strategy for the lion $L\colon\mathcal M\longrightarrow\mathcal L$
is winning if for any $m\in\mathcal M$, it holds that $L(m)(t)=m(t)$
for some $t\in[0,\infty)$.
These definitions are extended to games with more than one lion in the natural way.

It might seem unfair that the lion is not allowed to react on the man's movements
when we evaluate whether a strategy $M$ for the man is winning. However,
we can give the lion full information about $M$ and
allow it to choose its path $l$ depending on $M$ \emph{prior}
to the start of the game. If $M$ is a winning strategy, the man can also
survive the lion running along $l$.

We call a man strategy
$M$ \emph{locally finite} if it satisfies the following property:
if $l$ and $l'$ are any two
lion paths that agree on $[0, t]$ for some $t$ then the corresponding man paths
$M(l)$ and $M(l')$ agree on $[0, t+\delta]$ for some
$\delta > 0$ (we allow that $\delta$ depends on $l\vert_{[0,t]}$).
Thus, informally, the man commits to doing something for some positive
amount of time dependent only on the situation so far.
Bollob{\'a}s et al.~\cite{bollobas2012lion} prove that if the man has a locally finite
winning strategy, then the lion does not have any winning strategy.
The argument easily extends to
games with multiple lions.
At first sight, it might sound absurd to even consider the possibility that the lion
has a winning strategy when the man also does. However, it does not follow
from the definition that the existence of a winning strategy for the man
implies that the lion does not also have a winning strategy.
See the paper by Bollob{\'a}s et al.~\cite{bollobas2012lion} for a detailed discussion
of this (including descriptions of natural variants of the lion and man game where
both players have winning strategies). In each of the problems we describe, the winning
strategy of the man is locally finite, so it follows that the lions do not
have winning strategies. In fact, the strategies we describe satisfy the much stronger
condition that they are \emph{equitemporal}, i.e., there is a $\Delta>0$ such that the man at any point in time $i\cdot\Delta$, for $i=0,1,\ldots$,
decides where he wants to run until time $(i+1)\cdot\Delta$.

\newcommand{\outerb}{^\ast}
\newcommand{\dd}{d}

\section{Sufficiency of Three Lions}\label{sec:threeLions}

Consider a region $R\subset\RR^2$ that is closed, bounded, connected, has a rectifiable boundary, and finitely many lakes.
A \emph{lake} of $R$ is a bounded connected component of $\RR^2\setminus R$.
We denote by $\partial R$ the boundary of $R$ and by $\partial R^*$ the \emph{exterior} boundary of $R$, i.e., the boundary of the unbounded connected component of $\RR^2\setminus R$.
Assume for simplicity that the exterior boundary and the boundary of each lake is a simple closed curve and that the boundaries of the lakes and the exterior boundary of $R$ are all pairwise disjoint.
We say that such a region $R$ is \emph{feasible}.

In this section, we show the following result.
\begin{theorem}\label{threeLionsWin}
Let $R$ be a feasible region.
Then three lions can always catch one man in $R$ within finite time.
\end{theorem}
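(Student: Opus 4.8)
The plan is to isolate a single \emph{guarding} primitive and then use it to confine the man to an ever-shrinking piece of $R$ until three lions can sweep it. Throughout I work with the \emph{geodesic metric} on $R$: for $p,q\in R$ let $\dist(p,q)$ denote the length of a shortest path in $R$ between them; such a path exists and is rectifiable because $R$ is compact, connected and has rectifiable boundary. The first step is to prove a \textbf{Guarding Lemma}: a single unit-speed lion can \emph{guard} any fixed shortest path $P$ in $R$, meaning that after a finite initialisation phase the lion occupies a nearest point $\pi(m)\in P$ to the man's position $m$ and maintains this forever. The place where the geodesic property of $P$ enters is the claim that the man's \emph{shadow} $\pi(m)$ moves along $P$ at speed at most that of the man; granting this, a unit-speed lion can first run to the shadow and thereafter track it exactly. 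Once $P$ is guarded the man can never touch $P$, since reaching a point $q\in P$ would force $\pi(m)=q$ and put him on top of the lion; by continuity he can therefore never cross $P$. Hence, whenever $P$ separates $R$ into two parts, guarding $P$ confines the man to the part he currently occupies.

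The second ingredient is an iterative confinement. I maintain the invariant that the man is trapped in a subregion $D\subseteq R$ bounded by arcs of $\partial R$ together with at most two guarded geodesics meeting at a common point $v$, and (for the moment) containing no lakes. Two lions hold the two geodesic sides, leaving a third lion free. To make progress, the free lion guards a shortest path $P'$ from $v$ to the point halving (in arc length) the free boundary arc $A\subseteq\partial R$ of $D$. Since $D$ is a topological disk, $P'$ stays inside $D$ and splits it into two disks, each again bounded by two geodesics emanating from $v$; the man is confined to one of them, whose free boundary arc has length at most $\tfrac12\lvert A\rvert$. Crucially, the geodesic bounding the half the man did \emph{not} enter is no longer on the boundary of his region, so that lion is freed and becomes the splitter in the next round. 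Thus three lions suffice at every step, while the free boundary length decays geometrically; after $O(\log(\lvert\partial R\rvert/\delta))$ rounds the man is trapped in a piece of diameter at most $\delta$, which three lions catch directly. As each round lasts only finite time and there are finitely many rounds, the total time is finite.

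It remains to remove the lakes so as to reach the simply connected situation above, and this I expect to be the main obstacle. The natural idea is to treat the number of lakes enclosed in the man's region as a primary potential, ranked lexicographically above the free-boundary length: to decrease it, the free lion guards a shortest path cutting one lake of $D$ to the rest of $\partial D$, after which the man is confined to a region with strictly fewer enclosed lakes and a previously used lion is freed. Two points are delicate. First, the bookkeeping must never require more than three lions \emph{simultaneously}, even though $R$ may contain many lakes; this forces us to argue that each reduction genuinely confines the man to a topologically simpler piece and thereby releases a lion. Second, the Guarding Lemma must be made fully rigorous in a \emph{merely rectifiable} (non-polygonal) region: one must establish existence of shortest paths, well-definedness of the projection $\pi$ despite possibly non-unique geodesics, the crucial speed bound on the shadow, and the fact that the lion's tracking curve lies in $R$ and has speed at most $1$.

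Granting these, termination follows from the lexicographic potential -- first the finitely many enclosed lakes, then the geometrically decaying rectifiable free-boundary length -- with each of the finitely many phases completed in finite time, which proves that three lions catch the man within finite time and establishes Theorem~\ref{threeLionsWin}.
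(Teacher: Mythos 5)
Your simply connected endgame (two guarded geodesics from a common apex, repeatedly halving the free boundary arc) is sound and is essentially the paper's ``peninsula'' argument. The proof breaks, however, exactly at the two points you flag as ``delicate'' and then defer: they are not technicalities but the actual content of the theorem. First, your Guarding Lemma is \emph{false} in a region that still contains lakes, and your lake-removal phase applies it in precisely that setting. With a lake present, the geodesic nearest-point projection onto a shortest path $P$ is not $1$-Lipschitz; it is not even continuous. Take $P$ to be a straight shortest path passing below a long, thin lake and put the man just above the middle of the lake: as he crosses the line of symmetry, his geodesic to $P$ flips from rounding the left end of the lake to rounding the right end, and the nearest point on $P$ jumps by roughly the length of the lake while the man moves arbitrarily little. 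No unit-speed lion can track that shadow. This is why the paper never uses nearest-point guarding while lakes remain: in Lemma~\ref{lem:fewer_lakes} the lion guards $P_2$ by the \emph{proportional} rule $d_p(\pi(t))/\|P_2\|=d_p(\man(t))/\|\Pi_{\man(t)}\|$, where $\Pi_{\man(t)}$ is a shortest path through the man's position that \emph{separates lakes}; it is this homotopy constraint that restores the Lipschitz bound on the shadow. Nearest-point guarding is introduced only after the man has been confined to a strip with no lakes.

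Second, your bookkeeping in the lake phase does not release a lion. A guarded path from a lake's boundary to $\partial D$ does not separate $D$: cutting an annulus along such an arc leaves a single connected region whose boundary contains \emph{both} previously guarded geodesics \emph{and} both sides of the new path, so after one cut all three lions are pinned and there is no free lion to attack the next lake. The paper avoids this by always cutting with a path $P_2$ that runs from one boundary arc $C_p$ of a strip to the other arc $C_q$, so that $P_2$ genuinely separates the strip into two strips; the man lies in one of them, the lion guarding the far side of that strip becomes idle, and a potential function decreases (each cutting move either eliminates a lake or turns an inner lake into a boundary lake, so at most $2L$ cutting moves suffice). These two devices---homotopy-constrained proportional guarding, and boundary-to-boundary cuts that actually separate---are what your sketch is missing, and neither follows from the ingredients you set up. (A smaller point: ``a piece of diameter at most $\delta$, which three lions catch directly'' also needs an argument---one lion cannot catch the man even in a disk; the paper finishes via a pseudo-triangle and Jankovi\'c's three-lion strategy, or via the coast of the peninsula collapsing to a point.)
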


In the following, we let $R$ denote the feasible region from Theorem~\ref{threeLionsWin}.
The idea behind the proof of the theorem is that the lions first use one strategy to restrict the man to a subset of $R$ with no lakes and then use another strategy to catch him.

\subsection{Restricting the man to a region with no lakes}

Let $P$ be a path in $R$.
We say that a lion $\lion$ \emph{guards} $P$ if $\lion$ moves in such a way that it will catch the man as soon as he steps on a point on $P$.

% and by $\partial\outerb R$
%the outer boundary of $R$, that is, the boundary
%$\partial (R\cup L_1\cup\ldots\cup L_k)$ of $R$ when the lakes
%$L_1,\ldots,L_k$ of $R$ have been filled out.

\begin{figure}
\centering
\includegraphics{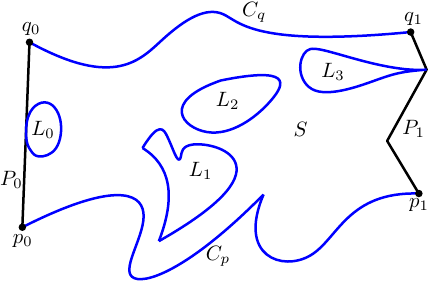}\hspace{0.1\textwidth}\includegraphics{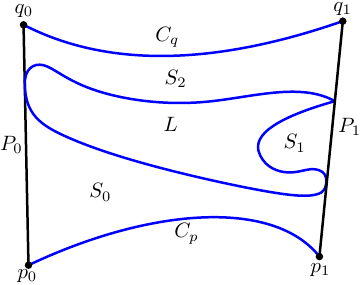}
\caption{Left: A strip $S$. $L_0$ is a boundary lake of $P_0$, $L_1$ and $L_2$ are inner lakes of $S$, and $L_3$ is a boundary lake of $P_1$.
In this strip, both $P_0$ and $P_1$ are shortest paths, so $S$ trivially satisfies the guarding condition.
Right: A strip $S=S_0\cup S_1\cup S_2$ which is not minimal, since the man will be restricted to either $S_0$, $S_1$, or $S_2$.
The strip $S_1$ is degenerate in the sense that one of the paths forming the exterior boundary is degenerate.}
\label{fig:stripNew}
\end{figure}

Define a \emph{strip} $S\mydef S(P_0,P_1,C_p,C_q)$ (see Figure~\ref{fig:stripNew} (left)) to be a closed subset of $R$ with an exterior boundary consisting of four curves $P_0,P_1,C_p,C_q$ such that
\begin{itemize}
\item
$S$ is a maximal subset of $R$ in the sense that $S$ contains every subset of $R$ with exterior boundary $P_0\cup P_1\cup C_p\cup C_q$,

\item
$P_0$ and $P_1$ are paths of finite length and have endpoints $p_0,q_0$ and $p_1,q_1$, respectively,

\item
$C_p,C_q$ are on the boundary $\partial R$ and have endpoints
$p_0,p_1$ and $q_0,q_1$, respectively,

\item
$P_0$ and $P_1$ separate $S$ from $R\setminus S$ in the sense that any path in $R$ from a point in $S$ to a point in $R\setminus S$ intersects $P_0$ or $P_1$, and

\item
the following condition, denoted as the \emph{guarding condition}, holds for a value $u\in\{0,1\}$:
$P_u$ is a shortest path from $p_u$ to $q_u$ among all paths in $S$, and
$P_{1-u}$ is a shortest path among all paths $P'$ in $S$ from $p_{1-u}$ to $q_{1-u}$ satisfying that the region enclosed by $P_u\cup P'\cup C_p\cup C_q$ contains all the lakes of $S$ touching $P_u$.
\end{itemize}
We allow any of $P_0,P_1,C_0,C_1$ to degenerate to a single point.
Note that a strip is a feasible region.
We orient $P_0$ and $P_1$ from $C_p$ to $C_q$ so that we have a notion of up and down.

An \emph{inner lake} of a strip $S$ is a lake of $S$ with a boundary disjoint from both $P_0$ and $P_1$.
A \emph{boundary lake} is a lake of $S$ that is not an inner lake, i.e., a lake with a boundary intersecting $P_0$ or $P_1$.
If $P_u$, $u\in\{0,1\}$, touches the boundary of a lake $L$, we say that $P_u$ \emph{has} the boundary lake $L$.
Note that since we assume that the boundaries of the lakes and the exterior boundary of $R$ are all pairwise disjoint, it follows that no lake has a boundary that only touches endpoints of $P_0,P_1$.
%A \emph{corner lake} $L$ of $P_u$, $u\in\{0,1\}$, is a boundary lake of $P_u$ with no interior point of $P_u$ on the boundary, i.e., $\partial L\cap P_u\subseteq\{p_u,q_u\}$.

Suppose that lions $\lion_0$ and $\lion_1$ guard $P_0$ and $P_1$, respectively, and that the man is contained in the strip $S$.
We say that $S$ is \emph{minimal} if no strip $S'$ which is a proper subset of $S$ exists such that $\partial S'\subset \partial S$ and $S'$ contains the man.
To put it in another way, $S$ is minimal if the man is not restricted to any strip $S'$ properly contained in $S$.
Note that a lake $L$ of a minimal strip $S$ cannot be a boundary lake of both $P_0$ and $P_1$, since otherwise, $L$ would restrict the man to be either in a smaller strip above or below $L$.

Consider a strip $S\mydef S(P_0,P_1,C_p,C_q)$ with one or more lakes.
Assume that the man is in $S$ and that two lions $\lion_0$ and $\lion_1$ guard $P_0$ and $P_1$, respectively, so that the man cannot escape from $S$.
We now describe how the three lions proceed to catch the man in $S$.
\begin{itemize}
\item $S$ is not minimal, i.e., there is a strip $S'$ which is a proper subset of $S$ such that $\partial S'\subset \partial S$ and $S'$ contains the man.
See Figure~\ref{fig:stripNew} (right).
In this case, the lions perform a \emph{shrinking move}, which is just a formal concept and does not require the lions to do anything actively.
Let the portions of $P_0$ and $P_1$ appearing on the boundary of $S'$ be $P'_0$ and $P'_1$, respectively (one of $P'_0,P'_1$ might be empty).
By definition, $\lion_0$ is guarding $P'_0$ and $\lion_1$ is guarding $P'_1$.
Therefore, the lions $\lion_0,\lion_1$ are already restricting the man to the strip $S'$ which is smaller than $S$.

\begin{figure}
\centering
\includegraphics{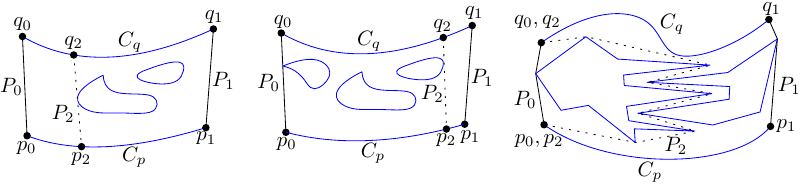}
\caption{A cutting move of each of the three types.}
\label{fig:cuttingMoves}
\end{figure}

\item $S$ is minimal.
In this case, the lions perform a \emph{cutting move}, to be defined in the following.
Orient $P_0$ from $p_0$ to $q_0$ and $P_1$ from $p_1$ to $q_1$ and assume that $S$ is to the right of $P_0$ and to the left of $P_1$.
The strategy of the third lion $\lion_2$ is to guard a path $P_2$ from $p_2\in C_p$ to $q_2\in C_q$ defined by cases as follows (see Figure~\ref{fig:cuttingMoves}).
\begin{enumerate}
\item\label{strat:1}
Neither $P_0$ nor $P_1$ has a boundary lake of $S$.
We choose $p_2$ and $q_2$ such that a shortest path $P_2$ in $S$ from $p_2$ to $q_2$ touches the boundary of a lake of $S$.

\item\label{strat:2}
For $u\in\{0,1\}$ the path $P_u$ has a boundary lake of $S$, but $P_{1-u}$ does not have any.
We choose $p_2$ and $q_2$ such that a shortest path $P_2$ in $S$ from $p_2$ to $q_2$ is on the same side of all lakes as $P_{1-u}$ and touches one or more lakes.

\item\label{strat:3}
$P_0$ and $P_1$ both have boundary lakes of $S$.
Recall that the boundary lakes are uniquely assigned to either $P_0$ or $P_1$.
In this case, we choose $P_2$ to be a shortest path in $S$ from $p_0$ to $q_0$ such that all boundary lakes of $P_0$ are to the left of $P_2$ and all boundary lakes of $P_1$ are to the right of $P_2$.
In other words, $P_2$ is a shortest path from $p_0$ to $q_0$ that separates the boundary lakes of $P_0$ from the ones of $P_1$ (or, in short, a shortest path from $p_0$ to $q_0$ that \emph{separates lakes}).
\end{enumerate}
\end{itemize}

We first verify the existence of the path $P_2$ as defined in each of the above cases of the cutting move.
\begin{enumerate}
\item[\ref{strat:1}.]
It follows from the guarding condition of $S$ that $P_0$ and $P_1$ are both shortest paths in $S$.
Consider a shortest path $\Pi(p,q)$ from $p\in C_q$ to $q\in C_q$ as we slide $p$ along $C_p$ from $p_0$ to $p_1$ and $q$ along $C_q$ from $q_0$ to $q_1$.
When $(p,q)=(p_0,q_0)$, all lakes of $S$ are to the right of $\Pi(p,q)$, whereas when $(p,q)=(p_1,q_1)$, all lakes are to the left of $\Pi(p,q)$.
Therefore, $\Pi(p,q)$ must at some point jump over a lake.
As this happens, $\Pi(p,q)$ must touch the boundary of a lake.

\item[\ref{strat:2}.]
As in the previous case, it follows from the guarding condition that $P_0$ and $P_1$ are both shortest paths in $S$.
Assume without loss of generality that $P_0$ has no boundary lakes.
We now perform a sliding argument similar to the one in case~\ref{strat:1}.

\item[\ref{strat:3}.]
The existence of a path $P_2$ as described is obvious in this case.
\end{enumerate}

\begin{lemma}\label{lem:fewer_lakes}
Consider a strip $S\mydef S(P_0,P_1,C_p,C_q)$ with one or more lakes.
Assume that $S$ contains the man, $S$ is minimal, and that two lions $\lion_0$ and $\lion_1$ guard $P_0$ and $P_1$, respectively, so that the man cannot escape from $S$.
Then the third lion $\lion_2$ can after finite time guard a path $P_2$ in $S$ from $p_2\in C_p$ to $q_2\in C_q$ as specified in the above strategy for the cutting move, and $P_2$ divides $S$ into two strips $S_l,S_r$ to the left and right side of $P_2$, respectively, each with fewer lakes than $S$ or with the same lakes where some that are inner lakes in $S$ have now become boundary lakes.
\end{lemma}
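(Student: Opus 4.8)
The plan is to prove the two assertions of the lemma separately: first, that $\lion_2$ can, in finite time, reach a position from which it guards $P_2$; and second, that the curve $P_2$ cuts $S$ into two sub-strips $S_l,S_r$ that are again feasible strips satisfying the guarding condition and that make progress in the lake count. The existence of $P_2$ in each of the three cases has already been established above, so I may take $P_2$ as given, recalling only that it is a shortest path in $S$ (in case~3, a shortest path constrained to separate the two families of boundary lakes).

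For the guarding, I would use the same mechanism that already guards $P_0$ and $P_1$: a unit-speed lion can guard any shortest path between two points of $\partial R$. Concretely, $\lion_2$ first walks to the endpoint $p_2\in C_p$, which takes finite time because $R$ is compact and connected, and then slides along $P_2$ so as to occupy the point of $P_2$ closest to the man in geodesic distance inside $S$ -- the man's \emph{shadow} on $P_2$. The two facts that make this work are that, by the standard property that projection onto a geodesic is distance-non-increasing, the arc-length coordinate of the shadow moves at speed at most $1$, and that the shadow is confined to the finite-length curve $P_2$; hence the unit-speed lion catches up with the shadow after finite time and can track it forever, while $\lion_0$ and $\lion_1$ keep the man inside $S$ in the meantime. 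Once the lion sits on the shadow, any step of the man onto $P_2$ forces shadow and man to coincide, so $\lion_2$ is guarding $P_2$. The constrained shortest path of case~3 is handled by the same technique applied in the appropriate modified region, exactly as for the constrained bounding path in the guarding condition of $S$.

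Once $P_2$ is guarded, its endpoints lie on $C_p$ and $C_q$, so $P_2$ splits $S$ into the region $S_l$ bounded by $P_0$ and $P_2$ and the region $S_r$ bounded by $P_2$ and $P_1$; the man, lying on one side of $P_2$, is henceforth confined to that side. I would then verify that $S_l$ and $S_r$ are strips. The maximality and separation clauses, the finiteness of the bounding paths, and the requirement that the new side-boundaries are sub-arcs of $C_p,C_q\subseteq\partial R$ are all immediate. The one delicate point -- and the step I expect to be the main obstacle -- is re-establishing the \emph{guarding condition} for $S_l$ and $S_r$. Here the key principle is that a shortest path of $S$ that happens to lie in a sub-region is still a shortest path of that sub-region, since every competing path in the sub-region is also a path in $S$; this transfers the ``shortest path'' role of $P_2$ (and of whichever of $P_0,P_1$ is the unconstrained shortest path) to the sub-strips for free. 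For the constrained bounding path one must show that the constraint defining $P_2$ -- touching, respectively separating, the relevant lakes -- translates precisely into the ``encloses all lakes touching $P_u$'' clause of the guarding condition for the sub-strip, using that the lakes cut off to the far side of $P_2$ are simply absent from that sub-strip. Checking that this translation is exact in each of the three cases is where the real care is needed.

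Finally I would confirm the progress in the lake count by cases. In case~1 all lakes of $S$ are inner lakes and $P_2$ touches some lake $L$; each lake of $S$ lies on one side of $P_2$ or is touched by it, so the lakes are distributed between $S_l$ and $S_r$, and $L$ becomes a boundary lake of $P_2$ on its side, so each sub-strip has strictly fewer lakes, or the same lakes with an erstwhile inner lake now a boundary lake. In case~2, say $P_1$ has no boundary lakes; since $P_2$ lies on the same side of every lake as $P_1$, the sub-strip between $P_2$ and $P_1$ contains no lakes at all (strictly fewer), while the other sub-strip keeps all lakes but has some formerly inner lakes now touching $P_2$. In case~3, $P_2$ separates the boundary lakes of $P_0$ from those of $P_1$, so each sub-strip loses at least the opposite family's boundary lakes and hence has strictly fewer lakes than $S$. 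In every case both sub-strips satisfy the stated dichotomy, completing the argument.
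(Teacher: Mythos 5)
The central step of your proof---guarding $P_2$ by having $\lion_2$ occupy the point of $P_2$ geodesically closest to the man (the ``shadow'')---fails precisely in the setting of this lemma, because $S$ has lakes. Geodesic projection onto a shortest path is $1$-Lipschitz (indeed, even continuous) only in a simply connected region. With a lake between the man and $P_2$, the closest point of $P_2$ jumps discontinuously as the man crosses the locus of points whose shortest routes around the two sides of the lake have equal length: on one side of that locus the projection lies far to the left on $P_2$, on the other side far to the right. A unit-speed lion cannot track such a jumping point, so the proposed strategy never achieves guarding; this is exactly why the paper introduces closest-point guarding only later, explicitly as an \emph{alternative} method valid for strips \emph{without} lakes. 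The paper's strategy for the present lemma is different: for every $r\in S$ it fixes a shortest \emph{lake-separating} path $\Pi_r$ from $p_2$ to $q_2$ through $r$, sets $d_p(r)=\|\Pi_r[p_2,r]\|$, and places $\lion_2$ at the point $\pi(t)$ of $P_2$ determined proportionally by $d_p(\pi(t))/\|P_2\|=d_p(\man(t))/\|\Pi_{\man(t)}\|$. The needed bound $|d_p(\man(t_0))-d_p(\man(t_1))|\leq|t_0-t_1|$ is then proved by a concatenation argument: the path $\Pi_{\man(t_0)}[p_2,\man(t_0)]\cup\man([t_0,t_1])\cup\Pi_{\man(t_1)}[\man(t_1),q_2]$ still separates lakes because the man cannot cross $P_0$ or $P_1$ without being caught. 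Note that the guarding by $\lion_0,\lion_1$ thus enters the \emph{speed analysis} itself, not merely as a side condition keeping the man in $S$ ``in the meantime,'' as your write-up suggests.

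The remainder of your proposal is structurally aligned with the paper: the split into $S_l,S_r$, the observation that shortest paths of $S$ lying in a sub-region remain shortest there, and the lake-count case analysis (cases~1 and~2 give a lakeless sub-strip or promote inner lakes to boundary lakes; case~3 strictly decreases the count on both sides) all match. However, the verification of the guarding condition for $S_l$ and $S_r$ is only gestured at in your text (``where the real care is needed''), whereas the paper carries it out explicitly in both cases $u=0$ and $u=1$, using that $P_2$ separates lakes to conclude that $P_2$ is shortest among paths in $S_l$ to the right of the lakes touching $P_0$, and that $P_0$ is shortest in $S_l$ (unconstrained or constrained, according to $u$). That omission is secondary; the broken guarding mechanism is the genuine gap.
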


\begin{proof}
We first prove that $\lion_2$ can after finite time guard the path $P_2$ specified by the strategy.
Let the endpoints of $P_2$ be $p_2\in C_p$ and $q_2\in C_q$.
Note that in each of the three cases of the strategy, it holds that $P_2$ is a shortest path from $p_2$ to $q_2$ that separates lakes (even though there might be no lakes touching $P_0$ or $P_1$).

For any point $r\in S$, let $\Pi_r$ be a shortest path in $S$ from $p_2$ to $q_2$ passing through $r$ that separates lakes.
Let $d_p(r)\mydef \|\Pi_r[p_2,r]\|$, where $\Pi_r[p_2,r]$ is the portion of $\Pi_r$ from $p_2$ to $r$ and $\|\Pi\|$, for a path $\Pi$, is the length of $\Pi$.

The strategy of the lion $\lion_2$ is to stand, at any time $t$, at the point $\pi(t)$ on $P_2$ such that
$$\frac{d_p(\pi(t))}{\|P_2\|}=\frac{d_p(m(t))}{\|\Pi_{\man(t)}\|}.$$
We first verify that the point $\pi(t)$ moves with at most unit speed.
To this end, let two points in time $t_0,t_1$ be given.
We have the following bounds.
\begin{align*}
\|\pi(t_0)-\pi(t_1)\|
&\leq\|P_2[\pi(t_0),\pi(t_1)]\|\\
&=|d_p(\pi(t_0))-d_p(\pi(t_1))|\\
&=\|P_2\|\cdot \left|\frac{d_p(\man(t_0))}{\|\Pi_{\man(t_0)}\|}-\frac{d_p(\man(t_1))}{\|\Pi_{\man(t_1)}\|}\right|\\
&\leq \|P_2\|\cdot \frac{|d_p(\man(t_0))-d_p(\man(t_1))|}{\min\{\|\Pi_{\man(t_0)}\|,\|\Pi_{\man(t_1)}\|\}}\\
&\leq |d_p(\man(t_0))-d_p(\man(t_1))|.
\end{align*}

To finish the argument that $\pi(t)$ moves with at most unit speed, we need to verify that $|d_p(\man(t_0))-d_p(\man(t_1))|\leq |t_0-t_1|$.
To this end, note that the path
$$\Pi_{\man(t_0)}[p_2,\man(t_0)]\cup \man([t_0,t_1])\cup\Pi_{\man(t_1)}[\man(t_1),q_2]$$
must separate lakes since otherwise, the path $\man([t_0,t_1])$ crosses $P_0$ or $P_1$ and hence the man would have been caught by $\lion_0$ or $\lion_1$ by assumption.
Therefore,
$$d_p(\man(t_1))\leq d_p(\man(t_0))+\|\man([t_0,t_1])\|\leq d_p(\man(t_0))+|t_0-t_1|.$$
Likewise, the path
$$\Pi_{\man(t_1)}[p_2,\man(t_1)]\cup \man([t_1,t_0])\cup\Pi_{\man(t_0)}[\man(t_0),q_2]$$
separates lakes, and it follows that
$$d_p(\man(t_0))\leq d_p(\man(t_1))+\|\man([t_0,t_1])\|\leq d_p(\man(t_1))+|t_0-t_1|.$$
From these two bounds, the desired inequality follows.

Hence, the point $\pi(t)$ moves with at most unit speed.
In order to get to the point $\pi(t)$ in the first place, the lion $\lion_2$ first runs to $p_2$ and then follows $P_2$ to $q_2$.
It must eventually get to the point $\pi(t)$ and then starts following that point.
To see that the lion guards $P_2$, note that if at some point in time $t$ the man steps on a point in $P_2$, we have $\pi(t)=m(t)$, so the lion catches the man.

It remains to verify that the path $P_2$ separates $S$ into two strips $S_l,S_r$ as stated.
Let $S_l$ be the part of $S$ on or to the left of the path $P_2$ and $S_r$ the part on or to the right of $P_2$.
We claim that $S_l$ and $S_r$ are strips.
Due to symmetry, it suffices to verify that $S_l$ is a strip.
The exterior boundary of $S_l$ is $P_0\cup P_1\cup C_p[p_0,p_2]\cup C_q[q_0,q_2]$.
Recall that $S$ satisfies the guarding condition for either $u=0$ or $u=1$.
We need to verify that in each case, $S_l$ also satisfies the guarding condition.
\begin{itemize}
\item $u=0$. In this case, $P_0$ is a shortest path in $S$ and thus also in $S_l$.
Since $P_2$ is a shortest path in $S$ that separates lakes, it follows that $P_2$ is a shortest path in $S_l$ among all paths in $S_l$ to the right of the lakes touching $P_0$.
Hence, $S_l$ satisfies the guarding condition and it follows that $S_l$ is a strip.

\item $u=1$. In this case, $P_0$ is shortest among all paths in $S$ to the left of the lakes touching $P_1$.
Since $P_2$ separates lakes in $S$, it follows that $P_0$ is a shortest path in $S_l$.
As in the other case, $P_2$ is a shortest path in $S_l$ among all paths in $S_l$ to the right of the lakes touching $P_0$.
Hence, $S_l$ satisfies the guarding condition, so $S_l$ is a strip.
\end{itemize}

In the cases~\ref{strat:1} and~\ref{strat:2} of the strategy for the cutting move, the strip $S_l$ either has no lakes or the same lakes as $S$ where one or more inner lakes of $S$ are boundary lakes of $S_l$.
In the case~\ref{strat:3}, each strip $S_l,S_r$ has fewer lakes than $S$.
This finishes the proof.
\end{proof}

From the very beginning, where the man is only restricted to the entire region $R$, the lions proceed as follows.
Initially, the lions $\lion_0,\lion_1$ run to two arbitrary points on the exterior boundary of $R$.
Then each lion guards a degenerate path of the strip $R$.
We now show inductively how the three lions collaborate in order to restrict the man to a strip containing no lakes.
Suppose that $\lion_0,\lion_1$ guard paths $P_0,P_1$ on the exterior boundary of a minimal strip $S$ containing the man as described in Lemma~\ref{lem:fewer_lakes}.
The lion $\lion_2$ applies Lemma~\ref{lem:fewer_lakes} and guards a path $P_2$ that separates $\lion_0$ and $\lion_1$.
The man is either restricted to $S_l$ or $S_r$.
If the strip is not minimal, the lions make a shrinking move and guard the minimal strip to which the man is restricted.
When the lions guard a minimal strip, there is an idle lion that can again apply Lemma~\ref{lem:fewer_lakes}.

% Each time the lions perform a shrinking move, the lions restrict the man to a strip with fewer lakes. % Mikkel: this is not true.
It takes no time to perform the shrinking moves.
In each cutting move, a lake is either completely eliminated, or an inner lake is turned into a boundary lake.
Thus, after $2L$ cutting moves, where $L$ is the number of lakes of $R$, the man is restricted to a strip $S\mydef S(P_0,P_1,C_p,C_q)$ with no lakes.

\subsection{Catching the man in a strip with no lakes}

Suppose that the man is restricted to a strip $S\mydef S(P_1,P_2,C_p,C_q)$ with no lakes where $\lion_0$ guards $P_0$ and $\lion_1$ guards $P_1$ and that $P_0$ has endpoints $p_0\in C_p$ and $q_0\in C_q$, and $P_1$ has endpoints $p_1\in C_p$ and $q_1\in C_q$.
Recall that $\lion_0$ guards $P_0$ by keeping on the point $\pi_0(t)$ as defined in Lemma~\ref{lem:fewer_lakes}.
We now introduce an alternative way of guarding a path in a region without lakes, which we shall make use of in the rest of this section.
Define $\pi'_0(t)$ to be the point on $P_0$ with the smallest geodesic distance in $S$ to $\man(t)$.
Since there are no lakes in $S$ and $P_0$ is a shortest path, it is easy to see that $\pi'_0(t)$ moves with at most unit speed.
Now, $\lion_2$ moves to $\pi'_0(t)$ by following $P_0$ from $p_0$ and then guards $P_0$ by staying on $\pi'_0(t)$.
Thus $\lion_2$ starts guarding $P_0$ and this makes $\lion_0$ idle.
Let $\pi'_1(t)$ be the point on $P_1$ with the smallest geodesic distance in $S$ to $\man(t)$.
$\lion_0$ now runs to $\pi'_1(t)$ along $P_1$ from $p_1$ and thus makes $\lion_1$ idle.
In the following, when we say that a lion guards a shortest path, we mean that it stays on the point on the path closest to the man.

The lion $\lion_1$ is idle and can thus start guarding the shortest path in $S$ from $p_0$ to $q_1$, see Figure~\ref{fig:peninsula} (left).
The man is now restricted a \emph{peninsula} $Q$, which is a subset of $R$ with no lakes such that the boundary of $Q$ consists of two paths which are shortest paths in $Q$ and one portion $C$ of the boundary of $R$, denoted as the \emph{coast} of $Q$.
Define a \emph{pseudo-triangle} $T$ to be a subset of $R$ with no lakes such that the boundary of $T$ consists of three paths which are shortest paths in $T$, see Figure~\ref{fig:peninsula} (middle).
It follows that the paths are concave with respect to $T$.

\begin{figure}
\centering
\includegraphics{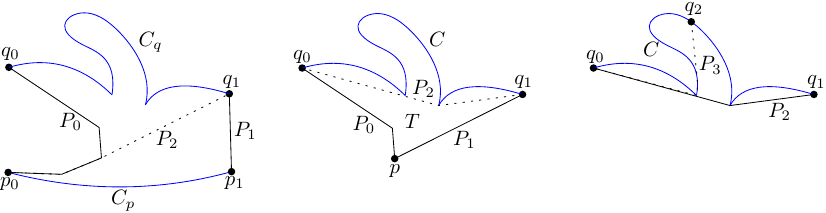}
\caption{The lions' strategy to catch the man in a strip without lakes.
In the middle figure, the region $T$ bounded by $P_0\cup P_1\cup P_2$ is a pseudo-triangle.}
\label{fig:peninsula}
\end{figure}

\begin{lemma}
Suppose that the man is restricted to a peninsula $Q$.
Let the boundary of $Q$ consist of two paths $P_0,P_1$ from the common endpoint $p$ to the points $q_0$ and $q_1$, respectively, and the coast $C$ from $q_0$ to $q_1$.
Suppose that two lions guard $P_0$ and $P_1$.
After $O(\|C\|)$ time, the three lions can restrict the man to either a pseudo-triangle or a peninsula $Q'\subset Q$ such that the coast of $Q'$ is half as long as the coast of $Q$.
\end{lemma}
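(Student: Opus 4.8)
The plan is to cut the peninsula into two halves with a single new guarded path emanating from the apex $p$, so that whichever half the man is caught in has a coast of half the length. Let $c$ be the point that bisects the coast $C$ by arc length, and let $P_2$ be the shortest path in $Q$ from $p$ to $c$; this path is unique, since a peninsula has no lakes and is therefore simply connected. The idle third lion will guard $P_2$. Since $p$ is the common endpoint of the two already-guarded shortest paths $P_0$ and $P_1$, and $P_2$ is itself a shortest path from $p$, the same closest-point guarding mechanism introduced earlier in this section for $P_0$ applies: the lion tracks the point $\pi'_2(t)$ of $P_2$ nearest to $\man(t)$, which moves with at most unit speed because $P_2$ is a shortest path in a lake-free region, and it establishes the guard by entering $P_2$ at an endpoint and following it until it reaches $\pi'_2(t)$.

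Once the third lion guards $P_2$, this path separates $Q$ into the two regions $Q_0,Q_1$ on its two sides, where $Q_0$ is bounded by $P_0$, the coast piece $C[q_0,c]$, and $P_2$, and $Q_1$ is bounded by $P_1$, the coast piece $C[c,q_1]$, and $P_2$. As the man can cross none of $P_0,P_1,P_2$ without being caught, he is confined to exactly one of them, say $Q_0$. I claim $Q_0$ is of the promised type. Its boundary consists of the coast piece $C[q_0,c]$, a portion of $\partial R$ of length $\|C\|/2$, together with $P_0$ and $P_2$; and since a subpath of a shortest path that stays inside a subregion is still shortest there, both $P_0$ and $P_2$ are shortest paths in $Q_0$. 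Hence $Q_0$ is a peninsula whose coast is half as long as $C$ — except in the degenerate case where $C[q_0,c]$ is itself a shortest path in $Q_0$, in which case all three bounding paths are shortest paths and $Q_0$ is a pseudo-triangle. In either case the lion on $P_0$ and the lion now on $P_2$ guard the man in the new region, while the lion formerly on $P_1$ becomes idle and is ready for the next step.

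The remaining, and most delicate, point is the time bound $O(\|C\|)$. Traversing $P_2$ outright is not affordable, since $\|P_2\|$ may far exceed $\|C\|$ for a long, thin peninsula, and the side paths do not shrink as the coast is repeatedly halved. The key observation I would use is that all genuinely new guarding activity stays within geodesic distance $O(\|C\|)$ of the coast: when the man is near the apex, the nearest points $\pi'_0,\pi'_1,\pi'_2$ all lie near the apex and the split can be arranged there, whereas the only part of $P_2$ that is not shared with the existing guarded structure is its near-coast portion, where the relevant endpoints $q_0,c,q_1$ are pairwise within geodesic distance $\|C\|$. Making this precise — bounding the time for the idle lion to reach $\pi'_2(t)$ and establish the $P_2$-guard without the man slipping across during the transition — is the main obstacle. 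I would handle it by charging the lion's motion to distances along the coast and along the near-coast portions of $P_0,P_1,P_2$, all of which are $O(\|C\|)$, and by invoking the closest-point guarding argument already established to rule out an escape across $P_2$ while the guard is being set up.
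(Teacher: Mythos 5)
Your approach has a genuine gap exactly where you flag it, and the gap is not fixable by the charging argument you sketch. Your cut $P_2$ runs from the apex $p$ to the coast midpoint $c$, and to guard it by the closest-point rule the third lion must actually reach $\pi'_2(t)$, the point of $P_2$ nearest to the man. If the man stands midway along a long, thin peninsula, then $\pi'_2(t)$ lies at geodesic distance $\Theta(\|P_2\|)$ from \emph{both} endpoints of $P_2$, and $\|P_2\|$ is comparable to $\|P_0\|$ and $\|P_1\|$, which are completely unrelated to $\|C\|$. Moreover, apart from the single point $p$, no part of $P_2$ is shared with the already-guarded structure, so there is nothing against which to charge the lion's travel: establishing the guard costs $\Theta(\|P_2\|)$ in the worst case, not $O(\|C\|)$. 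This is fatal not only locally but for the whole capture argument. The halving lemma must be invoked infinitely often (finitely many halvings never reduce the coast to a point), and the entire point of the $O(\|C\|)$ bound is that the successive costs $\|C\|, \|C\|/2, \|C\|/4, \ldots$ form a geometric series, so the coast collapses to a point in finite total time. In your recursion the long side paths (the old $P_0$ and your new $P_2$) remain part of every subsequent peninsula, so the per-step cost stays $\Theta(\|P_2\|)$ while the coast halves; the total time diverges and the man is never caught.

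The paper's cut is chosen differently precisely to avoid this: it never guards a path to the apex. First the idle lion guards the shortest path $P_2$ from $q_0$ to $q_1$ (coast endpoint to coast endpoint); being a shortest path, it has length at most $\|C\|$, so this guard is established in $O(\|C\|)$ time. This splits $Q$ into the pseudo-triangle bounded by $P_0\cup P_1\cup P_2$ and a region $Q_0$ bounded by $P_2$ and $C$. If the man is in the pseudo-triangle, we are done. Otherwise the man is sealed off from $P_0$ and $P_1$ by $P_2$, so \emph{both} of the original lions become idle; one of them then guards the shortest path from $q_0$ to the coast midpoint $q_2$, which has length at most $\|C\|/2$, splitting $Q_0$ into two peninsulas each with coast of length $\|C\|/2$. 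Every newly guarded path has both endpoints on the coast and length $O(\|C\|)$, and the lions that must establish these guards are themselves within distance $O(\|C\|)$ of the coast, which is what makes the stated time bound, and hence the geometric-series argument for finite capture time, go through.
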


\begin{proof}
Suppose that $\lion_0$ guards $P_0$ and $\lion_1$ guards $P_1$.
Now, $\lion_2$ starts guarding the shortest path $P_2$ from $q_0$ to $q_1$, see Figure~\ref{fig:peninsula} (middle).
Since $P_2$ is at most as long as $C$, it takes $O(\|C\|)$ time for $\lion_2$ to start guarding $P_2$.
Note that $P_2$ divides $Q$ into a pseudo-triangle and a degenerated peninsula $Q_0$ with the same coast as $Q$.
If the man is contained in the pseudo-triangle, we are done.
Otherwise, if the man is in $Q_0$, the lions $\lion_0$ and $\lion_1$ have both become idle.
Let $q_2$ be the middle point on $C$, i.e., the point such that $\|C[q_0,q_2]\|=\|C[q_2,q_1]\|=\|C\|/2$.
The lion $\lion_0$ now guards the shortest path $P_3$ from $q_0$ to $q_2$, see Figure~\ref{fig:peninsula} (right).
\textbf{}This takes time $O(\|C\|)$.
Note that $P_3$ divides $Q_0$ into two peninsulas, each with a coast of half the length of $C$, so the statement follows.
\end{proof}

Note that if the man keeps being in a peninsula, the coast of the peninsula converges to a single point after $O(\|C\|)$ time, where $C$ is the coast of the first peninsula.
In this case, it thus follows that the lions catch the man in finite time.
It remains to describe what happens if the man is restricted to a pseudo-triangle.

\begin{figure}
\centering
\includegraphics{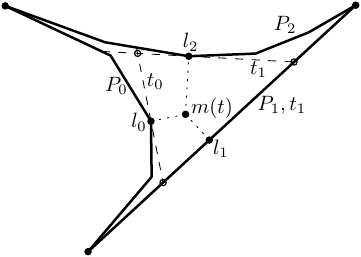}
\caption{The lions' strategy to catch the man in a pseudo-triangle $T$.
The corners of the triangle $T'\subset T$ are drawn as circles.}
\label{fig:pseudotriangle}
\end{figure}

\begin{lemma}
If the man is restricted to a pseudo-triangle $T$ with a boundary consisting of three shortest paths $P_0,P_1,P_2$ guarded by the three lions $\lion_0,\lion_1,\lion_2$, respectively, then they can catch the man inside $T$ after finite time.
\end{lemma}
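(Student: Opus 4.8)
The plan is to catch the man by means of a single \emph{sliding cut} that pivots at one corner of the pseudo-triangle and sweeps across it, squeezing the man's reachable region down to a point. This move plays the same role here that the cut $P_2$ played in the peninsula lemma, but rotating about a vertex rather than translating: at every instant it reuses the three lions $\lion_0,\lion_1,\lion_2$ to enclose the man, and by advancing the cut we confine him to an ever-smaller pseudo-triangle (the inner triangle $T'$ of Figure~\ref{fig:pseudotriangle}), until the enclosing region degenerates and he is caught. Concretely, write the three corners of $T$ as $a=P_0\cap P_2$, $b=P_0\cap P_1$, $c=P_1\cap P_2$, and pivot at $b$. For a point $s$ on the opposite side $P_2$, let $\gamma_s$ be the (unique, since $T$ has no lakes) shortest path in $T$ from $b$ to $s$. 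Because $P_0$ and $P_1$ are themselves shortest paths, $\gamma_a=P_0$ and $\gamma_c=P_1$, so as $s$ slides from $a$ to $c$ the geodesic $\gamma_s$ sweeps continuously from $P_0$ to $P_1$ across all of $T$.

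During the sweep, $\lion_1$ keeps guarding $P_1$ and $\lion_2$ keeps guarding $P_2$ (it stays on the shrinking sub-arc $P_2[s,c]$), while $\lion_0$ leaves $P_0$ and guards the moving geodesic $\gamma_s$ by standing on the point of $\gamma_s$ of smallest geodesic distance to $\man(t)$, exactly the closest-point guarding introduced just before this lemma. The key point is that no gap ever opens: at the start $\gamma_a=P_0$ is already guarded by $\lion_0$, and at every later instant the man's region, namely the part of $T$ enclosed by $\gamma_s$, $P_2[s,c]$ and $P_1$, is bounded entirely by guarded geodesics. Hence the man can never cross $\gamma_s$ (that would mean stepping onto the guarded point) nor escape across $P_1$ or $P_2$, so he stays trapped in this region. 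If we advance the sweep only partway, to some $s=d$, the trapped region is a strictly smaller pseudo-triangle bounded by the three guarded shortest paths $\gamma_d$, $P_2[d,c]$, $P_1$, and we may recurse inside it; if we drive $s$ all the way to $c$, the region collapses onto $P_1$ and the man is caught.

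The main obstacle is to show that the guard on $\gamma_s$ can be made to move at unit speed while the sweep still advances to completion in finite time: a man running at full speed along the current geodesic direction could in principle absorb the guard's entire speed budget and stall the sweep. I expect to resolve this exactly as in Lemma~\ref{lem:fewer_lakes}, by placing the guard not at an absolute position but at the point of $\gamma_s$ whose fractional geodesic distance from $b$ matches the man's fractional position along his own separating geodesic; the same telescoping estimate as in that lemma should then bound the guard's displacement by the man's, giving unit speed, while leaving room to increase $s$ whenever the man is not pressed against $\gamma_s$. What remains is more routine but must be handled with care: choosing a size measure of the pseudo-triangle (for instance its area or its longest side) and proving it drops by a constant factor per stage so that the total sweeping time is a convergent geometric series and the nested regions shrink to a single point; verifying the symmetric possibility that the man is squeezed toward $P_0$ rather than $P_1$; and treating as base cases the degenerate pseudo-triangles in which one of the three shortest paths has collapsed to a point.
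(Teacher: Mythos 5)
Your entire lemma is hiding in what you call the ``main obstacle,'' and the resolution you sketch for it does not work. The guarding technique of Lemma~\ref{lem:fewer_lakes} (whether in its proportional form or in the closest-point form introduced before this lemma) applies to a \emph{static} path: it shows that the guarding point is $1$-Lipschitz in the man's motion, i.e.\ the guard may need the lion's \emph{full} unit speed just to keep guarding, and it yields no surplus speed whatsoever for moving the path itself. In the paper's strategy, progress is never made by moving a guarded path; it is always made by an \emph{idle} lion establishing a new, fixed guard, after which some other lion becomes idle. In your sweep, by contrast, $\lion_0$ alone must simultaneously track the man along $\gamma_s$ and rotate $\gamma_s$ about $b$. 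The man defeats this: he hugs $\gamma_s$ at an arbitrarily small distance and runs back and forth parallel to it at unit speed (a Lipschitz path may reverse direction instantaneously). To keep the matching point occupied, the guard then needs speed essentially $1$ along the path; if in addition the path rotates with angular speed $\omega$, a guard at distance $d$ from the pivot needs speed about $\sqrt{1+\omega^2d^2}>1$ unless $\omega=0$. So the sweep is forced to stall, forever. This is precisely the phenomenon that makes the naive radius-sweeping strategy fail for one lion in Rado's circular arena: a single guard on a moving curve can always be stalled by a man moving tangentially to it. Consequently neither your constant-factor-decay claim nor termination of the recursion can be rescued in this form.

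The paper's proof avoids the problem by never sweeping a curved path at all. Since each $P_i$ is a shortest path, it is concave with respect to $T$, so the lion $\lion_i$, standing at the point of $P_i$ closest to the man, can switch to guarding the straight tangent line $t_i$ to $P_i$ at that point, perpendicular to $\lion_i(t)\man(t)$; concavity puts the man on the inner side of all three tangents, confining him to an honest straight-sided triangle $T'\subseteq T$. Then the lions run Jankovi\'c's strategy~\cite{jankovic1978about}: each guards a line parallel to its original one by staying at the man's orthogonal projection, spending any leftover speed on translating its line toward the man. The ingredient your approach is missing is exactly where this leftover speed comes from: for a straight line, the projection moves at speed $\lvert\cos\alpha\rvert$, where $\alpha$ is the angle between the man's velocity and the line, and since the three sides of a triangle containing the man point in three different directions, the man cannot run nearly parallel to all three at once. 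Hence at every instant at least one of the three lions has a speed surplus bounded below by a positive constant, and the guarded triangle shrinks to a point in finite time. It is this pigeonhole over three \emph{simultaneously} advancing straight guards that replaces your single sweeping guard, and no single-guard sweep can substitute for it.
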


\begin{proof}
See Figure~\ref{fig:pseudotriangle}.
Each of the paths $P_0,P_1,P_2$ is concave with respect to $T$, and each lion $\lion_i$ is on the point of $P_i$ closest to the man.
Suppose that the paths are oriented following the counterclockwise traversal of the boundary of $T$.
For each $i\in\{0,1,2\}$, consider the tangent $t_i$ to $P_i$ through the lion $\lion_i$ perpendicular to the line segment $\lion_i(t)\man(t)$.
Now, instead of guarding $P_i$, the lion $\lion_i$ starts guarding the part of $t_i$ contained in $T$.
It follows that the man is to the left of $t_i$ for each $i\in\{0,1,2\}$.
Thus, the man is now restricted to a triangle $T'\subseteq T$ in the usual sense.
The three lions now apply the strategy described by Jankovi{\'c}~\cite{jankovic1978about} in order to catch the man.
In short, each lion keeps guarding a line segment parallel to the segment it was guarding in $T'$ by staying at the projection of the man on the segment.
The lion uses its extra speed (if any) to approach the man, i.e., moving the segment that it guards towards the man, thus shrinking the triangle that the lions together guard.
There is always one lion that can approach the man with a speed which is at least some positive constant.
Therefore, the lions catch the man after finite time.
\end{proof}

This finishes the proof of Theorem~\ref{threeLionsWin}.
We conclude the section by mentioning that the lions' strategy can be adapted to a locally finite strategy by a similar technique as used by Bollob{\'a}s et al.~in the appendix of~\cite{bollobas2012lion}.
They study the \emph{porter and student game}, which is played in the square $[-1,1]^2$.
The student starts at $(0,0)$, whereas the four porters start at points $(0,\pm 1)$ and $(\pm 1,0)$.
The porters are restricted to the boundary of the square.
The student wins if she can reach the boundary of the square without being caught as she reaches it.
The obvious strategy for the porters is that each porter stays on the side of the square where he starts and keeps the other coordinate the same as the student's.
That strategy is not locally finite, but Bollob{\'a}s et al.~show how it can be discretized so that it becomes locally finite.
Hence, the student does not have a winning strategy.
A similar discretization can be used by the three lions, and it follows that the man cannot win against three lions.

\section{Necessity of Three Lions}\label{unitSpeedManSec}
In this section, we present a polygonal region $R$ in the plane with $11$ lakes, where the man may successfully escape two lions. Our construction similar to Bhadauria et al.~\cite{bhadauria2012capturing}, and is included for the sake of completeness. 
See Figure~\ref{fig:region} for an illustration of such a region.
The region in Figure~\ref{fig:region} is made with the same techniques as the region $R$ described in this section, but some parameters were tweaked in order to make the details more easily visible when shown within a single page.

The exterior and interior boundaries of $R$ are all pairwise disjoint simple polygons, and a man can survive forever in $R$ against two lions provided that the lions are initially at a sufficient distance.

Consider a planar embedding $\mathcal D$
of the dodecahedron where each edge is
a polygonal curve.
We can obtain that all edges have the same length by prolonging some edges using
a zig-zag pattern.
This embedding
corresponds to an area with $11$ lakes and infinitely thin paths between the lakes,
and as observed by Berarducci and Intrigila~\cite{berarducci1993cop},
the man can survive forever against two lions on such an embedding by
deciding at each vertex which neighbouring vertex to visit next.
First, we explain why it is not straight-forward to obtain the region $R$ from
$\mathcal D$, or, at least, why some natural initial attemps will not work.

We want to ``thicken'' each edge of $\mathcal D$
such that the boundaries of the lakes become disjoint, thus
obtaining a truly two-dimensional region $\mathcal D'$ containing
$\mathcal D$ as a subset.
However, doing so, the point in $\mathcal D'$ corresponding to a vertex of $\mathcal{D}$ does not necessarily lie on the shortest path between its neighbours. We thus cannot simply employ the strategy from $\mathcal{D}$, roughly speaking, because the man must plan in advance which turn to take in the upcoming vertex.

In order to carry out this idea, we first need to describe a
winning strategy of the man on the dodecahedron graph with the special
property that he does not make his decisions at the vertices.
Let $\mathcal G$ be a planar
embedding of the dodecahedron where all edges have length $4$.
The distance between two points in $\mathcal G$ is the length of a shortest
path between the points.
Let the \emph{quarters} denote the points on the edges of $\mathcal G$
at distance $1$ to the closest vertex.
Consider a quarter $x$ on the edge $ab$ of $\mathcal G$.
%which has distance $1$ to the vertex $a$ and $3$ to the vertex $b$. %Let $\dnear$ denote the distance in $X-x$ from  
For a point $p\in\mathcal G$, $p\neq x$,
let $d_a(x,p)$ be the length of a shortest \emph{simple} path in
$\mathcal G$ from $x$ to $p$ that initially follows the edge $\{a,b\}$ in direction
towards $a$.
Let $d_b(x,p)$ be defined similarly.

When the man is at a quarter
$x$ with distance $1$ to the vertex $a$ and $3$ to the vertex $b$, we let
$\dnear$ denote the distance from $x$ to the closest 
lion with respect to $d_a$, and let $\dfar$ denote the distance from $x$ to the closest lion with respect to $d_b$.
To avoid confusion, we write them as $\dnear(t)$ and $\dfar(t)$ when $x$ is the position of the man at the time $t$.

\begin{figure}[!h]
%    \vspace{-10pt}	    
    \centering
    \includegraphics[width=\textwidth]{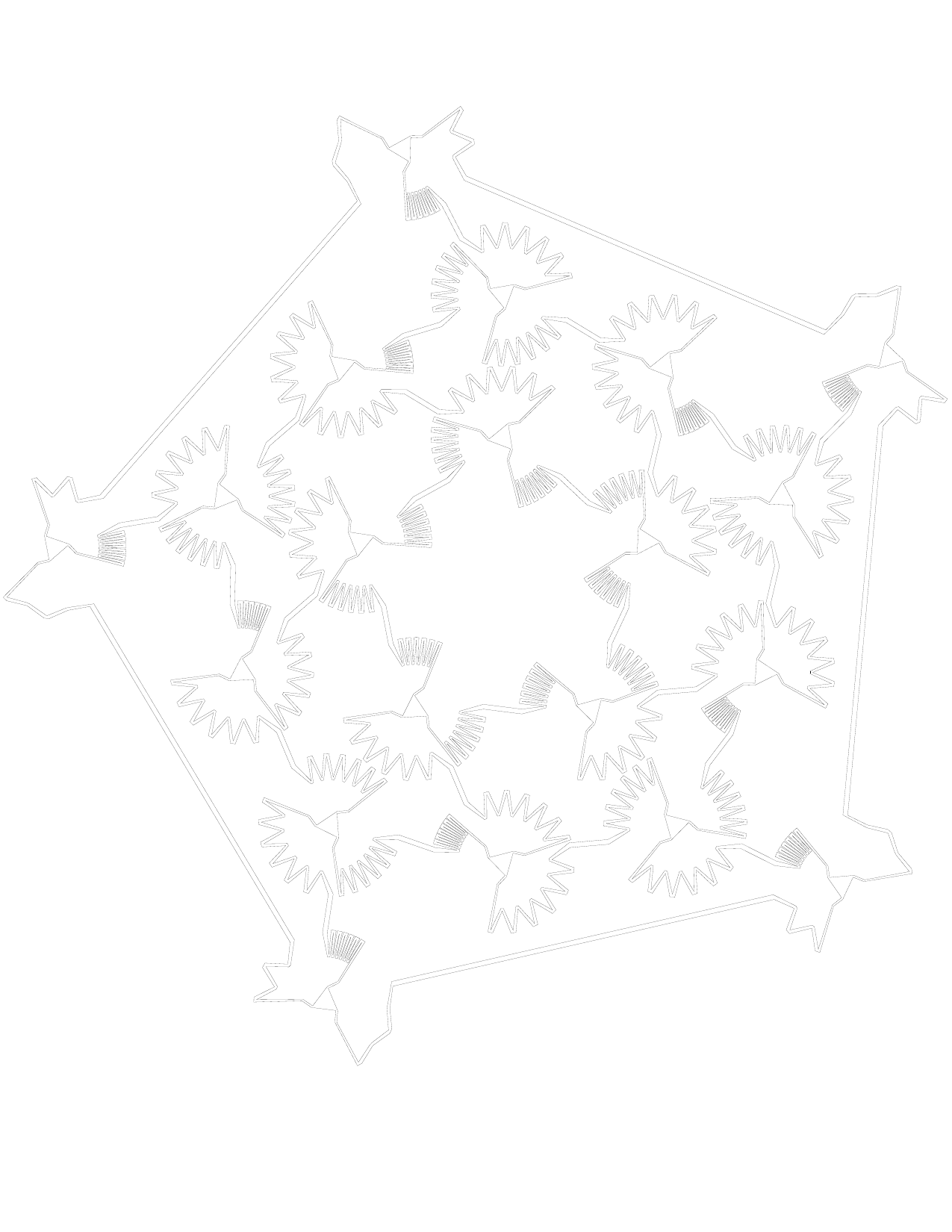}
    \vspace{-70pt}	        
    \caption{A region with $11$ lakes in which the man has a winning strategy against two lions. The man and the lions are restricted to the triangular rooms and the narrow corridors that connect them. The narrow corridors correspond to edges of the dodecahedron and the triangular rooms correspond to vertices.
    }
    \label{fig:region}
\end{figure}

We will now show that if the lions are sufficiently far away in the initial situation, there exists a winning strategy for the man where he only takes stock of the situations in the quarters. That is, when he reaches a quarter, he must plan for the next $2$ units of time where to run to, and then he has reached a quarter again, and so on. 
\begin{invariant}\label{inv:sufficiently_far_away}
\begin{enumerate} In the scenario described above:
	\item \label{inv:sufficiently_far_away:1} The man is standing on a quarter.
	\item \label{inv:sufficiently_far_away:2} $\min\{\dnear,\dfar\}\geq 1$.
	\item \label{inv:sufficiently_far_away:3} At least one of the two following statements is true:
	\begin{itemize}
		\item $\dnear \geq 3$
		\item $\dfar \geq 7$
	\end{itemize}
\end{enumerate}
\end{invariant}

\begin{lemma}\label{lem:quarter_strategy}
	If Invariant~\ref{inv:sufficiently_far_away} is satisfied initially,
	the man has a winning strategy by which he runs from quarter to quarter at unit speed so that Invariant~\ref{inv:sufficiently_far_away} is true at any quarter. The strategy maintains Invariant~\ref{inv:sufficiently_far_away} Point~\ref{inv:sufficiently_far_away:2} at all times. In particular, the closest lion is always at distance at least $1$.
\end{lemma}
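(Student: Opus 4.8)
The plan is to analyse the man's two possible equitemporal moves and to show that, whatever the two lions do, he can always choose one of them so as to re-establish Invariant~\ref{inv:sufficiently_far_away} at the next quarter while keeping Point~\ref{inv:sufficiently_far_away:2} valid throughout the move. Since a quarter lies at distance $1$ from its near vertex $a$ and $3$ from its far vertex $b$, a run of length $2$ at unit speed is of exactly one of two kinds: a \emph{far move}, in which the man runs along the current edge $ab$ towards $b$ and ends on the opposite quarter of the same edge (now near $b$); or a \emph{near move}, in which he runs to $a$ and then continues for one more unit onto one of the two edges at $a$ other than $ab$, ending on a quarter near $a$ of a fresh edge. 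I will use repeatedly that, because the man and every lion move at unit speed, each of the directional quantities $\dnear,\dfar$ changes by at most $2$ per unit of time, and that $\min\{\dnear,\dfar\}$ is exactly the geodesic distance to the closest lion.

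First I would treat the case $\dfar\geq 7$, handled by a far move. As the man advances towards $b$, the lions reachable through $b$ can close in at combined rate $2$, so the far distance drops from at least $7$ to at least $3$ and stays above $3\geq 1$ throughout; meanwhile the man recedes from $a$, so the near distance cannot drop below its initial value and stays at least $1$. At the new quarter the old far direction has become the new near direction, giving $\dnear\geq 3$, which re-establishes Point~\ref{inv:sufficiently_far_away:3}, while the new far distance equals the old near distance and is at least $1$, giving Point~\ref{inv:sufficiently_far_away:2}. Thus a far move turns a ``$\dfar\geq 7$'' quarter into a ``$\dnear\geq 3$'' quarter.

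The substantial case is $\dnear\geq 3$ with $\dfar<7$, handled by a near move. During the first unit the near distance falls from at least $3$ to at least $1$ and the far distance stays at least $1$, so the man reaches $a$ with every lion at distance at least $1$. The crux is to choose, among the two fresh edges $ac,ad$ at $a$, one on which the new far distance becomes at least $7$. Here I would exploit three facts about the length-$4$ dodecahedron $\mathcal G$: (i) the forward regions of the two fresh edges (the parts of $\mathcal G$ reachable from the new quarter without passing back through $a$) are disjoint, so a single lion can lie ahead of at most one fresh edge; (ii) the man performs a near move only when $\dfar<7$, so some lion is accounted for in the far (behind) direction, leaving at most one lion able to block a forward region; and (iii) the girth of $\mathcal G$ is $5$, so there is no short cut between the two neighbours $c,d$ of $a$, whose distance is $8$, and hence no lion that is at distance at least $1$ from $a$ can reach a forward region of the chosen fresh edge within the remaining unit of time. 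Consequently the man can pick the unblocked fresh edge and its forward direction stays clear, yielding new far distance at least $7$; the possibly-chasing behind lion remains at distance at least $1$ but contributes only to the new near direction, so Point~\ref{inv:sufficiently_far_away:2} holds throughout and Point~\ref{inv:sufficiently_far_away:3} is re-established through ``$\dfar\geq 7$''. Hence a near move turns a ``$\dnear\geq 3$'' quarter into a ``$\dfar\geq 7$'' quarter, and the two move types alternate indefinitely.

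I expect the main obstacle to be precisely this edge-selection step of the near move: making rigorous, in terms of the simple-path directional distances $\dnear,\dfar$, the informal claim that two unit-speed lions cannot simultaneously threaten both fresh escape routes. This is the continuous, quantitative incarnation of the classical fact (Aigner--Fromme; Berarducci--Intrigila) that the dodecahedron has cop number $3$, and getting the bookkeeping right --- in particular verifying that the specific thresholds $3$ and $7$ close the recurrence and that no lion can short-cut through $a$ or around a pentagonal face so as to invalidate the chosen edge --- is where the careful use of the girth-$5$ structure and the simple-path condition is essential.
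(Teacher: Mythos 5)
Your overall architecture coincides with the paper's: split on Point~\ref{inv:sufficiently_far_away:3} of Invariant~\ref{inv:sufficiently_far_away}, run to the opposite quarter of the same edge when $\dfar\geq 7$, and turn at $a$ onto a fresh edge when $\dnear\geq 3$ and $\dfar<7$; your analysis of the first case is essentially the paper's and is correct. The genuine gap is in the near move, exactly where you predict it, and the three ``facts'' you offer do not close it. Fact (i) is false as stated: the dodecahedron is $3$-connected, so the set of points reachable from a fresh quarter without passing through $a$ is essentially all of $\mathcal G$, and the two ``forward regions'' overlap almost entirely. What is true, and what has to be proved, is a \emph{quantitative} disjointness: no point of $\mathcal G$ has directional distance less than $9$ to both fresh quarters (on the pentagon through $c,a,d$ the two directional distances sum to $18$, and all other points are farther). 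Fact (iii) is aimed at the wrong quantity altogether: re-establishing Point~\ref{inv:sufficiently_far_away:3} requires $\dfar(t+2)\geq 7$, a lower bound on a directional distance, not the statement that no lion can physically enter a forward region in the last unit of time. A lion already sitting at directional distance, say, $6$ ahead of the chosen quarter never needs to ``reach'' anything to break the invariant, so a reachability argument cannot substitute for the distance bounds.

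The paper closes the gap with two explicit bounds, checked on the dodecahedron, after choosing $q_1$ to be the fresh quarter \emph{farthest from $\lnear$} -- a choice made at the quarter $x$ at time $t$ (your wording suggests the choice may be deferred until the man reaches $a$, which would violate the quarter-to-quarter planning discipline that this lemma exists to provide for the later thickening construction). First, $d_{b'}(q_1,\lfar(t))\geq 11$: since $d_b(x,\lfar(t))<7$, the lion $\lfar$ sits within distance $4$ of $b$, and any simple path from $q_1$ that starts toward $b'$ cannot reuse the edge through $a$, hence must run around a pentagon to get back near $b$; so $d_{b'}(q_1,\lfar(t+2))\geq 9$. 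Second, $d_{b'}(q_1,\lnear(t))\geq 9$: given $d_a(x,\lnear(t))\geq 3$ and that $\lnear$ is weakly closer to $q_2$, the admissible positions of $\lnear$ (the red region of Figure~\ref{fig:dodecahedron-symmetric}) are all at directional distance at least $9$, so $d_{b'}(q_1,\lnear(t+2))\geq 7$. Together these give $\dfar(t+2)\geq\min\{9,7\}=7$. Finally, you also leave Point~\ref{inv:sufficiently_far_away:2} at the new quarter unverified: one needs $d_a(q_1,\lnear(t))\geq 3$ and $d_a(q_1,\lfar(t))\geq 3$ (hence both at least $1$ at time $t+2$), which again uses the simple-path structure -- a path from $q_1$ through $a$ either continues on the third edge at $a$, where the bound $\dnear(t)\geq 3$ transfers, or doubles back through $x$ toward $b$, where $\dfar(t)\geq 1$ forces length at least $3$. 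In short, you have named the right obstacles, but the proof consists precisely of these directional-distance computations, and they are absent.
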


\begin{proof}
	Let $x$ denote the position of the man at the time $t$, and assume the invariant holds. We prove that he can run to another quarter $x'$ without getting caught such that the invariant again holds when he reaches $x'$.
		
	The proof goes by inspecting cases. Let $ab$ be the edge containing
	$x$ and suppose $a$ be the nearest vertex to $x$ and $b$ the furthest.
	
\emph{Case 1:} $\dfar(t) \geq 7$. Let $y$ denote the other quarter on the same edge as $x$. We claim that the man can run to $y$ without violating the invariant. We must thus argue that the invariants are satisfied at time $t+2$ for a man situated at $y$. 
First, note that he will not encounter any lion while running towards $y$ because $\dfar(t)>4$. Note also that $\dfar(t+2)\geq 1$, since $\dnear(t)\geq 1$ and the worst case is that the lion follows the man.
Furthermore, $\dnear (t+2) \geq \dfar(t)-4 = 3$, since $\dfar(t)\geq 7$ and the worst case is that the man and lion have run towards each other. Thus, the invariant holds at the time $t+2$.

\begin{figure}[h]
\vspace{-0pt}	    
\centering
\begin{minipage}[b]{.45\textwidth}
	\centering
     \includegraphics[trim={2.7cm 1cm 2cm 1cm},clip,width=\textwidth]{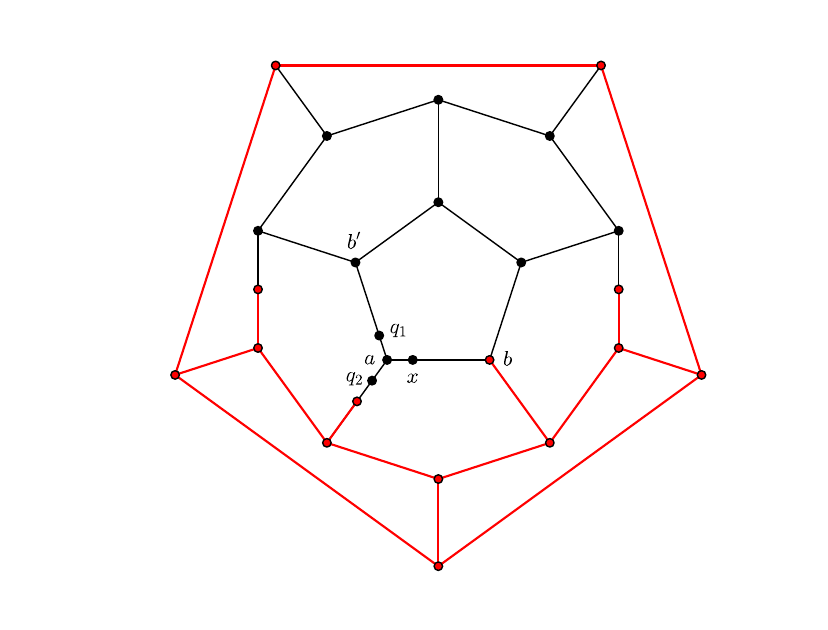}
     \caption{A situation from the proof of Lemma~\ref{lem:quarter_strategy}.
     Imagine that all edges have length $4$.
     The lion $\lnear$ is in the red part.}
     \label{fig:dodecahedron-symmetric}
\end{minipage}\hspace{0.10\textwidth}%
\begin{minipage}[b]{.45\textwidth}
	\centering
	\includegraphics[trim={-1.5cm 3cm 0cm 2.5cm},clip,width=\textwidth]{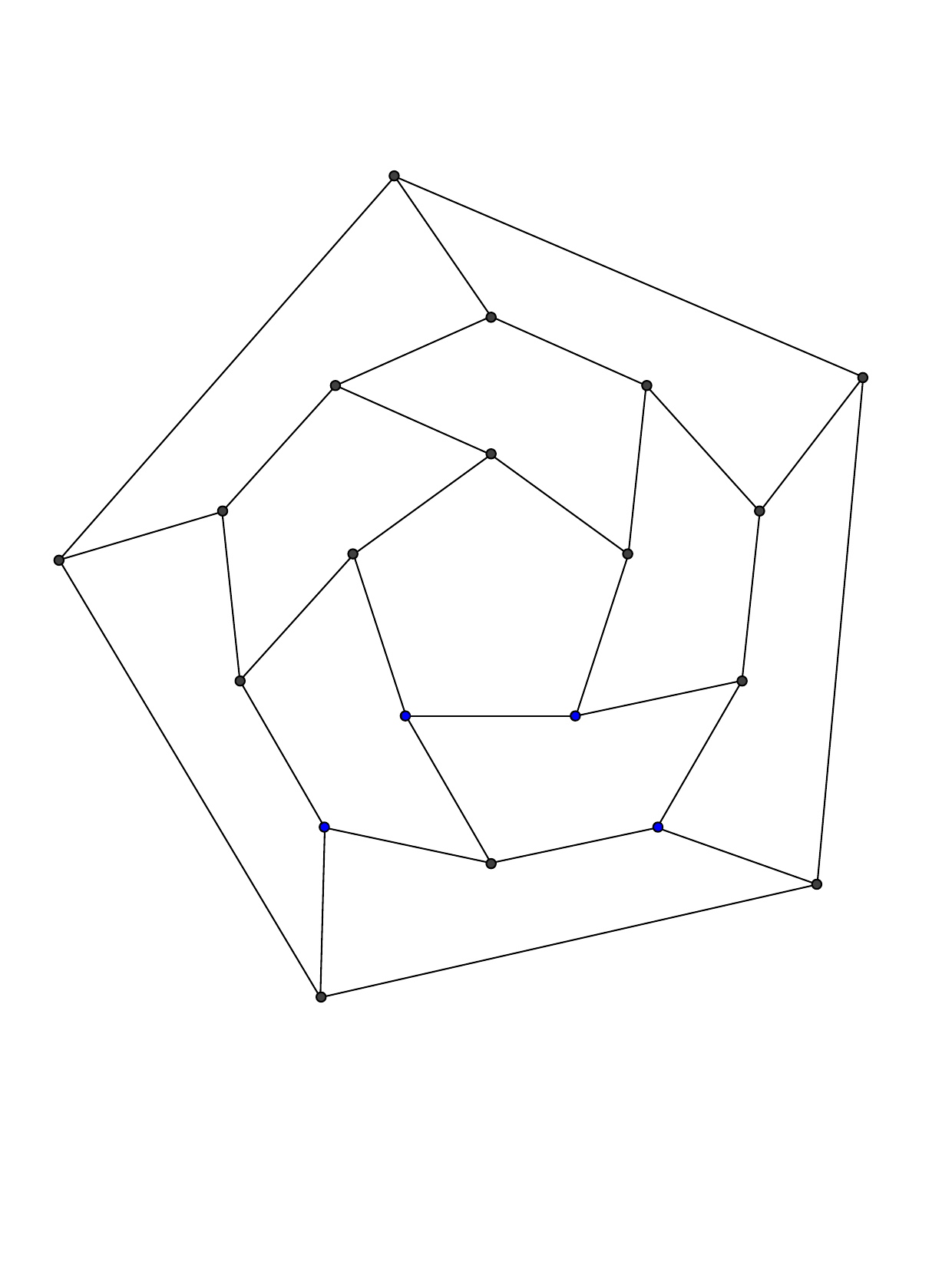}
	\caption{The embedding $\mathcal D$ of the dodecahedron.
	All edges have lengths $1$ or $3$.\\}
	\label{fig:dodecahedron}
\end{minipage}%\hspace{0.04\textwidth}%
\end{figure}

%
%\begin{wrapfigure}{R}{0.5\textwidth}
%	\label{fig:dodecahedron-symmetric}
%	\vspace{-20pt}
%	\begin{center}
%		\includegraphics[width=0.48\textwidth]{Dodecahedron-Symmetric}
%	\end{center}
%	\vspace{-20pt}
%	\caption{An "atlas" of the sphere with the dodecahedron.}
%	\vspace{-10pt}
%\end{wrapfigure}
%\begin{figure}
%\centering
%\includegraphics[width=0.4\linewidth]{Dodecahedron-Symmetric}
%\caption{An "atlas" of the sphere with the dodecahedron.}
%\label{fig:dodecahedron-symmetric}
%\end{figure}
\emph{Case 2:} $\dfar(t) < 7$, and thus $\dnear(t) \geq 3$. In this case, we exploit the fact that $\dfar(t)$ is so small that we can bound $\dfar(t+2)$ from below. Let $\lfar$ denote the lion at distance $\dfar(t)$ from $x$ at time $t$, and let $\lnear$ denote the other lion. Consider the two other quarters at distance $1$ from $a$, call them $q_1$ and $q_2$. Assume without loss of generality that $q_1$ is furthest from $\lnear$. The situation is sketched in Figure~\ref{fig:dodecahedron-symmetric}.
%At least one $q_i$ will have the property that all lions have distance $\geq \frac{3}{4}$ from $q_i$. Namely, both lions have distance $\geq 1$ from $a$, and $c_f$ has distance $\frac{11}{4}$ from either $q_i$ in $X-\{x\}$. We may assume without loss of generality that $q_1$ is the quarter in question.
We now argue that the man can choose to run towards $q_1$ without getting eaten, and while maintaining the invariant. 
Let $b'$ denote the vertex at distance $3$ to $q_1$.
Note that $d_{b'}(q_1,\lfar(t))\geq 11$
and thus, $d_{b'}(q_1,\lfar(t+2))\geq 9$.

In Figure~\ref{fig:dodecahedron-symmetric}, the points that are both
$\geq 3$ from $x$, and (weakly) closer to $q_2$ than to $q_1$, are marked with red,
and hence by our choice of $q_1$, $\lnear$ must be in the subset marked with red at
time $t$.
As is easily seen by inspection, $d_{b'}(q_1,\lnear(t))\geq 9$,
and thus $d_{b'}(q_1,\lnear(t+2))\geq 7$.
But then, $\dfar(t+2)\geq \min\{9, 7\} = 7$, and Invariant~\ref{inv:sufficiently_far_away}.\ref{inv:sufficiently_far_away:1}
and~\ref{inv:sufficiently_far_away:3} are maintained.

To see that Invariant~\ref{inv:sufficiently_far_away}.\ref{inv:sufficiently_far_away:2} is still maintained,
note that $d_a(q_1,\lnear(t))\geq 3$ and therefore
$d_a(q_1,\lnear(t+2))\geq 1$.
Similarly, since $d_b(x,\lfar(t))\geq 1$, we have
$d_a(q_1,\lfar(t))\geq 3$ so that
$d_a(q_1,\lfar(t+2))\geq 1$.
Thus, $\lnear(t+2)\geq 1$, and we are done. 
\end{proof}

%Even when situated on the surface of a sphere, we cannot simply thicken the edges and vertices of the dodecahedron, and let the man run from vertex-center to vertex-center. In that case, the lions could gain in on him for each choice, and end up catching him. %TODO: Er vi sikre?
%Mikkel: Jeg synes denne sætning er forvirrende her, efter at man har talt om quarters
% i stedet for vertices.

Our first goal is to find an embedding $\mathcal G$
of the dodecahedron in the plane with
the properties described below, %to 
which will 
make it easier for us
to construct the region $R$.

\begin{lemma}\label{niceEmbedding}
	There exists a planar embedding $\mathcal G$
	of the dodecahedron such that
	\begin{itemize}
	\item
	all edges have length $4$,
	\item
	all edges consist of line segments with lengths being multiples of $\frac{1}{8}$,
	\item
	any pair of
	line segments from different edges that meet at a vertex each have length $\frac 14$
	and form an angle of size $\frac{2\pi}{3}$, and
	\item
	for any vertex $v$, the circle $D_v$ centered at $v$ with radius $\frac 1{16}$ only
	intersects the three edges incident to $v$.
	\end{itemize}
\end{lemma}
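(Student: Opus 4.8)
The plan is to start from a good straight-line planar drawing of the dodecahedron and then replace each straight edge by a simple polygonal arc of length exactly $4$ that leaves its two endpoints along prescribed directions spaced at $\tfrac{2\pi}{3}$. The dodecahedron graph is planar and $3$-connected, so it has a straight-line planar embedding; I fix one, say the symmetric drawing underlying Figure~\ref{fig:dodecahedron}, and apply a similarity so that (i) every straight edge has length at most $\tfrac52$, (ii) distinct closed edges are at distance more than $\tfrac14$ from one another except at shared endpoints, and (iii) every vertex is at distance more than $\tfrac1{16}+\tfrac18=\tfrac{3}{16}$ from every non-incident edge. The only thing to check at this stage is that a single scaling can meet all three conditions at once; since scaling multiplies edge lengths and separations by the same factor, this reduces to the chosen drawing having a small ratio of maximum edge length to minimum separation (roughly below $10$), which a sufficiently well-spread symmetric drawing provides.

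Next I would normalise the picture near each vertex $v$. Each $v$ has exactly three incident edges, and I replace the initial piece of each by a straight \emph{stub} of length $\tfrac14$, choosing the three stub directions to be pairwise at angle $\tfrac{2\pi}{3}$ (a ``Mercedes star''). The overall rotation of the star at $v$ is free, and since three edges given in a fixed cyclic order can be assigned to the three arms in that cyclic order, I can respect the rotation system of the embedding. Because stubs have length $\tfrac14$ and the drawing is well-spread, the stubs of different vertices are disjoint and stay clear of foreign edges; in particular, inside the disk $D_v$ of radius $\tfrac1{16}<\tfrac14$ only the three radial stubs of $v$ appear, which, once every non-incident edge is kept at distance $>\tfrac1{16}$ from $v$, already secures the last bullet of the statement.

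It then remains, for each edge $e=uv$, to join the two stub endpoints $A,B$ by a simple polygonal arc of length exactly $4-\tfrac14-\tfrac14=\tfrac72$, with every segment a multiple of $\tfrac18$, lying within distance $\tfrac18$ of the straight drawing of $e$. Here the discreteness is handled cheaply: since $\tfrac72=28\cdot\tfrac18$, I use a chain of $28$ segments, each of length exactly $\tfrac18$, arranged as a shallow forward-monotone serpentine along $AB$ whose steps alternate up and down. Such a chain reaches from $A$ to $B$ precisely when $\lVert A-B\rVert\le 28\cdot\tfrac18=\tfrac72$, which holds with slack since the straight edge has length at most $\tfrac52$ and the two stubs shift the endpoints by at most $\tfrac12$. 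Alternating the steps keeps the whole arc within distance $\tfrac18$ of $AB$, so it stays in the tube; every segment has length $\tfrac18$ and the total length is exactly $\tfrac72$. Turning the first serpentine step off the stub direction keeps the incident segment of length exactly $\tfrac14$, and adding the two stubs yields an edge of length exactly $4$ whose segments are all multiples of $\tfrac18$.

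The step I expect to be delicate is the simultaneous fulfilment of the metric constraints: the edge length is rigidly pinned at $4$ while the drawing must remain planar and keep the radius-$\tfrac1{16}$ disks clean. Concretely, each serpentine lives in a tube of half-width $\tfrac18$, so distinct tubes are disjoint only if the base separations exceed $\tfrac14$ while the base edges stay below $\tfrac72$; this is exactly the ``bounded geometric quality'' requirement of condition~(ii)--(iii), and the real work is to exhibit one explicit drawing of the dodecahedron with a good enough length-to-separation ratio — or, failing a clean estimate, to place the $20$ vertices by hand on a coarse grid and route the $30$ edges through manifestly disjoint corridors. Once such a base drawing is in hand, the $\tfrac{2\pi}{3}$ stubs, the multiple-of-$\tfrac18$ property, and the exact length $4$ all follow immediately from the serpentine construction.
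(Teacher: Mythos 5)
Your high-level plan is recognisably close to the paper's: both start from a concrete planar drawing of the dodecahedron, place three stubs of length $\frac14$ at mutual angles $\frac{2\pi}{3}$ at every vertex, and equalise all edge lengths to $4$ using zig-zag chains whose segments are multiples of $\frac18$. Your serpentine with a tunable pitch angle is even somewhat more flexible than the paper's device, since it absorbs an arbitrary deficit $\lVert AB\rVert\le\frac72$ while keeping every segment of length exactly $\frac18$; the paper instead starts from the explicit embedding of Observation~\ref{obs:3-1-embedding} (Figure~\ref{fig:dodecahedron}), in which every edge has length exactly $1$ or $3$, so the deficit per edge end is exactly $\frac12$ or $\frac32$ and is added by a fixed number of ``bends''.

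There are, however, two genuine gaps. First, your quantitative conditions (i)--(iii) are inconsistent with your own construction. The serpentine lies within $\frac18$ of the segment $AB$, not of the base edge $uv$: the stub endpoints $A,B$ are displaced from $u,v$ by $\frac14$, and a stub's direction can deviate substantially from the edge's direction (it is constrained only by the cyclic order, and nothing in your argument bounds this deviation). Thus all you can guarantee is that the drawn edge lies within $\frac14+\frac18=\frac38$ of $uv$, and even that needs an angular-resolution hypothesis on the base drawing that you never state. Consequently the separation bound $>\frac14$ in (ii) and the bound $>\frac{3}{16}$ in (iii) are too weak; you would need roughly $>\frac34$ and $>\frac{7}{16}$, together with a bound on stub-to-edge angles. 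Second, and more fundamentally, the existence of a base drawing with the required length-to-separation ratio is exactly what you defer to ``the real work'' in your last paragraph. Since the lemma is an existence statement about one specific graph, exhibiting such a drawing \emph{is} the proof, not a routine verification to be filled in later. The paper sidesteps both issues at once: it takes the explicit $1$-or-$3$ embedding, in which circles $C_v$ of radius $\frac14$ around vertices already meet only their incident edges, and performs \emph{all} modifications strictly inside those circles --- the stubs go to an equilateral triple $a',b',c'$ on $C_v$ chosen via Lemma~\ref{lem:triangle} so that the connecting bends (each two $\frac18$-segments with endpoints on $C_v$) can be routed without crossings. Nothing outside the circles changes, so planarity and all separation properties are inherited from the base figure rather than re-derived from scaling estimates. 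To repair your proof you would either have to carry out your grid-placement fallback explicitly, or, more economically, confine your serpentines to the interiors of the vertex circles as the paper does.
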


After proving this lemma, we derive from $\mathcal G$ a truly two-dimensional
area $R$ in the plane where the man can survive against two lions.
Lemma~\ref{lem:quarter_strategy} gives a winning
strategy for the man in $\mathcal G$ where he runs from quarter to quarter.
The paths along which he runs in $R$ will be
exactly the same as in $\mathcal G$ except for inside the circles $D_v$.

We first need the following elementary geometric observations:

%Another problem is that the edges of the dodecahedron have different lengths. 
\begin{observation}\label{obs:3-1-embedding}
	There exists a planar embedding $\mathcal D$ of the dodecahedron
	such that all edges have length $1$ or $3$.
	$\mathcal D$ furthermore has the property that the circle of radius $\frac{1}{4}$ centered at any vertex $v$ only intersects the three edges incident to $v$. (See Figure~\ref{fig:dodecahedron}.)
\end{observation}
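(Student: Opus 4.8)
The plan is to build $\mathcal D$ by first drawing the dodecahedron graph with straight edges and then replacing each straight segment by a polygonal curve of the prescribed length, adding the extra length with a thin zig-zag placed in the \emph{middle} of the edge so that nothing changes near the vertices. Since the dodecahedron graph is planar and $3$-connected, F\'ary's theorem (or simply its Schlegel diagram) provides a straight-line planar embedding $\mathcal D_0$. In $\mathcal D_0$ no vertex lies on a non-incident edge and the three edges at each vertex leave in three pairwise distinct directions; because the graph is finite, there is therefore a positive clearance $c>0$ between every vertex and every non-incident edge, as well as a positive lower bound on the Euclidean lengths of the edges.

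First I would fix the scale. Uniformly scaling $\mathcal D_0$ multiplies all distances by the same factor, so I choose a scaled copy in which (i) every vertex is at distance strictly greater than $\frac14$ from every non-incident edge, (ii) any two adjacent vertices are at distance strictly greater than $\frac12$, and (iii) every edge has Euclidean length at most $3$. Assuming such a scale has been achieved, I reroute each edge $e=uv$ as follows: keep it equal to the straight segment $uv$ for a length $\frac14$ from each endpoint (these two \emph{stubs} are disjoint by (ii)), and in the middle portion replace the straight segment by a zig-zag confined to a tube of half-width $\eta$ around $uv$, choosing the number and amplitude of the teeth so that the total length of the rerouted edge equals exactly $1$ if the Euclidean length of $e$ is at most $1$, and exactly $3$ otherwise. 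A polygonal curve inside such a tube has length at least the Euclidean length of $e$ and can be tuned to equal any larger value, and since that Euclidean length is at most $3$ by (iii), the target length $1$ or $3$ is always attainable.

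It remains to check the two asserted properties. Every edge now has length exactly $1$ or $3$ by construction. For the circle $D_v$ of radius $\frac14$ about a vertex $v$: by (i) no non-incident edge comes within $\frac14$ of $v$, so $D_v$ meets only incident edges; and each incident edge agrees with its straight stub inside $D_v$ (the zig-zag lives strictly in the middle, outside both endpoint balls once $\eta$ is small), so it crosses $D_v$ exactly once, in its own direction. Taking $\eta$ smaller than $c/2$ keeps the tubes of distinct edges disjoint except at shared vertices and keeps them clear of all non-incident vertices, so planarity and the clearance are preserved. I expect the only real obstacle to be the reconciliation of the scaling requirements: (i) and (ii) push toward scaling \emph{up} while (iii) pushes toward scaling \emph{down}, so a single global scale suffices only when the starting drawing has bounded aspect, namely a clearance-to-maximum-edge-length ratio exceeding $\frac1{12}$. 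For this reason I would not use an arbitrary $\mathcal D_0$ but a concrete, well-proportioned one --- for instance the symmetric three-ring drawing of Figure~\ref{fig:dodecahedron} --- for which the required bounds can be checked directly; alternatively, one first improves the aspect ratio of $\mathcal D_0$ by a bounded local perturbation before scaling.
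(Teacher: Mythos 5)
Your construction is sound and, in its final form, lands on the same justification the paper uses, but the route is genuinely different. The paper gives no argument at all for Observation~\ref{obs:3-1-embedding}: it is established by exhibiting the explicit symmetric drawing of Figure~\ref{fig:dodecahedron}, whose edge lengths and $\frac14$-clearance are checked by inspection; zig-zags enter the paper only later, in the proof of Lemma~\ref{niceEmbedding}, where bends inside the circles around vertices bring every edge up to length $4$. You instead give a generic recipe --- start from any straight-line (F\'ary) drawing, rescale, keep straight stubs of length $\frac14$ at both ends of every edge, and insert a mid-edge zig-zag tuned so the total length is exactly $1$ or $3$ --- and this recipe would prove the analogous statement for every $3$-connected planar graph, which is more than the paper attempts. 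The cost is that the whole difficulty is concentrated in the rescaling step, as you candidly note: conditions (i)--(ii) and condition (iii) are compatible only if the starting drawing has clearance more than $\frac1{12}$ of the maximum edge length (and, for (ii), minimum edge length more than $\frac16$ of the maximum), and you do not prove that such a drawing of the dodecahedron exists; you defer to a concrete well-proportioned drawing, i.e.\ to precisely the kind of figure the paper relies on. Since the paper itself offers nothing beyond that figure, your proof is complete to the same standard, just with extra machinery wrapped around it. One quantitative slip to repair: choosing the tube half-width $\eta<c/2$ keeps distinct tubes disjoint, but it does not keep non-incident edges out of the radius-$\frac14$ circles unless $c>\frac12$; you need $\eta<c-\frac14$ (equivalently, strengthen condition (i) to clearance greater than $\frac14+\eta$).
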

%\begin{proof}
%See Figure~\ref{fig:dodecahedron}.
%\end{proof}

\begin{lemma}\label{lem:triangle}
   For any three points $a,b,c$ on a circle $C$,
   there exist a equilateral triangle with corners $a',b',c'$ on $C$
   where $\{a,b,c\}$ and $\{a',b',c'\}$ are disjoint and such that, when considering
   the points $a,b,c,a',b',c'$ all together, $a$ is a neighbour of
   $a'$, and $b$ is a neighbour of $b'$, and $c$ is a neighbour of $c'$.
\end{lemma}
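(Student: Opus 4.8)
The plan is to parametrize the circle $C$ by angle in $\RR/2\pi\ZZ$ and to let the sought triangle rotate: for a parameter $\phi$, let $T_\phi$ be the inscribed equilateral triangle with corners at angles $\phi,\ \phi+\tfrac{2\pi}{3},\ \phi+\tfrac{4\pi}{3}$. We may assume $a,b,c$ are distinct (if two of them coincide the statement degenerates and follows by a limiting argument). Listing $a,b,c$ in counterclockwise order splits $C$ into three open arcs $A_1,A_2,A_3$ with $|A_1|+|A_2|+|A_3|=2\pi$. The whole task is then to choose $\phi$ so that (i) the three corners of $T_\phi$ are distinct from $a,b,c$, and (ii) the six points admit the required adjacency pairing.

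First I would reduce (ii) to a purely combinatorial condition. Colour the three originals black and the three corners white, and read the colours in cyclic order around $C$; the required labelling exists precisely when one can match each black point to a cyclically adjacent white point, after which we simply name the corner matched to $a$ as $a'$, and so on. Since the six points lie on a cycle, the underlying hexagon has exactly two perfect matchings, and I would show by a short case analysis (equivalently, by verifying Hall's condition on the black--white adjacency graph) that at least one of these matchings is entirely black--white \emph{unless} the three whites are consecutive in the cyclic order, i.e.\ unless all three corners lie inside a single arc $A_i$. Thus the only obstruction to (ii) is that all three corners fall into one of $A_1,A_2,A_3$.

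It then remains to produce a rotation avoiding this obstruction, and here the key quantitative observation is that the three corners always span an arc of length exactly $\tfrac{4\pi}{3}$, so all three can lie strictly inside an open arc $A_i$ only if $|A_i|>\tfrac{4\pi}{3}$. Because the arc lengths sum to $2\pi$, at most one arc can exceed $\tfrac{4\pi}{3}$. If none does, every rotation satisfies (ii) and we are done. If exactly one arc, say $A_1$, is this long, then the set of $\phi$ for which all three corners lie in $A_1$ is an interval of length $|A_1|-\tfrac{4\pi}{3}<\tfrac{2\pi}{3}$, a proper subset of one full period of rotations; any $\phi$ outside this interval satisfies (ii).

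Finally, condition (i) excludes only finitely many values of $\phi$ (those at which a corner coincides with $a$, $b$, or $c$), whereas the set of $\phi$ satisfying (ii) has positive length, so a value of $\phi$ meeting both conditions exists, and relabelling the corners of $T_\phi$ according to the matching yields the desired triangle. I expect the main obstacle to be the combinatorial matching step: one must argue carefully that the ability to pair each original with a cyclically adjacent corner fails exactly in the fully nested ``black block / white block'' configuration, and that this configuration coincides precisely with the geometric event ``all three corners lie in one arc''. Once that equivalence is pinned down, the geometric counting in the last two paragraphs is routine.
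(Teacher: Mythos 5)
Your proof is correct and takes essentially the same route as the paper's: the paper's three-line proof likewise notes that $a,b,c$ cut $C$ into three arcs, picks an inscribed equilateral triangle whose corners avoid $\{a,b,c\}$ and do not all lie on the same arc, and then labels the corners. Your rotation/measure argument (using that the corners span an arc of exactly $\tfrac{4\pi}{3}$) and your analysis of the two perfect matchings of the $6$-cycle are rigorous expansions of the two steps the paper dismisses as ``clearly'' and ``easy to label''.
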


\begin{proof}
	See Figure~\ref{fig:vertexouter}.
	The points $a,b,c$ divide $C$ into three arcs.
	Clearly, we can choose an equilateral triangle with corners on $C$ disjoint
	from $\{a,b,c\}$ so that
	not all three corners of the triangle are on the same arc.
	It is now easy to label the corners
	of the triangle with $a',b',c'$ to satisfy the lemma.
\end{proof}
%\clearpage
%\begin{wrapfigure}{r}{0.5\textwidth}
%  \vspace{-10pt}
%  \centering
%  \includegraphics[width=0.35\textwidth]{VertexGadget}
%  \vspace{-10pt}
%  \caption{Regardless of the original angles, we can introduce bends to make the three edges meet in $\frac{3\pi}{2}$ angles.}
%  \label{fig:vertexouter}
%  \vspace{-15pt}
%\end{wrapfigure}

\begin{figure}[h]
%\vspace{-60pt}	    
\centering
\begin{minipage}[b]{.46\textwidth}
	\centering
%        \vspace{30pt}
	\includegraphics[trim={33cm 15cm 43cm 15cm},clip,width=0.8\textwidth]{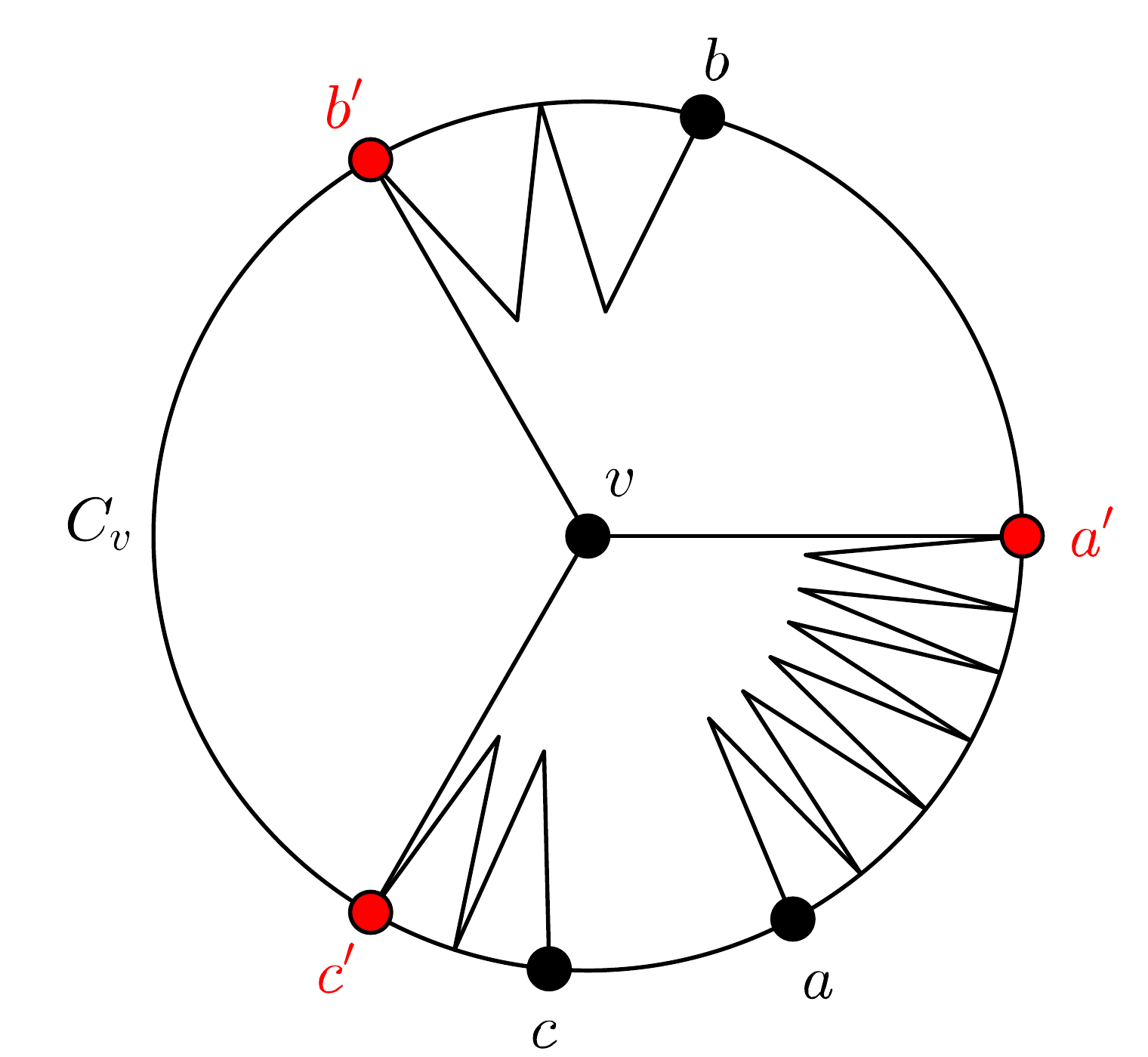}
%        \vspace{-30pt}
	\caption{Regardless of angles between $a,b,c$,
	we can introduce bends to make the three edges meet at $v$
	in angles of size $\frac{3\pi}{2}$ and at the same time extend
	the lengths suitably.}
	\label{fig:vertexouter}
\end{minipage}\hspace{0.04\textwidth}%
\begin{minipage}[b]{.5\textwidth}
	\centering
\includegraphics[trim={31cm 10cm 32cm 10cm},clip,scale=0.22]{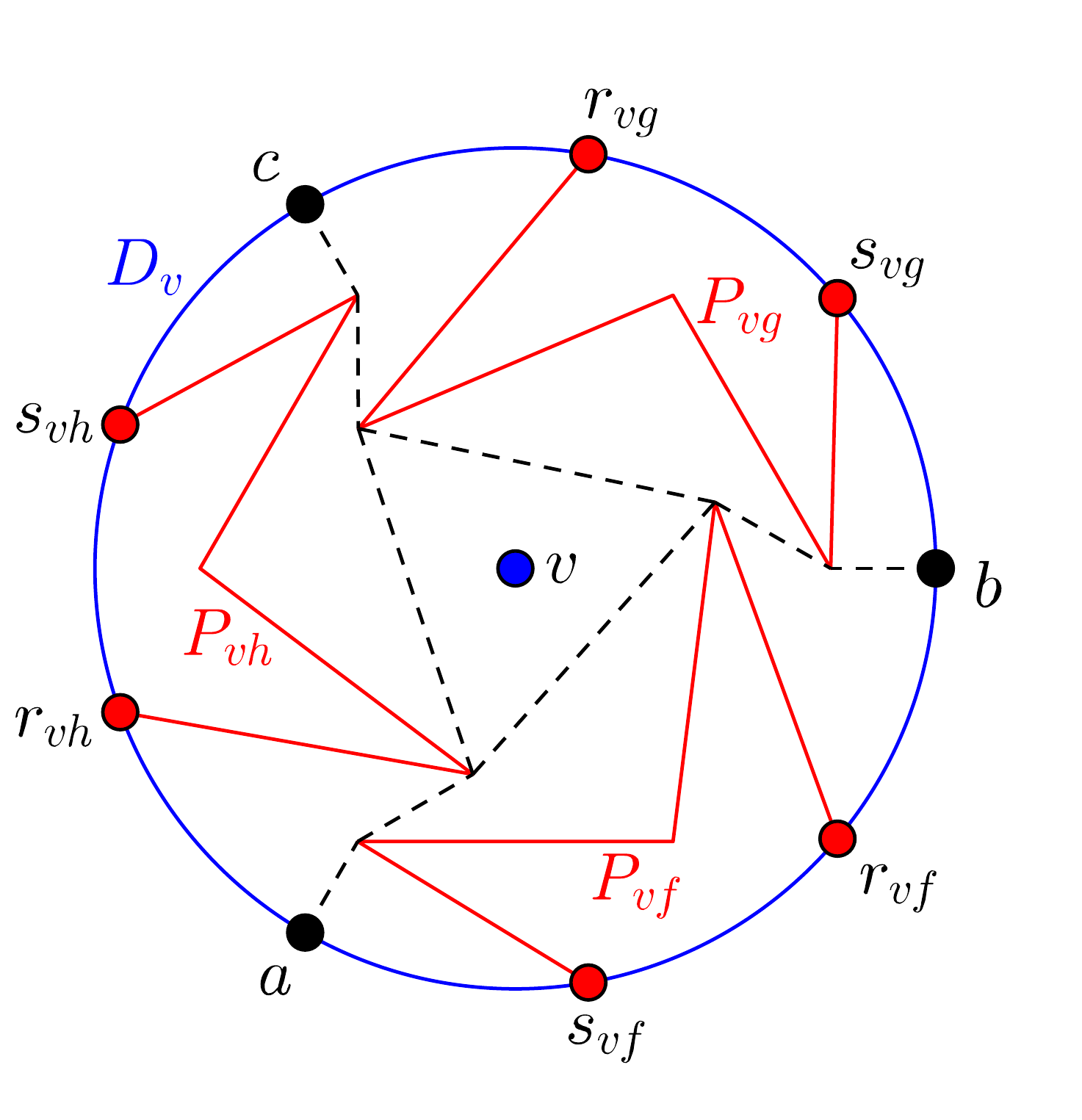}
\caption{The shortest paths in the circle $D_v$ between any two of $a,b,c,$ that
avoid crossing the polygonal curves $P_{vf},P_{vg},P_{vh}$ all have length
$1/8$.\\}
\label{shortestEqual}
\end{minipage}%\hspace{0.04\textwidth}%
\end{figure}

We are now ready to prove that a planar embedding $\mathcal G$
of the dodecahedron exists as stated in Lemma~\ref{niceEmbedding}.

\begin{proof}[Proof of Lemma~\ref{niceEmbedding}.]
Start with the embedding $\mathcal D$
shown in Figure~\ref{fig:dodecahedron}, where
all edges have length $1$ or $3$. 
Consider a vertex $v$ and the circle $C_v$ of radius $r=\frac{1}{4}$ centered
at $v$. Assume the three edges incident to $v$
enter $C_v$ in the points $a,b,c$, and let $u_a,u_b,u_c$ be the neighbouring vertices
of $v$
such that $a$ is a point on the edge $\{u_a,v\}$, $b$ is a point on $\{u_b,v\}$, and
$c$ is a point on $\{u_c,v\}$.
We now delete the segments $va$, $vb$, and $vc$, and therefore
need to reconnect $a$, $b$, and $c$ to $v$.
We explain how to reconnect $a$ to $v$; $b$ and $c$ are handled analogously.
We find points $a',b',c'$ on $C_v$
as described in Lemma~\ref{lem:triangle}. See Figure~\ref{fig:vertexouter}.
We first connect $a'$ to $v$.
We now need to connect $a$ to $a'$ using some bends.
A \emph{bend} is two segments $xy$
and $yz$, each
of length $r/2=1/8$, such that $x$ and $z$ are on $C_v$ and $y$ is in the interior of
$C_v$.
If the edge $\{u_a,v\}$ had length $3$ in $\mathcal D$,
we make two bends that together
connect $a$ and $a'$. We
thus increase the length of the edge $\{u_a,v\}$ by $1/2$ in each end and the resulting
edge has length $4$.
Note that two bends are just enough to connect $a$ and $a'$ in the worst case where the angle on $C_v$ between $a$ and $a'$ is almost $\frac {2\pi}3$, since each bend can span up to an angle of $\frac{2\pi}6$ on $C_v$.
If, on the other hand, the edge $\{u_a,v\}$ had length $1$ in $\mathcal D$, we connect
$a$ and $a'$ by $6$ bends, corresponding to extending the length of the edge
by $3$. The result is a planar embedding $\mathcal G$ of the dodecahedron with the
properties stated in the lemma.
\end{proof}

We now describe how to make the region $R$. We want each quarter of
$\mathcal G$ to be a point in $R$ and we want all pairs of quarters to have
the same distances in $\mathcal G$ and $R$. It will then follow from
Lemma~\ref{lem:quarter_strategy} that the man has a winning strategy by
running from quarter to quarter in $R$. We make one lake $L_f$ corresponding
to each face $f$ of $\mathcal G$.
Here, we also consider the outer boundary of
$R$ to be the boundary of an unbounded lake corresponding to the
exterior face of $\mathcal G$.
The
shortest paths in $R$ will be polygonal paths with corners at
convex corners of the lakes. Outside the circles
$D_v$, the paths along which the man will run are exactly the paths in
$\mathcal G$. Inside a circle $D_v$, we need to take special care to ensure
that the man can always run along an optimal path.

We now explain the construction of the lakes $L_f$
corresponding to
faces $f$ of $\mathcal G$.
Consider a vertex $v$ of $\mathcal G$ and the faces $f,g,h$ on which $v$ is a vertex.
We first describe how the boundaries of $L_f,L_g,L_h$ look in
the circle $D_v$ of radius $1/16$
centered at $v$. See Figure~\ref{shortestEqual}.
Let $a,b,c$ be the points where the edges incident to
$v$ enter $D_v$. Suppose that the arc on $D_v$ from
$a$ to $b$ is in the face $f$, the arc from $b$ to $c$ is in $g$, and the arc from
$c$ to $a$ is in $h$.
We now create three polygonal curves $P_{vf},P_{vg},P_{vh}$
inside $D_v$ so that the shortest path between any two of
$a,b,c$ contained in $D_v$ and not crossing any of $P_{vf},P_{vg},P_{vh}$
has length $1/8$.
The curve $P_{vf}$ starts at a point $r_{vf}$ on $D_v$ and
ends at a point $s_{vf}$ on $D_v$, and the endpoints
$r_{vf},s_{vf}$ are inside $f$, and similarly for the faces $g,h$.
These properties are easy to obtain by a construction
as shown in Figure~\ref{shortestEqual}.
The curves $P_{vf},P_{vg},P_{vh}$ will be part of the boundary of the lakes
$L_f,L_g,L_h$, respectively.

\begin{figure}
\centering
\includegraphics[trim={20cm 10cm 15cm 18cm},clip,width=\textwidth]
{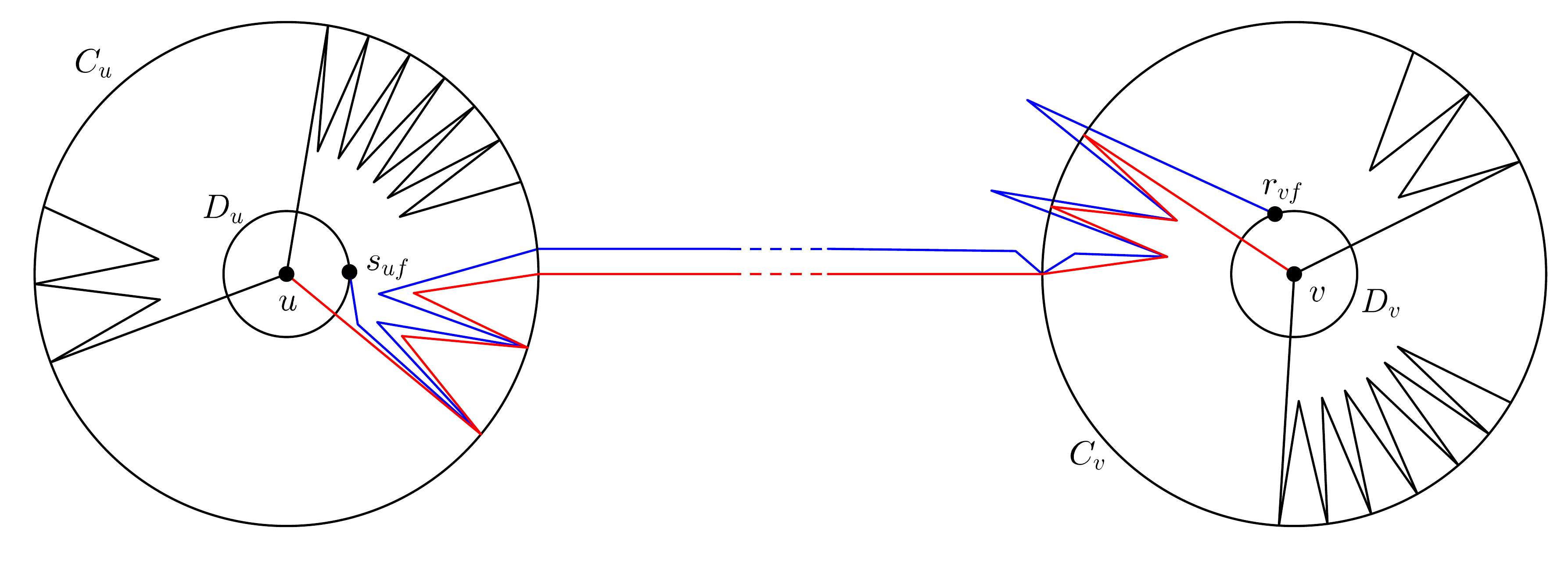}
\caption{The edge $e_{uv}$ of $\mathcal G$ is red and is
one of the edges bounding the face $f$, which is above
$e_{uv}$. The polygonal
curve $Q_{uv}$, which is on the boundary of the lake $L_f$, is blue.}
\label{connectFig}
\end{figure}

We now explain how to construct the rest of the boundary of each lake
$L_f$.
Consider a face $f$ of $\mathcal G$ and assume that the vertices on $f$ are $uvxyz$
in that order on $f$. 
The curves $P_{uf},P_{vf},P_{xf},P_{yf},P_{zf}$
appear on the boundary of $L_f$ in that order. In the following, we describe how to
connect the end $s_{uf}$ of $P_{uf}$ with the start $r_{vf}$ of $P_{vf}$ -- 
the other curves are connected in a completely analogous way.
See Figure~\ref{connectFig}.
Let $e_{uv}$ be the edge of $\mathcal G$ between $u$ and $v$, thus, 
$e_{uv}$ is a polygonal curve.
Let a \emph{corner} of $e_{uv}$ be a common point of
two neighbouring segments of $e_{uv}$.
We make a polygonal curve $Q_{uv}$ corresponding to $e_{uv}$.
$Q_{uv}$ starts at $s_{uf}$ and ends at $r_{vf}$
so that it connects $P_{uf}$ and $P_{vf}$.
$Q_{uv}$ stays near $e_{uv}$ inside $f$ and touches
$e_{uv}$ at the corners of $e_{uv}$
which are convex corners of $f$.
%We require that $Q_{uv}$ touches $e_{uv}$ at finitely many points and that
%the distance on $e_{uv}$ between two neighbouring touch points
%is at most $1/2$.
To summarize, $Q_{uv}$ has the following properties:
\begin{enumerate}
\item\label{qcond:1}
$Q_{uv}$ starts at $s_{uf}$ and ends at $r_{vf}$,

\item\label{qcond:2}
$Q_{uv}$ is completely contained in $f$,

\item\label{qcond:3}
$Q_{uv}$ is, except for the endpoints $s_{uf},r_{vf}$, outside the circles
$D_{u}$ and $D_v$,

\item\label{qcond:4}
$Q_{uv}$ and $Q_{u'v'}$ are completely disjoint for any ordered pair
$(u'v')\neq (u,v)$ so that $\{u',v'\}$ is an edge of $\mathcal G$, and

%\item\label{qcond:5}
%$Q_{uv}$ touches $e_{uv}$ at finitely many points, and
%the distance on $e_{uv}$ between two neighbouring touch points
%is at most $1/2$,

\item\label{qcond:6}
$Q_{uv}$ touches $e_{uv}$ at a point $p$ if and only if
$p$ is a corner of $e_{uv}$ which is a convex corner of $f$.

%\item\label{qcond:7}
%for any $q\in Q_{uv}$, the segment $qc$ to the closest point $c$ on $e_{uv}$ has
% length at most $1/4$ and only touches $Q_{uv}$ at $q$.
\end{enumerate}

Observe that $Q_{vu}$ (note: not $Q_{uv}$!)
touches $e_{uv}$ at the corners which are concave
corners of $f$, since those are convex corners of the neighbouring face
on the other side of $e_{uv}$.

\begin{theorem}
	There exists a polygonal region $R$ in the plane
	with holes where the exterior and interior boundaries
	are all pairwise disjoint and
	such that the man has a winning strategy against two lions.
\end{theorem}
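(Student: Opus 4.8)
The plan is to assemble the gadget-and-corridor construction into a concrete region $R$ and then transfer the graph strategy of Lemma~\ref{lem:quarter_strategy} to $R$ by exhibiting the quarters of $\mathcal G$ as an isometrically embedded skeleton of $R$. First I would check that $R$ is a legitimate instance of the theorem. The boundary of each lake $L_f$ is the polygonal cycle obtained by joining the vertex pieces $P_{vf}$ (one per vertex $v$ of the face $f$) with the corridor pieces $Q_{uv}$, so every boundary curve is polygonal; the $11$ bounded faces of $\mathcal G$ yield $11$ holes, while the unbounded face yields the exterior boundary. Pairwise disjointness of all boundaries follows from the construction: the circles $D_v$ are pairwise disjoint and each meets only its three incident edges, the $Q_{uv}$ are pairwise disjoint (property~\ref{qcond:4}) and meet the gadgets only at their endpoints (property~\ref{qcond:3}), and distinct lakes are kept apart by the positive-width corridors. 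Thus $R$ is a polygonal region with holes of the required form.

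The heart of the argument is the claim that $\dist_R(x,y)=\dist_{\mathcal G}(x,y)$ for every pair of quarters $x,y$, where $\dist_R$ denotes geodesic distance in $R$. I would prove that a geodesic in $R$ can take no shortcut relative to the skeleton. Inside a corridor this is precisely the purpose of properties~\ref{qcond:1}--\ref{qcond:6}: at each corner of $e_{uv}$ the lake on the convex side touches $e_{uv}$ (property~\ref{qcond:6} for $Q_{uv}$ on the side of $f$, together with the analogous statement for $Q_{vu}$ on the opposite side), so a path through the corridor cannot cut the corner on either side and is forced to follow $e_{uv}$, contributing exactly the length of that edge segment. Inside a vertex gadget $D_v$ the curves $P_{vf},P_{vg},P_{vh}$ are built so that the shortest path between any two of the entry points $a,b,c$ avoiding them has length $1/8=2\cdot\frac1{16}$, which is exactly the length of the two-segment path through $v$ in $\mathcal G$; this is the content of Figure~\ref{shortestEqual}. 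Since corridors and gadgets meet in exactly the incidence pattern of $\mathcal G$ and are otherwise separated by lakes, no geodesic can leave this pattern, so every geodesic between skeleton points decomposes into edge pieces and gadget crossings whose lengths sum to the corresponding graph distance. Hence the skeleton, and in particular the set of quarters, is isometrically embedded.

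With the isometry in hand, I would run the man along exactly the quarter-to-quarter paths produced by Lemma~\ref{lem:quarter_strategy}; these lie on the skeleton and have the same lengths in $R$ as in $\mathcal G$, so he moves at unit speed and reaches the same quarters at the same times. Because the man's position is always on the skeleton and a lion confined to $R$ can reach a given skeleton point no faster than by travelling along the skeleton to it, the two lions in $R$ are no more powerful than lions on $\mathcal G$. Consequently the quantities $\dnear(t)$ and $\dfar(t)$, evaluated with geodesic distances in $R$, obey the same bounds used in the proof of Lemma~\ref{lem:quarter_strategy}, so Invariant~\ref{inv:sufficiently_far_away} is maintained at every quarter. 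In particular $\min\{\dnear,\dfar\}\ge 1$ at all times, so no lion ever reaches the man. Choosing the initial positions of the two lions far enough away that Invariant~\ref{inv:sufficiently_far_away} holds at the start then completes the argument.

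I expect the main obstacle to be this last transfer step: making precise that the two-dimensional freedom of the lions confers no advantage over the one-dimensional game on $\mathcal G$. The delicate point is that a lion may sit slightly off the skeleton, so its geodesic distance in $R$ to a quarter can differ from the corresponding graph distance by as much as the width of a corridor. I would neutralise this by taking the corridor widths small enough---they may be chosen arbitrarily small without affecting the isometry of the skeleton---that this discrepancy is dominated by the integer slack in Invariant~\ref{inv:sufficiently_far_away}; survival itself only needs the closest lion to stay at distance strictly greater than $0$, which is immediate from $\min\{\dnear,\dfar\}\ge 1$. Verifying that the tight transitions in the proof of Lemma~\ref{lem:quarter_strategy} (for instance the equalities $\dfar(t)-4=3$ and $\min\{9,7\}=7$) survive this perturbation, re-deriving the lemma with a small fixed margin if necessary, is where the real care is needed.
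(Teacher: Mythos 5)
Your construction of $R$, the verification that its boundaries are pairwise disjoint simple polygons, and the claim that the quarters embed isometrically all match the paper. The gap is in the transfer step, and it is exactly where you suspect trouble---but your proposed fix cannot work. The problem is not that a lion sits ``at most a corridor-width off the skeleton''; it is that the man must play the graph game of Lemma~\ref{lem:quarter_strategy} \emph{exactly and forever}, while every translation between distances in $R$ and distances in $\mathcal G$ costs a fixed additive error. The transitions in the proof of Lemma~\ref{lem:quarter_strategy} have zero slack: Case 1 turns the hypothesis $(\dnear,\dfar)\geq(1,7)$ into exactly $(\dnear,\dfar)\geq(3,1)$ at the next quarter, and Case 2 turns $(\dnear,\dfar)\geq(3,1)$ back into exactly $(1,7)$ (via the tight bound $\min\{9,7\}=7$); the constants $1,3,7$ are a fixed point of the derivation with no margin. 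Hence if each round of your distance-comparison argument loses an additive $\delta>0$ (however small you make the corridors and gadgets), after $k$ rounds the invariant has degraded by roughly $k\delta$ and it is destroyed after $O(1/\delta)$ rounds. Since survival must last for unbounded time, ``re-deriving the lemma with a small fixed margin'' is impossible: no fixed margin survives infinitely many rounds.

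The paper's proof avoids any per-round loss by never perturbing the graph game. It defines a two-stage projection $\lion_i\mapsto\lion_i'\mapsto\lion_i''$, where $\lion_i'$ is the nearest point of the shortest-path skeleton $\mathcal G'$ and $\lion_i''$ is a point of $\mathcal G$ obtained by matching arc length inside the vertex gadgets, and proves that $\lion_i''$ moves continuously with at most unit speed. This is the technical heart that your proposal lacks: the naive nearest-point projection $\lion_i'$ is \emph{not} continuous---inside a triangular region of a gadget $D_v$ it can jump from one side of the triangle to another---and the paper shows that such a jump occurs precisely between two points having the same image $\lion_i''$ in $\mathcal G$ (Figure~\ref{fig:VertexProjection}), so the composed projection is a legitimate unit-speed lion on $\mathcal G$. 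The man then plays the strategy of Lemma~\ref{lem:quarter_strategy} against $\lion_1'',\lion_2''$, so Invariant~\ref{inv:sufficiently_far_away} holds exactly at every quarter forever, and the conversion back to $R$ is a one-time geometric observation rather than a per-round estimate: outside the gadgets a real lion can catch the man only if its projection coincides with him, and when the man is inside a gadget $D_v$ (radius $1/16$) the projections are at graph distance at least $1$, forcing the real lions outside $D_v$. To salvage your formulation you would need to supply some such continuous, unit-speed reduction of the lions' two-dimensional positions to positions on $\mathcal G$; your assertion that ``a lion can reach a skeleton point no faster than by travelling along the skeleton'' does not provide one, since the lion need not be on the skeleton at all.
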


\begin{proof}%Proof of Thm 6 
$R$ is the region that we get by removing from $\RR^2$ the interior of each
of the lakes $L_f$. Thus, the boundary of each lake is included in $R$, so that $R$ is
a closed set. $R$ is also bounded because we remove the interior of the unbounded lake
corresponding to the exterior face of $\mathcal G$.
Note that any point on an edge $e_{uv}$ of $\mathcal G$ which is outside
the circles $D_u$ and $D_v$ is a point in $R$.
Since the quarters of $e_{uv}$ are outside the circles $D_u$ and $D_v$,
it follows that they are also points in $R$.
Furthermore, our construction ensures that
the distance in $R$ between any two
quarters is the same as in $\mathcal G$. 
%TODO: does this need a more detailed explanation? Eva thinks not.
Let $\mathcal G'$ be the points in $R$ which are on some shortest path between
two quarters in $R$.
Thus, $\mathcal G'$ are the points that the man can possibly visit when running
along shortest paths in $R$ from quarter to quarter.

Let $\lion_1$ and $\lion_2$ be two lions in $R$.
We define projections $\lion'_1$ and $\lion'_2$ of the lions $\lion_1$ and
$\lion_2$ to be the closest points in $\mathcal G'$
(with respect to distances in $R$).
We now define $\lion''_1$ and $\lion''_2$ to be projections of
$\lion'_1$ and $\lion'_2$ in $\mathcal G$ in the following way.
Outside the circles $D_v$, $\mathcal G$ and $\mathcal G'$ coincide, and here
we simply define $\lion''_i\mydef\lion'_i$. Suppose now that $\lion'_i$ is inside
a circle $D_v$ for some vertex $v$ of $\mathcal G$.
See Figure~\ref{fig:VertexProjection}.
Suppose that the three edges incident to $v$ enter $D_v$
at the points $a,b,c$. The projection
$\lion'_i$ is a point on one of the shortest paths between a pair
of the points $a,b,c$. Recall that these shortest paths
all have length $1/8$. Assume without loss of generality that
$\lion'_i$ is on the path from $a$ to $c$. Let $d$ be the distance
from $a$ to $\lion'_i$ in $R$, so that $0\leq d\leq 1/8$.
If $d=1/16$, we define $\lion''_i\mydef v$.
Otherwise, if $d<1/16$, we let $\lion''_i$ be the point on the segment
$av$ in $\mathcal G$ with distance $d$ to $a$, i.e.,
$\lion''_i\in av$ so that $\|a\lion''_i\|=d$. Similarly, if
$d>1/16$, we let $\lion''_i$ be the point on $bv$ with distance
$1/8-d$ to $b$.

We now prove that $\lion''_i$ moves with at most unit speed in $\mathcal G$.
It will then follow from Lemma~\ref{lem:quarter_strategy}
that the man has a winning strategy.

$\mathcal G'$ subdivides $R$ into some regions $R'_1,\ldots,R'_k$, which are
the connected components of $R\setminus\mathcal G'$.
Let $R_i=\overline{R'_i}$ be the closure of $R'_i$.
Now, $R=\bigcup_{i=1}^k R_i$.
Inside each circle $D_v$, there is a \emph{triangular} region
bounded by three segments from $\mathcal G'$. All other regions
are bounded by a polygonal curve $C\subset\partial L_f$ on
the boundary of some lake $L_f$ and a concave chain
$\mathcal H\subset\mathcal G'$.
Call such a region \emph{normal}.
If the lion $\lion_i$ is in a normal region $R_j$
with boundary $\partial R_j=C\cup\mathcal H$ as described before,
the projection $\lion'_i$ is on $\mathcal H$. It then
follows from the concavity of $\mathcal H$ that $\lion'_i$,
and thus also $\lion''_i$, moves continuously and
with at most unit speed.

However, when $\lion_i$ is inside a triangular region
in $D_v$, the projection $\lion'_i$ might jump from one segment of the triangle to another.
Suppose that the three edges incident to $v$ enter $D_v$
at the points $a,b,c$ as in Figure~\ref{fig:VertexProjection}.
Let $a'$ be the point where the shortest paths from $a$
to $b$ and $c$ separate and define $b'$ and $c'$ similarly.
Thus, the points $a'b'c'$ are the corners of the triangular region.
Suppose that $\lion'_i$ jumps from $a'b'$ to $a'c'$. Then,
the distance from $\lion_i$ to $a'b'$ and $a'c'$ is the same
and the distance from $a$ to $\lion'_i$ before and after the
jump is at most $1/16$, since otherwise, $\lion_i$ would be
closer to the segment $b'c'$ than to $a'b'$ and $a'c'$.
It follows that $\lion'_i$ jumps
from one point to another which have the same projection
$\lion''_i$. Thus, $\lion''_i$ moves continuously and with
at most unit speed.

The man now employs the strategy from Lemma~\ref{lem:quarter_strategy}
in the following way. He imagines that he is playing in
the dodecahedron $\mathcal G$ against the lions $\lion''_1$
and $\lion''_2$. Assume therefore that Invariant~\ref{inv:sufficiently_far_away} holds initially.
The strategy tells the man to which neighbouring quarter to
run. That quarter also exists in $\mathcal G'$, and has
the same distance, so the man runs to that quarter in $\mathcal G'$.
Since $\lion''_1$ and $\lion''_2$ run with at most unit speed,
the man can escape them forever.
When the man is outside the circles $D_v$,
it is a necessary condition for the lions to catch the
man that $\lion''_1$ or $\lion''_2$ coincide with the man,
so we conclude that they cannot catch him outside the circles.
When the man is inside a circle $D_v$, we know
from Lemma~\ref{lem:quarter_strategy} that $\lion''_1$ and
$\lion''_2$ are at least $1$ away from the man. Therefore,
$\lion_1$ and $\lion_2$ must be outside $D_v$, and hence
they cannot catch him in that case either.
Thus, the man survives forever in $R$.
\end{proof}

%The construction can easily be modified to show the following related result:

%\begin{theorem}
%	Given a radius $r\geq 0$,
%	there exists an area with pairwise disjoint rectifiable lakes
%	such that a circular
%	man of radius $r$ has a winning strategy against two circular lions of radius $r$.
%\end{theorem}
%Or, equivalently, by thickening the rectifiable lakes by distance $r$:
%\begin{corollary}
%	Given a restriction on curvature,
% 	there exists an area with pairwise disjoint rectifiable lakes such that a point-shaped
% 	man with bounded curvature has a winning strategy against two lions with bounded curvature.
%\end{corollary}

\begin{figure}[h]
	\vspace{-10pt}	    
	\centering
%	\begin{minipage}[b]{.6\textwidth}
%		\centering
		\includegraphics[trim={0.5cm 1cm 1cm 1cm},clip,width=0.45\textwidth]{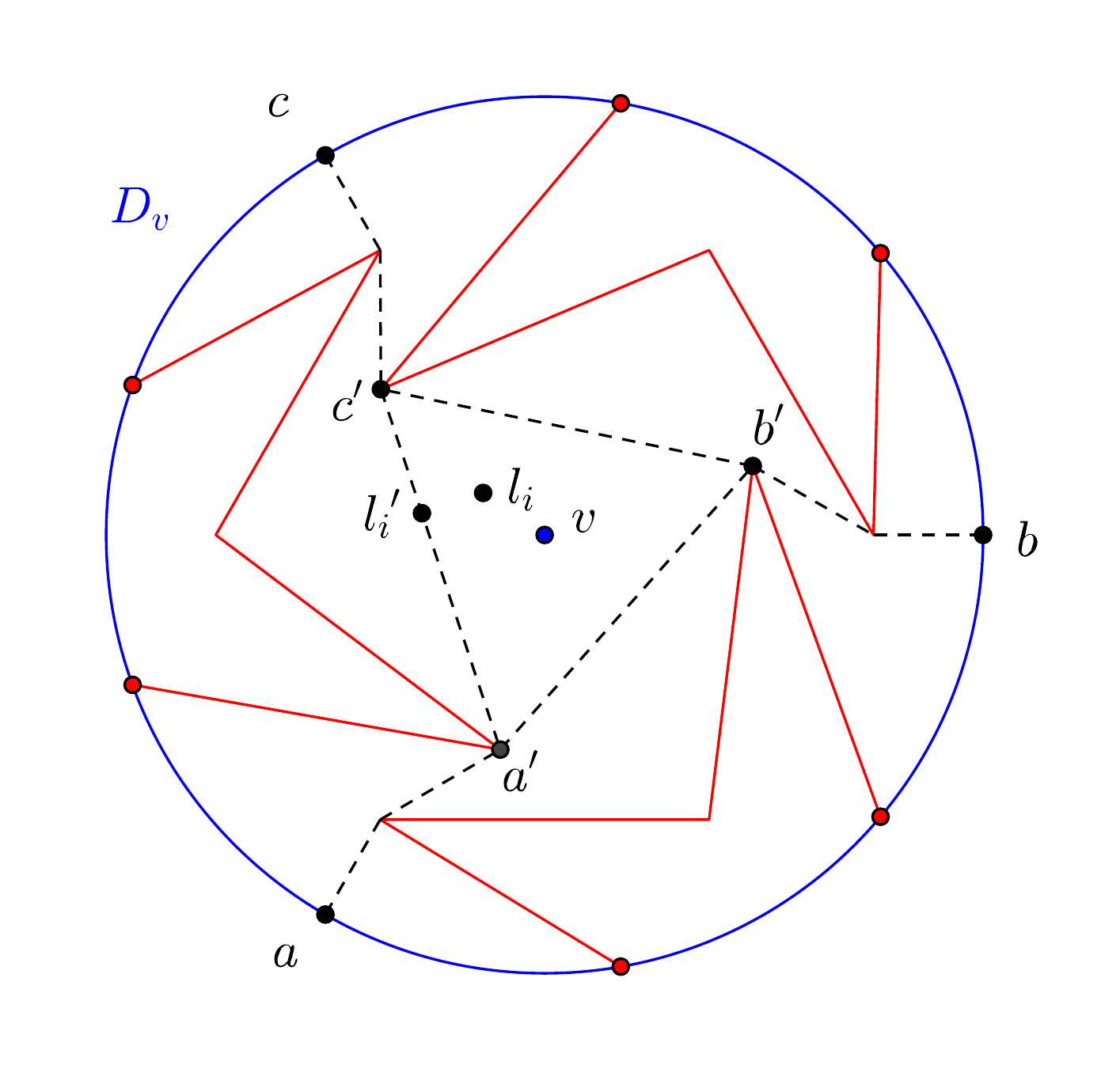}\hspace{0.09\textwidth}\includegraphics[trim={0.5cm 1cm 1cm 1cm},clip,width=0.45\textwidth]{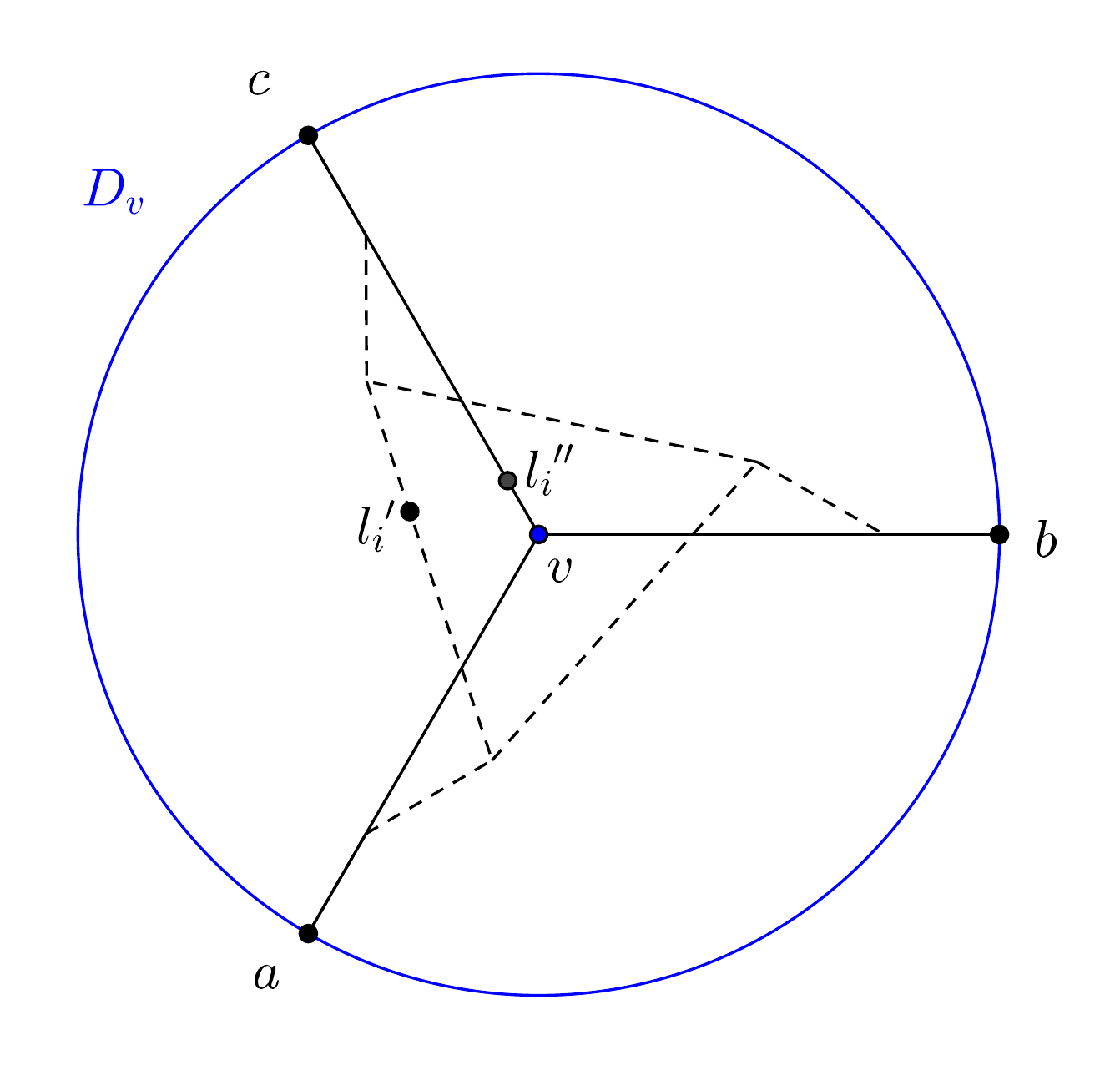}
		\caption{The projection of the lion's position $\lion_i$
    onto the point $\lion'_i$ of $\mathcal{G'}$ (left), and the projection of $\lion'_i$ onto the point $\lion_i''$ of $\mathcal{G}$ (right). The dashed lines illustrate $\mathcal{G'}$, and the solid lines illustrate $\mathcal{G}$.
    In the left figure, $\lion'_i$ is the closest point on $\mathcal G'$ to
    $\lion_i$.
    In the right figure,
    the length of the segment $c\lion''_i$ equals
    the length of the dashed path from $c$ to $\lion'_i$.}
		\label{fig:VertexProjection}
%	\end{minipage}\hspace{0.04\textwidth}%
%	\begin{minipage}[b]{.3\textwidth}
%		\centering
%		\includegraphics[trim={2.5cm 1cm 2cm 1cm},clip,width=\textwidth]{VertexProjection}
%		\caption{The projection.}
%		\label{fig:VertexProjection2}
%	\end{minipage}\hspace{0.04\textwidth}	
%\begin{minipage}[b]{.3\textwidth}
%		\centering
%		\includegraphics[trim={0.5cm 1cm 1cm 1cm},clip,width=\textwidth]{TriangleLines}
%		\caption{Points on the same hitch project to the same point of $\mathcal{G}$.}
%		\label{fig:TriangleLines}
%	\end{minipage}%\hspace{0.04\textwidth}%
\end{figure}

\section{The Fast Man Surviving any Finite Number of Lions}\label{fastManSec}

%COPY PASTE
We now consider the case where the man is just slightly faster than the lions in the unbounded plane without obstacles. In this case, the man is able to escape arbitrarily many lions.

\begin{theorem}\label{thm:fastMan}
    In the plane $\mathbb{R}^2$, for any $\ee>0$, a man able to run at speed $1+\ee$ has a locally finite strategy
    to escape the convex
    hull of any number $n\in\NN$ of unit-speed lions, provided that the man
    does not start at the same point as a lion. Thus, the man
    has a locally finite winning strategy.
\end{theorem}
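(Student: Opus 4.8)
The plan is to split the task into two phases: first show that once the man has reached a point strictly outside the convex hull $\mathrm{conv}(l_1,\dots,l_n)$ of the lions he can run away forever, and then show that a faster man can always reach such a point in finite time without being caught in the meantime.

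For the escape-once-outside step I would argue directly with the support function. Suppose that at some time $t_0$ the man lies strictly outside $\mathrm{conv}(l_1(t_0),\dots,l_n(t_0))$, and let $\vec u$ be the outward unit normal of a line separating $m(t_0)$ from all lions. From then on the man runs in the fixed direction $\vec u$ at full speed. Then $\langle m(t),\vec u\rangle$ increases at rate exactly $1+\ee$, while for every lion $\langle l_i(t),\vec u\rangle$ increases at rate at most $1$, since each lion has speed at most $1$. Hence $\langle m(t)-l_i(t),\vec u\rangle$ grows at rate at least $\ee$ for every $i$; as it is positive at time $t_0$, it stays positive and tends to infinity. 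Consequently the distance from $m(t)$ to the hull, which is at least $\langle m(t),\vec u\rangle-\max_i\langle l_i(t),\vec u\rangle$, tends to infinity, so the man is never caught. This also recovers, and quantitatively strengthens, the classical fact that a man outside the convex hull of equally fast lions survives.

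It remains to reach the outside in finite time. Here I would use the potential $F(t)$ equal to the signed distance from $m(t)$ to the boundary of $\mathrm{conv}(l_1(t),\dots,l_n(t))$, taken positive when the man is inside; escaping the hull means driving $F$ below $0$. Writing $F(t)=\min_{|\vec u|=1}\big(\max_i\langle l_i(t),\vec u\rangle-\langle m(t),\vec u\rangle\big)$, the idea is to have the man move at full speed in (a direction close to) the minimizing normal $\vec u^\ast(t)$, i.e.\ straight toward the nearest boundary point. Moving in direction $\vec u^\ast$ decreases the margin $\max_i\langle l_i,\vec u^\ast\rangle-\langle m,\vec u^\ast\rangle$ at rate at least $(1+\ee)-1=\ee$, and by an envelope argument this also bounds the rate of decrease of the minimum $F$, so $F$ reaches and crosses $0$ within time $F(t_0)/\ee$, giving a point strictly outside the hull.

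The main obstacle is safety: the steepest-descent direction points toward the nearest boundary point, which may be a lion occupying a hull vertex, and there may also be lions in the interior of the hull lying close to the man, so moving bluntly toward the boundary risks a collision. I would resolve this by implementing the descent as an equitemporal strategy: the man reserves a fixed fraction of his surplus speed for lateral dodging and, in each round of length $\Delta$, commits to a straight step of length $(1+\ee)\Delta$ whose direction is $\vec u^\ast$ perturbed so as to aim at the interior of a hull edge rather than at a vertex and to steer clear of any lion within the current clearance. Because the man is strictly faster, a lion cannot shrink the crossing gap of an edge as fast as the man closes on it, so aiming at an edge interior lets him slip between the two flanking lions; and because there are only finitely many lions, only finitely many can threaten him in a given round, so the lateral reserve suffices to avoid them. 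The delicate point — and the part I expect to require the most care — is to show that the detours forced by dodging consume only a bounded fraction of the forward progress, so that $F$ still decreases at some positive rate and the man reaches the outside in finite time while keeping positive distance from every lion. Since each round commits the man for a positive time depending only on the history so far, the resulting strategy is locally finite, which by the result of Bollob\'as et al.\ also rules out any winning strategy for the lions.
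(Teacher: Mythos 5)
Your outer phase (once strictly outside the hull, run along the outward normal of a separating line) and your unobstructed-descent computation are both fine: if the man can actually move in the minimizing direction $u^\ast(t_0)$ for a short step, then the one-sided bound $F(t_0+\delta)\le \max_i\langle l_i(t_0+\delta),u^\ast(t_0)\rangle-\langle m(t_0+\delta),u^\ast(t_0)\rangle\le F(t_0)-\ee\delta$ needs no envelope theorem. But the entire difficulty of the theorem is concentrated in the step you defer, and there the proposal has a genuine gap: the mechanisms you offer for safe dodging do not work as stated. First, descent requires the actual direction of motion $v$ to satisfy $(1+\ee)\langle v,u^\ast\rangle>1$, i.e.\ $v$ must stay within a cone of half-angle $\arccos\frac{1}{1+\ee}$ around $u^\ast$ (a small angle for small $\ee$); if the nearest boundary feature is a hull vertex occupied by a lion, ``aiming at the interior of an adjacent edge'' typically leaves this cone, and during such rounds $F$ need not decrease at all --- it can grow at rate up to $2+\ee$. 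Second, the claim that flanking lions cannot close an edge's gap before the man crosses it is false in general: the two endpoints can close the gap at combined rate $2>1+\ee$, so a short edge far from the man is gone long before he arrives. Third, ``only finitely many lions threaten per round, so the lateral reserve suffices'' is a non sequitur: finitely many lions may start arbitrarily close to the man, clustered or strung out as a wall, forcing detours whose length is dictated by the configuration rather than by any per-round reserve, and the same lions can re-threaten in every subsequent round. What actually has to be proved is that the total time spent dodging, per unit of decrease of $F$, is uniformly bounded away from dominance, while the man simultaneously keeps positive clearance to all $n$ lions; nothing in the proposal establishes this, and it is precisely the hard part.

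For comparison, the paper sidesteps simultaneous dodging entirely by induction on $n$ with a separation of scales: strategy $M_n$ runs at speed $1+\ee_n$ (with $\ee_{n-1}<\ee_n<\ee$) toward the successive corners of the polygonal path produced by $M_{n-1}$, and only the single lion $l_n$ is ever dodged, by circling it at a radius $r$ chosen so small relative to the segment length $\Delta_{n-1}$ and the speed surplus $\ee_n-\ee_{n-1}$ that each detour lasts at most $\tau=6\pi r/\ee_n$; a parallel-lines witness argument shows $l_n$ can disturb the man at most once per segment, safety distances degrade only geometrically ($c_{nj}\ge c_{jj}/2$), and escaping the hull is then automatic because $M_n$ stays within bounded distance of $M_1$, a ray traversed at speed $1+\ee/2>1$. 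Rescuing your direct potential-descent plan would essentially require rebuilding this machinery (one-lion-at-a-time interference bounds and a scale hierarchy), so as it stands the proposal is an outline with the crux missing, not a proof.
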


In fact, we prove that the man is able to keep some minimum distance $d_{\ee,n}$ to any lion, where $d_{\ee,n}$ only depends on $\ee$, $n$, and the initial distances to the lions. Thus, if the $n$ lions and man are disks with radius $<\frac{1}{2} d_{\ee,n}$, the man is still able to escape.

%END OF COPY PASTE
%COPY PASTE

We proceed by induction on the number $n$ of lions.
We define strategies $M_j$, $j=1,\ldots,n$, for the man so that following strategy $M_j$ keeps the man at a safety distance $c_{jk}>0$ from the $k$'th lion, $k\leq j$.
Each strategy yields a curve consisting of line segments all of the same length.

Inductively, the man can keep a safety distance $c_{n-1,j}$ to each lion $\lion_j$ for $j=1,\ldots,n-1$ by running at speed $1+\ee_{n-1}$, where $0<\ee_1<\ee_2<\ldots<\ee_n<\ee$.
The corners of the curve defined by strategy $M_{n-1}$ are milestones that he runs towards when using the strategy $M_n$, i.e., when also taking care of avoiding the $n$'th lion $\lion_n$.
When using strategy $M_n$, the man runs with speed $1+\ee_n$, i.e., slightly faster than when using strategy $M_{n-1}$.
This gives time to make some detours caused by the $n$'th lion $\lion_n$ while still being close to the milestone prescribed by strategy $M_{n-1}$.
If $\lion_n$ gets too close, the man makes an \emph{avoidance move}, keeping a safety distance $c_{nn}$ to $\lion_n$ which is much smaller than $c_{n-1,j}$ for any $j=1,\ldots,n-1$.
Intuitively, when performing avoidance moves, the man runs counter-clockwise around a fixed-radius circle centered at the lion. 
After a limited number of avoidance moves, the man can make an \emph{escape move}, where he simply runs towards the milestone defined by the strategy $M_{n-1}$.

By choosing $c_{nn}$ sufficiently small, we can make sure that the detour caused by the $n$'th lion is so small that it can only annoy the man once for each of the segments of the strategy $M_{n-1}$, and thus that he is ensured to be very close to the position defined by strategy $M_{n-1}$ and hence not in danger of the first $n-1$ lions.
Thus, the safety distance $c_{nj}$ to a lion $\lion_j$ for $j=1,\ldots,n-1$, is only slightly smaller than $c_{n-1,j}$.
%END OF COPY PASTE

\begin{proof}[Proof of Theorem~\ref{thm:fastMan}]
We assume without loss of generality that $\ee<1$.
Let $\lion_1,\ldots,\lion_n$ be $n$ arbitrary lion paths and let the man start
at position $\man_0$ such that $\man_0\neq \lion_i(0)$ for all $i$.
We show that the
man has a strategy $\manStrat_n$ with the following properties:
\begin{enumerate}
\item \label{prop:2}
The man is always running at speed $1+\ee_n$, where
$\ee_n\mydef(1-2^{-n})\cdot\ee$.

\item \label{prop:3}
The path defined by
$\manStrat_n(\lion_1,\ldots,\lion_n)$ is a polygonal path with corners
$m_0m_1\ldots$ and
each segment $m_im_{i+1}$ has the same length $\deltaT_n\cdot(1+\ee_n)$.
Thus, the time it takes the man to run from $m_i$ to $m_{i+1}$ is
$\deltaT_n$.

\item \label{prop:3a}
Let $t_i\mydef i\cdot\deltaT_n$ be the time where the man leaves
$m_i$ in order to run to $m_{i+1}$. The point $m_{i+1}$ can be determined
from the positions of the lions at time $t_i$.

\item \label{prop:4}
For any lion $\lion_j$, there exists a \emph{safety distance}
$c_{nj}>0$ such that for any $i=1,\ldots$,
any $t\in [t_{i},t_{i+1}]$, and any point $x\in m_{i}m_{i+1}$, it holds that
$\|x\lion_j(t)\|\geq c_{nj}$.

\item \label{prop:1}
There is a corner $m_i=\manStrat_n(t_i)$ such that for all $t\geq t_i$,
$$\manStrat_n(\lion_1,\ldots,\lion_n)(t)\notin\CH\{\lion_1(t),\ldots,\lion_n(t)\}.$$
\end{enumerate}

Clearly, it follows from the properties that $M_n$ is a
winning strategy for the man
fulfilling the requirements in the theorem.
We prove the statement by induction on $n$.
If there is only one lion, the man will run on the same ray all the time with constant
speed $1+\ee_1=1+\ee/2$. The man chooses the
direction of the ray to be $m_0-\lion_1(0)$.
This strategy obviously satisfies the stated properties.
Assume now that a strategy $M_{n-1}$ with the stated properties
exists for $n-1\ge 1$ lions and
consider a situation with $n$ lions running along paths
$\lion_1,\ldots,\lion_n$.

%EVA: Antagelse om afstande udkommenteret
%Assume without loss of generality that the lions are numbered according to
%their (increasing) distance to the man at time $0$, i.e.,
%$\|m_0\lion_1(0)\|\leq\|m_0\lion_2(0)\|\leq\cdots\leq\|m_0\lion_n(0)\|$.
For any $i\in\{1,\ldots,n\}$, let
$\manStrat_i$ be shorthand for $\manStrat_i(\lion_1,\ldots,\lion_i)$ and $\man$ shorthand for $\manStrat_n$.

At any time $t$, let the succeeding corner on the strategy $\manStrat_{n-1}$ be
$$\goal(t)\mydef \manStrat_{n-1}(\lfloor t/\deltaT_{n-1}+1\rfloor\cdot\deltaT_{n-1}).$$
By property~\ref{prop:3a}, the man can always compute the point
$\goal(t)$.

We first describe the intuition behind the man's strategy
without specifying all details,
and~later give a precise description. %using space instead of ~ causes a bug
In the situation with $n$ lions,
the man attempts to run according to the strategy for the $n-1$ first lions, i.e.,
the strategy $\manStrat_{n-1}$.
Thus, at any time $t$, the man's goal is to run towards the point $\goal(t)$.
However, the lion $\lion_n$ might prevent him from doing so. Compared to the
case with $n-1$ lions, the man has increased his speed by
$1+\ee_n-(1+\ee_{n-1})=2^{-n}\ee$, so
he has time to take detours while still following the strategy $\manStrat_{n-1}$
approximately.

Assume that we have defined the man's strategy up to time $t$.
If he is close to the
$n$'th lion, i.e., the distance $\|\man(t)\lion_n(t)\|$ is close to $\rD$,
for some small constant $\rD>0$ to be specified later, he runs counterclockwise
around the lion, maintaining approximately distance $\rD$ to the lion.
He does so until he gets to a point where running directly towards $\goal(t)$ will not
decrease his distance to the lion.
He then escapes from the lion, running directly towards
$\goal(t)$. Doing so, he can be sure that
the lion cannot disturb him anymore until he reaches $\goal(t)$ or
$\goal(t)$ has changed.

We choose $\rD$ so small that when the man is running around the lion,
we are in one of the following cases:
\begin{itemize}
\item
The lion is so close to $\goal(t)$ that the man is within the safety distance $c_{n-1,j}$ from
$\goal(t)$ for each $j=1,\ldots,n-1$, and thus in no danger of the lions $\lion_1,\ldots,\lion_{n-1}$.

\item
After running around the lion in a period of time no longer than $12\pi \rD/\ee_n$,
the man escapes by running directly towards $\goal(t)$ without decreasing
the distance to the lion. By choosing $\rD$ sufficiently small,
we can therefore limit the duration, and hence the length, of the detour that
the lion can force the man to run, so that the man is ensured to
be within the safety distance from the lions $\lion_1,\ldots,\lion_{n-1}$ during the detour.
\end{itemize}

We now describe the details that make this idea work.
Let
$$C_n\mydef \min_{j=1,\ldots,n-1} \frac{c_{jj}}{2^{n-j+1}}.$$
Informally, $C_n$ is the distance that we allow the lions $\lion_1,\ldots,l_{n-1}$ to get closer to the man in the strategy $M_n$ as compared to $M_{n-1}$.
We define
$$\rD\mydef\min\left\{
\frac{\deltaT_{n-1}\ee_n(\ee_n-\ee_{n-1})}{2+2\ee_n+18\pi(1+\ee_n)},
\frac{\ee_nC_n}{2+2\ee_n+12\pi(1+\ee_n)},
\|m_0\lion_n(0)\|
\right\},$$
$$\rhoD\mydef 2\rD/\ee_n,$$
$$\theta\mydef\arccos\frac 1{1+\ee_n},$$
$$\varphi\in(0,\pi/2]\quad\text{so that}\quad\tan\theta=\frac{\rhoD \sin\varphi}{\rhoD\cos\varphi-2\rD},\quad\text{and}$$
$$\deltaT_n>0\quad\text{so that}\quad 2\arcsin\frac{(1+\ee_n)\deltaT_n}{2(\rD-\deltaT_n)}+\frac{\deltaT_n}{\rhoD}\leq\varphi,
\quad \deltaT_n<\frac \rD{3+\ee_n},
%I first wrote ...{3+2\ee_n}, but I think the ``2'' wasn't necessary!?
\quad\text{and}\quad \deltaT_{n-1}/\deltaT_n\in\NN.$$

We note that $\varphi$ can be chosen since the function
$x\longmapsto \frac{\rhoD \sin x}{\rhoD\cos x-2\rD}$ is $0$ for $x=0$ and
tends to $+\infty$ as $\rhoD\cos x$ decreases to $2r$.
As for $\deltaT_n$, the function
$x\longmapsto 2\arcsin\frac{(1+\ee_n)x}{2(\rD-x)}+\frac{x}{\rhoD}$
is $0$ for $x=0$ and increases continuously, and hence $\deltaT_n$ can be chosen.

\begin{figure}
\centering
\begin{minipage}[t]{.3\textwidth}
  \centering
  \includegraphics[scale=0.8]{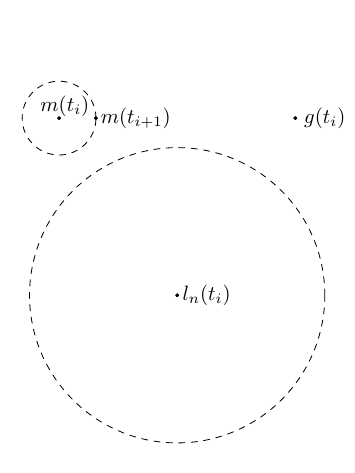}
  \captionof{figure}{A free move. The circles with centers
  $\man(t_i)$ and $\lion_n(t_i)$ have radii $(1+\ee_n)\deltaT_n$ and
  $\rD$, respectively.}
  \label{freeMove}
\end{minipage}\hspace{0.04\textwidth}%
\begin{minipage}[t]{.3\textwidth}
  \centering
  \includegraphics[scale=0.8]{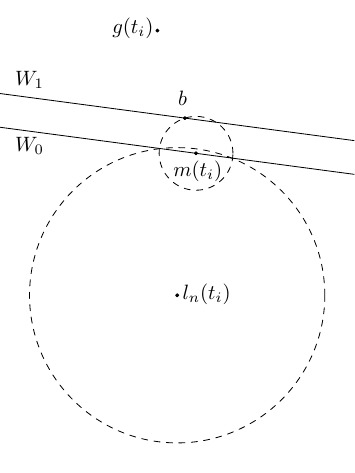}
\captionof{figure}{An escape move. The man runs to $b$.}
\label{escapeMove}
\end{minipage}\hspace{0.04\textwidth}
\begin{minipage}[t]{.3\textwidth}
  \centering
  \includegraphics[scale=0.8]{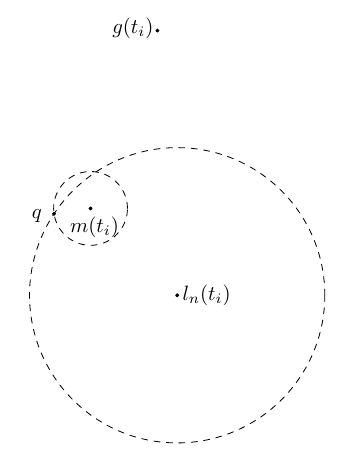}
  \captionof{figure}{An avoidance move. The man runs to $q$.}
  \label{avoidanceMove}
\end{minipage}%
\end{figure}

Define a point in time $t$ to be a \emph{time of choice} if
$t$ has the form $t_i\mydef i\deltaT_n$ for $i\in\NN_0$.
At any time of choice $t_i$,
the man chooses the point $\man(t_{i+1})$
at distance $(1+\ee_n)\deltaT_n$ from
his current position $\man(t_i)$ by the following strategy
(see Figures~\ref{freeMove}--\ref{avoidanceMove}):
\begin{enumerate}[label=\textbf{\Alph*}]
\item\label{choice:a}
Suppose first that $\|\man(t_i)\lion_n(t_i)\|\geq \rD+\deltaT_n(1+\ee_n)$.
Then the man chooses the
direction directly towards $\goal(t_i)$.
In the exceptional case that $\man(t_i)=\goal(t_i)$, he chooses an arbitrary direction.

\item\label{choice:b}
Suppose now that $\|\man(t_i)\lion_n(t_i)\|< \rD+\deltaT_n(1+\ee_n)$ and consider the case
$\man(t_i)\neq \goal(t_i)$.
Let $b$ be the point at distance $(1+\ee_n)\deltaT_n$ from $\man(t_i)$ in the direction
towards $\goal(t_i)$.
If there exist two parallel lines $W_0$ and $W_1$ such that
$\man(t_i)\in W_0$, $b\in W_1$, $\dist(\lion_n(t_i),W_0)\geq \rD-\deltaT_n$,
and $\dist(\lion_n(t_i),W_1)\geq \dist(\lion_n(t_i),W_0)+\deltaT_n$,
then the man runs to $b$.

\item\label{choice:c}
In the remaining cases,
the circles $C(\man(t_i),\deltaT_n(1+\ee_n))$
and $C(\lion_n(t_i),\rD)$ intersect at two points $p$ and $q$ such that
the arc on $C(\lion_n(t_i),\rD)$ from $p$ counterclockwise to $q$ is in the interior of
$C(\man(t_i),\deltaT_n(1+\ee_n))$. The man then runs towards the point
$q$.
\end{enumerate}

A move defined by case~\ref{choice:a},
\ref{choice:b}, or \ref{choice:c} is
called a
\emph{free move}, an \emph{escape move}, or an \emph{avoidance move},
respectively.
Let \emph{move $i$} be the move that the man does during the interval
$[t_i,t_{i+1})$.% The proof of the following claim is in appendix~\ref{fastmanApp}.

\begin{claim}\label{claim1}
At any time of choice $t_i$, it holds that
$\|\man(t_i)\lion_n(t_i)\|\geq \rD-\deltaT_n$
and if the preceding move was an avoidance move, it also holds that
$\|\man(t_i)\lion_n(t_i)\|\leq \rD+\deltaT_n.$
Furthermore, at an arbitrary point in time $t\in [t_{i-1},t_i]$ and any point
$\man'\in \man([t_{i-1},t_i])$ it holds that
$0<\rD-(3+\ee_n)\deltaT_n\leq\|\man'\lion_n(t)\|$
and
if move $i-1$ is an avoidance move then additionally
$\|\man'\lion_n(t)\|\leq \rD+(3+\ee_n)\deltaT_n.$
\end{claim}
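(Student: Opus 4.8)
The plan is to prove Claim~\ref{claim1} by induction on the index $i$ of the times of choice, first pinning down the distance at each $t_i$ and then obtaining the intermediate-time bounds for free from the speed constraints. The only facts I need are that the man advances exactly $(1+\ee_n)\deltaT_n$ while the lion advances at most $\deltaT_n$ during one step, the defining inequalities $\rD\le\|m_0\lion_n(0)\|$ and $\deltaT_n<\rD/(3+\ee_n)$, and the precondition attached to each of the three move types. Concretely, I would prove by induction the statement $P(i)$: \emph{$\|\man(t_i)\lion_n(t_i)\|\ge \rD-\deltaT_n$, and moreover $\|\man(t_i)\lion_n(t_i)\|\le \rD+\deltaT_n$ whenever move $i-1$ is an avoidance move.} The base case $P(0)$ is immediate, since $\|\man(t_0)\lion_n(t_0)\|=\|m_0\lion_n(0)\|\ge\rD$ and no move precedes $t_0$.

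For the inductive step I assume $P(i)$ and analyze the three possible types of move $i$, each of which determines $\man(t_{i+1})$. In a \emph{free move} the precondition is $\|\man(t_i)\lion_n(t_i)\|\ge\rD+(1+\ee_n)\deltaT_n$, so the triangle inequality gives $\|\man(t_{i+1})\lion_n(t_{i+1})\|\ge\rD+(1+\ee_n)\deltaT_n-(1+\ee_n)\deltaT_n-\deltaT_n=\rD-\deltaT_n$. In an \emph{escape move} the new position $\man(t_{i+1})=b$ lies on $W_1$, and since $\dist(\lion_n(t_i),W_1)\ge\dist(\lion_n(t_i),W_0)+\deltaT_n\ge(\rD-\deltaT_n)+\deltaT_n=\rD$, we get $\|\man(t_{i+1})\lion_n(t_i)\|\ge\rD$ and hence $\|\man(t_{i+1})\lion_n(t_{i+1})\|\ge\rD-\deltaT_n$. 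In an \emph{avoidance move} the new position $\man(t_{i+1})=q$ lies exactly on $C(\lion_n(t_i),\rD)$, so $\|\man(t_{i+1})\lion_n(t_i)\|=\rD$ and therefore $\rD-\deltaT_n\le\|\man(t_{i+1})\lion_n(t_{i+1})\|\le\rD+\deltaT_n$. In all three cases the lower bound holds, and the upper bound holds exactly in the avoidance case, which is $P(i+1)$. I would also remark that the lower bound in $P(i)$ is precisely what guarantees that the two circles defining case~\ref{choice:c} actually intersect, so the induction simultaneously certifies that the avoidance move is well-defined and avoids any circularity.

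It remains to transfer these endpoint bounds to arbitrary intermediate points, noting that $\man'$ and $t$ are allowed to be decoupled. For $t\in[t_{i-1},t_i]$ and $\man'\in\man([t_{i-1},t_i])$ I would use $\|\man'-\man(t_{i-1})\|\le(1+\ee_n)\deltaT_n$ and $\|\lion_n(t)-\lion_n(t_{i-1})\|\le\deltaT_n$ together with the lower bound from $P(i-1)$ to obtain $\|\man'\lion_n(t)\|\ge(\rD-\deltaT_n)-(1+\ee_n)\deltaT_n-\deltaT_n=\rD-(3+\ee_n)\deltaT_n$, which is positive by $\deltaT_n<\rD/(3+\ee_n)$. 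When move $i-1$ is an avoidance move, its precondition gives $\|\man(t_{i-1})\lion_n(t_{i-1})\|<\rD+(1+\ee_n)\deltaT_n$ while $\|\man(t_i)\lion_n(t_{i-1})\|=\rD$; since the distance from the fixed point $\lion_n(t_{i-1})$ to a point moving linearly along the segment $\man(t_{i-1})\man(t_i)$ is a convex function, its maximum is attained at an endpoint, so every $\man'$ on this segment satisfies $\|\man'-\lion_n(t_{i-1})\|<\rD+(1+\ee_n)\deltaT_n$, and adding the lion's displacement yields $\|\man'\lion_n(t)\|\le\rD+(2+\ee_n)\deltaT_n\le\rD+(3+\ee_n)\deltaT_n$. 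The parts needing the most care are the extraction of the lower bound in the escape case purely from the separating pair $W_0,W_1$, and this convexity observation in the avoidance case, which is exactly what prevents an interior point of the man's segment from drifting farther from the lion than both of its endpoints.
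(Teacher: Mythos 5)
Your proof is correct, and it follows the same overall scheme as the paper's: induction over the times of choice, a case analysis on the move type to control the distance at the endpoints $t_i$, and a triangle-inequality transfer to intermediate (decoupled) times $t$ and points $\man'$. Two of your steps differ from the paper in a way worth recording. First, in the escape case the paper passes through the inequality $\|\man(t_i)\lion_n(t_{i-1})\|\geq\|\man(t_{i-1})\lion_n(t_{i-1})\|+\deltaT_n$, which does not literally follow from the escape-move condition (the lion's distance to $W_0$ may be smaller than its distance to the man); your derivation $\|\man(t_{i+1})\lion_n(t_i)\|\geq\dist(\lion_n(t_i),W_1)\geq(\rD-\deltaT_n)+\deltaT_n=\rD$ is the clean route to the same conclusion. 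Second, and more substantially, for the upper bound during an avoidance move the paper asserts that move $i-1$ being an avoidance move forces $\|\man(t_{i-1})\lion_n(t_{i-1})\|\leq\rD+\deltaT_n$; but that bound at time $t_{i-1}$ is only guaranteed by the claim when move $i-2$ was \emph{also} an avoidance move, and it can genuinely fail for the first avoidance move of a run: after a free or escape move the distance at $t_{i-1}$ may lie in $(\rD+\deltaT_n,\,\rD+(1+\ee_n)\deltaT_n)$, and the escape-move condition can still fail there (e.g.\ when the lion sits between the man and $\goal(t_{i-1})$), so an avoidance move is then played from distance exceeding $\rD+\deltaT_n$. Your convexity observation --- every point of the segment from $\man(t_{i-1})$ to the point $q$ on $C(\lion_n(t_{i-1}),\rD)$ is within $\max\{\|\man(t_{i-1})\lion_n(t_{i-1})\|,\rD\}<\rD+(1+\ee_n)\deltaT_n$ of $\lion_n(t_{i-1})$ --- bypasses this entirely, uses only the avoidance-move precondition, and even yields the stronger intermediate bound $\rD+(2+\ee_n)\deltaT_n\leq\rD+(3+\ee_n)\deltaT_n$, which is all that later claims (in particular the bound $\|\lion_n(t)\goal\|\geq\rhoD'-(\rD+(3+\ee_n)\deltaT_n)$) require. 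Your side remark that the endpoint lower bound certifies that the two circles in the avoidance case actually intersect, so the strategy is well defined without circularity, is also correct and is left implicit in the paper. In short, your proof matches the paper's structure and repairs a small gap in its treatment of the avoidance case.
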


\begin{proof}
We prove the claim by induction on $i$. It clearly holds for $i = 0$ so assume that it holds for $i-1$. If move $i-1$ was a free move,
we have
\begin{align*}
\|\man(t_i)\lion_n(t_i)\|&\geq
\|\man(t_{i-1})\lion_n(t_{i-1})\|-(2+\ee_n)\deltaT_n\\
&\geq
\rD+\deltaT_n(1+\ee_n)-(2+\ee_n)\deltaT_n=\rD-\deltaT_n.
\end{align*}

If move $i-1$ was an escape move, we have
$$\|\man(t_i)\lion_n(t_{i-1})\|\geq
\|\man(t_{i-1})\lion_n(t_{i-1})\|+\deltaT_n.$$
Hence
\begin{align*}
\|\man(t_i)\lion_n(t_i)\|&\geq \|\man(t_i)\lion_n(t_{i-1})\|-\deltaT_n\\
&\geq \|\man(t_{i-1})\lion_n(t_{i-1})\|\geq \rD-\deltaT_n
\end{align*}

If move $i-1$ was an avoidance move, we have
$$\|\man(t_i)\lion_n(t_i)\|\geq \|\man(t_i)\lion_n(t_{i-1})\|-\deltaT_n
= \rD-\deltaT_n
$$
and, similarly,
$$\|\man(t_i)\lion_n(t_i)\|\leq \|\man(t_i)\lion_n(t_{i-1})\|+\deltaT_n
= \rD+\deltaT_n.
$$

Since at a point of choice $t_{i-1}$
we have $\rD-\deltaT_n\leq\|\man(t_{i-1})\lion_n(t_{i-1})\|$ and
the lion and the man can move at most $(2+\ee_n)\deltaT_n$ closer
to each other within $\deltaT_n$ time,
it holds for any point in time $t\in[t_{i-1},t_i]$ and
any point $\man'\in \man([t_{i-1},t_i])$ that
\begin{align*}
\rD-(3+\ee_n)\deltaT_n\leq
\|\man(t_{i-1})\lion_n(t_{i-1})\|-(2+\ee_n)\deltaT_n\leq
\|\man'\lion_n(t)\|.
\end{align*}

If move $i-1$ is an avoidance move, we have
$\|\man(t_{i-1})\lion_n(t_{i-1})\|\leq \rD+\deltaT_n$,
so it holds for any point in time $t\in[t_{i-1},t_i]$ and
any point $\man'\in \man([t_{i-1},t_i])$ that
\begin{align*}
\|\man'\lion_n(t)\|
\leq
\|\man(t_{i-1})\lion_n(t_{i-1})\|+(2+\ee_n)\deltaT_n
\leq
\rD+(3+\ee_n)\deltaT_n.
\end{align*}

\end{proof}

\begin{claim}\label{claim2}
An avoidance move is succeeded by an avoidance move or an escape
move. When the man does an escape move, he will not do an avoidance move before
he reaches $\goal(t)$ or $\goal(t)$ moves.
\end{claim}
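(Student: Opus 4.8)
The plan is to treat the two assertions separately: the first follows at once from Claim~\ref{claim1}, while the second is proved by setting up an invariant that persists along any run towards a fixed goal.

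For the first assertion, suppose move $i-1$ is an avoidance move. By Claim~\ref{claim1} we then have $\|\man(t_i)\lion_n(t_i)\|\leq \rD+\deltaT_n$. Since $\rD+\deltaT_n<\rD+(1+\ee_n)\deltaT_n$, the position $\man(t_i)$ fails the hypothesis of case~\ref{choice:a}, so move $i$ cannot be a free move; it is therefore an escape move or an avoidance move, which is exactly the first statement.

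For the second assertion, suppose move $i$ is an escape move towards $\goal(t_i)=:g$, witnessed by parallel lines $W_0\ni\man(t_i)$ and $W_1\ni b$. I would introduce the unit normal $\hat n$ to $W_0,W_1$ pointing from $W_0$ to $W_1$ and set $\xi(P):=\langle P-\man(t_i),\hat n\rangle$, so that $\xi(\man(t_i))=0$ and $W_0=\{\xi=0\}$. First I would argue that the lion lies strictly behind $W_0$, i.e.\ $\xi(\lion_n(t_i))\leq-(\rD-\deltaT_n)$: the defining inequalities of the escape move rule out the lion lying ahead of $W_1$, and they rule out the lion lying between $W_0$ and $W_1$ because that configuration would force the gap between the lines to be at least $2\rD-\deltaT_n$, exceeding the maximal step length $(1+\ee_n)\deltaT_n$ once one uses $\deltaT_n<\rD/(3+\ee_n)$. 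The same inequalities give that the perpendicular advance $\delta:=\xi(\man(t_{i+1}))$ of one step satisfies $\delta\geq\deltaT_n$, so the signed gap $\xi(\man(t_i))-\xi(\lion_n(t_i))$ is at least $\rD-\deltaT_n$. Next I would show this gap can only grow while $g$ is neither reached nor changed: in that regime the man's target each step is the point at distance $(1+\ee_n)\deltaT_n$ towards $g$, so he keeps moving in the constant direction $\man(t_{i+1})-\man(t_i)$ and his $\xi$-coordinate increases by exactly $\delta\geq\deltaT_n$ per step, whereas $\lion_n$ can change its $\xi$-coordinate by at most $\deltaT_n$ per step. Hence the gap is non-decreasing and stays $\geq\rD-\deltaT_n$, keeping the lion behind the man. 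At the next time of choice I would then exhibit the escape condition using the translates $W_0'=\{\xi=\xi(\man(t_{i+1}))\}$ and $W_1'=\{\xi=\xi(\man(t_{i+1}))+\delta\}$: indeed $\man(t_{i+1})\in W_0'$, the next target lies on $W_1'$, $\dist(\lion_n,W_0')$ equals the gap $\geq\rD-\deltaT_n$, and $\dist(\lion_n,W_1')=\dist(\lion_n,W_0')+\delta\geq\dist(\lion_n,W_0')+\deltaT_n$. Consequently that move is either a free move (case~\ref{choice:a}) or an escape move (case~\ref{choice:b}), never an avoidance move. Iterating, and noting that $g$ can change only at a time of choice because $\deltaT_{n-1}/\deltaT_n\in\NN$, gives the claim.

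The main obstacle I anticipate is the geometric bookkeeping that keeps the lion strictly behind $W_0$ and all of its translates, rather than slipping into the slab between the two guard lines; this is precisely where the smallness of $\deltaT_n$ relative to $\rD$ is needed, and it is what upgrades the one-step escape condition into an invariant that survives every subsequent step until the goal $g$ is reached or updated.
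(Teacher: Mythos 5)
Your proof is correct and follows essentially the same route as the paper: the first assertion via Claim~\ref{claim1} exactly as in the paper, and the second via an induction over times of choice in which translated parallel lines witness that every subsequent move is a free or an escape move until $\goal$ is reached or changes. The differences are presentational — you track a signed perpendicular gap where the paper tracks explicit points $w_j$ and lines $W_j$ — and your explicit verification that the lion lies behind $W_0$ (using $\deltaT_n<\rD/(3+\ee_n)$) spells out a step the paper's induction uses implicitly.
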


\begin{proof}
Consider move $i$.
We know from Claim~\ref{claim1} that if move $i-1$ was an avoidance move, then
$\|\man(t_i)\lion_n(t_i)\|\leq \rD+\deltaT_n<\rD+(1+\ee_n)\deltaT_n$,
so move $i$ cannot be a free move.

\begin{figure}
\centering
\includegraphics[scale=0.8]{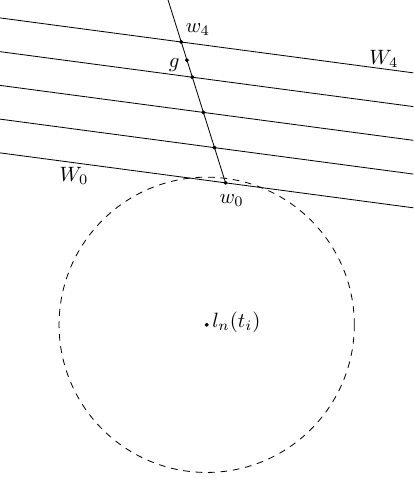}
\caption{The distance between two consecutive of
the parallel lines $W_0,\ldots,W_4$ is at least $\deltaT_n$, which
proves that the man runs from
$\man(t_i)=w_0$ to $w_4$ unless $g$ moves in the meantime.}
\label{claim3proof}
\end{figure}

For the second part of the statement, assume that move $i$ is an escape move.
Let $\goal\mydef \goal(t_i)$.
Let $w_0,\ldots,w_k$ be a sequence of points on the ray from $\man(t_i)$
with direction to $\goal$ such that
$w_0=\man(t_i)$, $\|w_0w_j\|=j(1+\ee_n)\deltaT_n$,
and $k$ is minimum such that either $\goal\in w_{k-1}w_k$ or
$\goal(t')\neq g$ for some $t'\in [t_{i+k-1},t_{i+k}]$.
See Figure~\ref{claim3proof}.
Let $W_0$ and $W_1$ be the parallel lines defined in case~\ref{choice:b} for
move $i$. We define lines $W_j$ for $j\geq 2$ to be parallel
to $W_0$ and passing through $w_j$. We claim that for any
$j\in\{0,\ldots,k-1\}$, the man moves from $w_j$ to $w_{j+1}$ during
move $i+j$ using either an escape move or a free move. We prove this by
induction on $j$. It holds for $j=0$ by assumption, so assume it holds that
$\man(t_{i+j})=w_j$ and that move $i+j-1$ was an escape move or a free move.
Since the distance between consecutive lines $W_j$ and $W_{j+1}$ is
at least $\deltaT_n$, it follows that
$\dist(\lion_n(t_i),W_j)\geq \rD+(j-1)\deltaT_n$ and hence that
$\dist(\lion_n(t_{i+j}),W_j)\geq \rD-\deltaT_n$.
Now, if $\|\man(t_{i+j})\lion_n(t_{i+j})\|< \rD+\deltaT_n(1+\ee_n)$,
then the lines $W_j$ and $W_{j+1}$ are a witness that move $i+j$ is an
escape move so that the man moves to $w_{j+1}$.
Otherwise, move $i+j$ is a free move, in which case the man moves
to $w_{j+1}$. Finally, since $\goal(t)$ moves or the man reaches $\goal$
during move $i+k$, the statement holds.
\end{proof}

Define $\rhoD'\mydef \rhoD+\rD+(3+\ee_n)\deltaT_n$
and $\tau\mydef 6\pi \rD/\ee_n$.
%The proof of the following technical claim is in appendix~\ref{fastmanApp}.

\begin{claim}\label{claim3}
If move $i$ is an avoidance move,
one of the following three events occurs before $\tau$ time
has passed: (i) $\goal(t)$ moves, (ii) $\|\man(t)\goal(t)\|<\rhoD'$, or (iii) the man makes an escape move.
\end{claim}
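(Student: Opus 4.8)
The plan is to argue by contradiction: assume that throughout the interval $[t_i,t_i+\tau)$ none of the three events (i), (ii), (iii) occurs, and derive that the man must in fact make an escape move before time $\tau$ elapses. If (i) never occurs then $\goal(t)=\goal\mydef\goal(t_i)$ is constant on the whole interval; if (ii) never occurs then $\|\man(t)\goal\|\geq\rhoD'$ throughout; and if (iii) never occurs then, invoking Claim~\ref{claim2} (an avoidance move is followed only by an avoidance or an escape move), every move on $[t_i,t_i+\tau)$ is an avoidance move. So the whole interval consists of the man circling counterclockwise around $\lion_n$ at a distance close to $\rD$, as guaranteed by Claim~\ref{claim1}.

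Next I would quantify the angular progress of this circling. Writing $\alpha(t)$ for the angle of the vector $\man(t)-\lion_n(t)$, the point is that each avoidance move advances $\alpha$ counterclockwise by a definite positive amount. Since during an avoidance move the man traverses a chord of length $(1+\ee_n)\deltaT_n$ on the circle $C(\lion_n(t_i),\rD)$ while the lion can shift the centre by at most $\deltaT_n$, the net counterclockwise increment of $\alpha$ per step is bounded below, and this lower bound is exactly what the choice of $\deltaT_n$ through the inequality involving $\varphi$ is designed to control. Summing over the steps, the total increase of $\alpha$ over an interval of length $\tau=6\pi\rD/\ee_n$ exceeds $2\pi$; intuitively the man's tangential speed relative to the lion is at least about $\ee_n$, giving angular speed at least about $\ee_n/\rD$, so time $\tau$ buys more than one full revolution.

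Then I would pin down a fixed angular sector in which an escape move is forced. Because $\|\man(t)\goal\|\geq\rhoD'$ while $\|\man(t)\lion_n(t)\|$ stays within $\rD+(3+\ee_n)\deltaT_n$ by Claim~\ref{claim1}, the direction from $\man(t)$ to $\goal$ differs from the (fixed) direction from $\lion_n(t)$ to $\goal$ by only a small angle; this is where $\rhoD$ and the definition of $\varphi$ via $\tan\theta=\rhoD\sin\varphi/(\rhoD\cos\varphi-2\rD)$ enter. Consequently there is a fixed arc $I$ of outward directions $\man-\lion_n$ such that, whenever $\alpha$ lies in $I$, a step of length $(1+\ee_n)\deltaT_n$ toward $\goal$ has a component of at least $\deltaT_n$ in the direction pointing away from $\lion_n$ (this is precisely the content of $\theta=\arccos\frac1{1+\ee_n}$), so the parallel lines $W_0,W_1$ required in case~\ref{choice:b} exist and the man makes an escape move. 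Since $\alpha$ increases monotonically by more than $2\pi$, it must enter $I$ at some time of choice inside the interval, producing an escape move and contradicting that all moves were avoidance moves. This completes the contradiction and proves the claim.

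I expect the main obstacle to be the two quantitative estimates that the parameters were tuned for: first, the per-step lower bound on the angular advance, which must absorb both the lion's motion and the fluctuation of $\|\man\lion_n\|$ permitted by Claim~\ref{claim1}; and second, verifying that membership of $\alpha$ in the sector $I$ really does certify the existence of the witnessing lines $W_0,W_1$ of case~\ref{choice:b} with the required gap at least $\deltaT_n$. Everything else is bookkeeping, but these two steps are exactly where the otherwise opaque definitions of $\rhoD$, $\theta$, $\varphi$, and $\deltaT_n$ must be invoked.
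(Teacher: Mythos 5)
Your overall architecture is the same as the paper's: by Claim~\ref{claim2} all moves in the interval become avoidance moves, each avoidance move advances the man's angle around $\lion_n$ by at least roughly $\ee_n\deltaT_n/\rD$, and an escape move is forced once the direction $\overrightarrow{\lion_n\man}$ is nearly aligned with $\overrightarrow{\lion_n\goal}$ (this is the role of $\theta$, $\varphi$, and the witness lines of case~\ref{choice:b}). But there is a genuine gap at the decisive step: your escape sector $I$ is \emph{not} fixed. It is defined relative to the direction from $\lion_n(t)$ to $\goal$, and while $\goal$ is fixed (event (i) fails), the lion is not: during the interval it can travel a distance $\tau=6\pi\rD/\ee_n$, while its distance to $\goal$ is only guaranteed to be at least $\rhoD=2\rD/\ee_n$ (which is what $\neg$(ii) together with Claim~\ref{claim1} yields). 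Hence the direction $\overrightarrow{\lion_n(t)\goal}$ --- and with it your sector $I$ --- can rotate by as much as $\tau/\rhoD=3\pi$, more than a full turn, within the interval. Your inference ``$\alpha$ increases monotonically by more than $2\pi$, hence it must enter $I$'' therefore does not follow: a rotating sector can, a priori, run away from the man's angle, and nothing in your write-up compares the two rotation rates.

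The paper closes exactly this hole by tracking \emph{two} continuous angle functions, $\xi(t)$ (from $\lion_n$ to $\man$) and $\eta(t)$ (from $\lion_n$ to $\goal$), bounding the per-step change of $\eta$ by $\deltaT_n/\rhoD$ (using $\|\lion_n(t)\goal\|\geq\rhoD$) and the per-step gain of $\xi$ from below by $\ee_n\deltaT_n/\rD$, so that the \emph{relative} angle $\xi-\eta$ gains at least $\ee_n\deltaT_n/(2\rD)$ per step. The tuning $\rhoD=2\rD/\ee_n$ is chosen precisely so that the sector rotates at most half as fast as the man circles, and $\tau=6\pi\rD/\ee_n$ --- three revolutions' worth rather than one --- is what absorbs both the sector's rotation (up to $3\pi$) and the $O(\pi)$ correction terms before concluding that $\xi-\eta$ crosses a multiple of $2\pi$. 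Your argument is repairable along these lines, since you already have the needed lower bound on $\|\lion_n(t)\goal\|$ on the table, but as written the key step fails. A secondary slip: the inequality $2\arcsin\frac{(1+\ee_n)\deltaT_n}{2(\rD-\deltaT_n)}+\frac{\deltaT_n}{\rhoD}\leq\varphi$ in the definition of $\deltaT_n$ serves to \emph{upper}-bound the per-step changes of $\xi$ and $\eta$, so that the discrete times of choice cannot skip over the escape window located by the continuity argument; it is not, as you state, what controls the lower bound on the angular advance.
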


\begin{proof}
We first present an informal description of the proof.
If the first two events do not occur, it follows from Claim~\ref{claim2} that the man keeps doing avoidance moves during this time. Let $\xi(t)$ resp.~$\eta(t)$ denote the angle of the vector $\overrightarrow{\lion_n(t)\man(t)}$ resp.~$\overrightarrow{\lion_n(t)g(t_i)}$. A key observation is that if the difference in these angles is small, the man makes an escape move since then the lion and the goal $\goal$ are roughly on opposite sides of the man. Showing that this difference eventually becomes small involves showing that $\eta$ increases by at least $2\pi$ more than $\xi$ after $\tau$ time so that at some point in time $t\in[t_i,t_i+\tau]$, vectors $\overrightarrow{\lion_n(t)\man(t)}$ and $\overrightarrow{\lion_n(t)g(t_i)}$ have the same orientation. By Claim ~\ref{claim1}, the lion $\lion_n$ never gets closer than $\rho$ to $g(t_i)$ which implies that the change in $\eta$ is small in any time interval $[t_j,t_{j+1}]$. Since the man keeps a minimum distance to the lion, it similarly follows that the change in $\xi$ is small in $[t_j,t_{j+1}]$. Picking $j$ to be the maximum such that $t_j\le t$ gives $t - t_j\le\deltaT_n$ which implies that the difference in the two angles is small at time $t_j$ at which point the man makes an escape move. Since $t_j \le t + \tau$, the lemma follows.

We now proceed with the formal proof of the claim.
Assume that neither the first nor the second event occurs
before $\tau$ time has passed.
We therefore know by Claim~\ref{claim2}
that the man keeps doing avoidance moves during that time.
Let $\goal\mydef \goal(t_i)$.
By Claim~\ref{claim1}, we know that
for any $t\in[t_i,t_i+\tau]$, it holds that
$$\|\lion_n(t)\goal\|\geq \|\man(t)\goal\|-\|\man(t)\lion_n(t)\|\geq
\rhoD'-(r+(3+\ee_n)\deltaT_n)=
\rhoD.$$
There exist continuous functions
$\xi,\eta\colon\RR_0\longrightarrow\RR$
that measure the angle from $\lion_n$ to $\man$ and to $\goal$,
respectively, i.e., such that for any
time $t$, we have
$$
\man(t)=\lion_n(t)+\|\lion_n(t)\man(t)\|\cdot (\cos \xi(t),\sin \xi(t))
$$
and
$$
\goal=\lion_n(t)+\|\lion_n(t)\goal\|\cdot (\cos \eta(t),\sin \eta(t)).
$$

\begin{figure}
\centering
\includegraphics[scale=0.8]{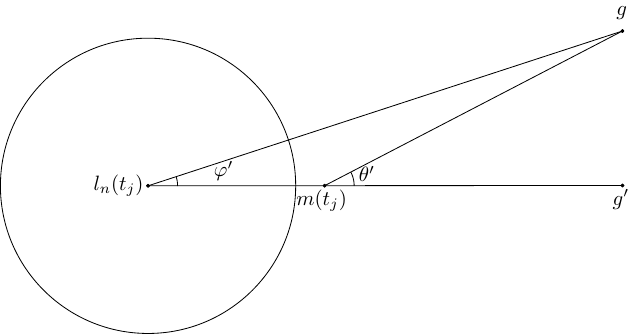}
\caption{The angles $\varphi'$ and $\theta'$ and the point $g'$ are shown.
The circle is $C(\lion_n(t_j),\rD)$.}
\label{phiThetaFig}
\end{figure}

Consider an arbitrary time of choice $t_j$ where
$\|\man(t_j)\lion_n(t_j)\|< \rD+\deltaT_n(1+\ee_n)$,
so that move $j$ is either an escape move or an avoidance move.
We now prove that a sufficient condition for move $j$ to be an escape move
is that the difference in angles
from $\lion_n$ to $\man$ and to $\goal$, respectively, is at most
$\varphi$. To put it in another way, if
\begin{align}\label{xieta}
\lvert \eta(t_j)-\xi(t_j)-2z\pi\rvert\leq \varphi
\end{align}
for some
$z\in\ZZ$, then the man makes an escape move. To see this, assume without loss of
generality that the inequality holds for $z=0$,
let $\varphi'\mydef\eta(t_j)-\xi(t_j)$, and consider the case
$0\leq\varphi'\leq\varphi$.
The case $0\geq\varphi'\geq-\varphi$ is analogous.
See Figure~\ref{phiThetaFig}.
Let $\theta'$ be the counterclockwise angle from the direction
$\overrightarrow{\lion_n(t_j)\man(t_j)}$ to $\overrightarrow{\man(t_j)\goal}$.
If
$\theta'\leq\theta=\arccos\frac 1{1+\ee_n}$, then move $j$ is an escape move:~The two lines
perpendicular to $\lion_n(t_j)\man(t_j)$ through
$\man(t_j)$ and $b$, respectively, are a witness, where $b$ is defined
as in case~\ref{choice:b}.
We now prove that since $\varphi'\leq\varphi$, we have
$\theta'\leq\theta$.
Let $g'$ be the projection of $g$ on the line through $\lion_n(t_j)$ and $\man(t_j)$.
We then have
\begin{align*}
\tan \theta'&=
\frac{\|gg'\|}{\|\man(t_j)g'\|}=
\frac{\|\lion_n(t_j)g\|\sin\varphi'}
{\|\lion_n(t_j)g\|\cos\varphi'-\|\lion_n(t_j)\man(t_j)\|}.
\end{align*}

Observe that under our assumptions,
$\theta'$ is maximum when $\varphi'$ and $\|\lion_n(t_j)\man(t_j)\|$
are maximum and
$\|\lion_n(t_j)g\|$ is minimum, i.e., when
$\varphi'=\varphi$,
$\|\lion_n(t_j)\man(t_j)\|=\rD+\deltaT_n$, and
$\|\lion_n(t_j)g\|=\rhoD$. We therefore get
\begin{align*}
\tan \theta'&\leq
\frac{\rhoD\sin\varphi}
{\rhoD\cos\varphi-(\rD+\deltaT_n)}\leq
\frac{\rhoD\sin\varphi}
{\rhoD\cos\varphi-2\rD}=\tan \theta.
\end{align*}
It now follows that $\theta'\leq\theta$, so indeed, move $j$ is an escape move.

In the following, we prove that there is some time of choice
$t'_j$ in the interval
$[t_i,t_i+\tau]$ for which the condition~\eqref{xieta} is satisfied, i.e.,
condition~\eqref{xieta} is true when $t_j$ is substituted by $t'_j$.

First, we note that for an arbitrary time of choice $t_j$ and any
$t\in[t_j,t_{j+1}]$, it holds that
\begin{align}\label{maxAngleStep}
\xi(t)-\xi(t_j)\leq
2\arcsin\frac{(2+\ee_n)\deltaT_n}{2(\rD-\deltaT_n)}.
\end{align}
To see this, we first observe that we can restrict ourselves to
the case $t=t_{j+1}$, since, clearly,
the maximum value of $\xi(t)-\xi(t_j)$ is attained for $t=t_{j+1}$.
Define for $t\in[t_j,t_{j+1}]$
$$
\lion'_n(t)\mydef\lion_n(t_j)\quad\text{and}\quad
\man'(t)\mydef\man(t)+(\lion_n(t)-\lion_n(t_j)),
$$
i.e., we fix the lion $\lion'_n$ at the point $\lion_n(t_j)$
and let the man $\man'$ run for both so that the segment
$\lion'_n(t)\man'(t)$ is a translation of $\lion_n(t)\man(t)$.
It follows that the man $\man'$ runs at speed at most $2+\ee_n$.
Clearly, $\xi$ cannot increase more on $[t_j,t_{j+1}]$ than in the case that
$\|\lion'_n(t_j)\man'(t_j)\|=\|\lion'_n(t_{j+1})\man'(t_{j+1})\|=\rD-\deltaT_n$
and $\|\man'(t_j)\man'(t_{j+1})\|=(2+\ee_n)\deltaT_n$. From this observation,
inequality~\eqref{maxAngleStep} follows from an elementary argument, see
Figure~\ref{xixiFig}.

\begin{figure}
\centering
\begin{minipage}[t]{.50\textwidth}
  \centering
  \includegraphics[scale=0.6]{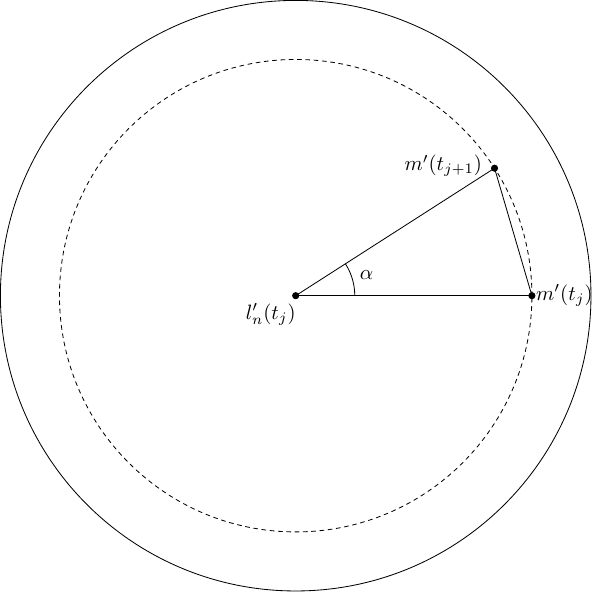}
  \captionof{figure}{The solid circle is $C(\lion'_n(t_j),\rD)$ and the dashed is
$C(\lion'_n(t_j),\rD-\deltaT_n)$.
The angle $\alpha$ is $\xi(t_{j+1})-\xi(t_j)$.
When
$\|\man'(t_j)\man'(t_{j+1})\|=(2+\ee_n)\deltaT_n$, then
$\alpha=2\arcsin\frac{(2+\ee_n)\deltaT_n}{2(\rD-\deltaT_n)}$.}
\label{xixiFig}
\end{minipage}\hspace{0.09\textwidth}%
\begin{minipage}[t]{.40\textwidth}
  \centering
  \includegraphics[scale=0.7]{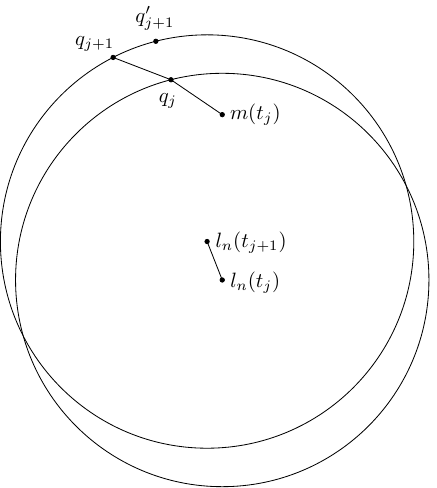}
\captionof{figure}{The circles are $C(\lion_n(t_j),\rD)$ and $C(\lion_n(t_{j+1}),\rD)$.
The man runs the path $\man(t_j)q_jq_{j+1}$. The angle from $q'_{j+1}$ to
$q_{j+1}$ on $C(\lion_n(t_{j+1}),\rD)$ is at least $\ee_n\deltaT_n$.}
\label{angleIncrease}
\end{minipage}\hspace{0.04\textwidth}
\end{figure}

We now note that for any $t\in[t_j,t_{j+1}]$, we have
\begin{align}\label{maxAngleStep2}
\eta(t)-\eta(t_j)\leq
\frac{\deltaT_n}{\rhoD}.
\end{align}
This inequality holds since $\eta$
is increasing the most when the lion runs counterclockwise
around the circle $C(g,\rhoD)$ with unit speed, in which case equality holds
in \eqref{maxAngleStep2} when $t=t_{j+1}$, and for smaller values of $t$,
the inequality is always strict.

Assume now that the moves $i,i+1,\ldots,i+k$ are all avoidance moves and
$t_{i+k}\leq t_i+\tau$.
See Figure~\ref{angleIncrease}.
For $j\in\{i,i+1,\ldots,i+k\}$, let
$q_j\mydef \man(t_{j+1})$ be the point to which the man
chooses to run at time $t_j$ as defined in case~\ref{choice:c}.
Let $\xi_j\in[\xi(t_j),\xi(t_j)+\pi]$ be the angle of $q_j$ on
$C(\lion_n(t_j),\rD)$, i.e., the angle such that
$$
q_j=\lion_n(t_j)+\rD\cdot (\cos \xi_j,\sin \xi_j).
$$
Let
$q'_{j+1}\mydef q_j+(\lion_n(t_{j+1})-\lion_n(t_j))$ be the point on the circle
$C(\lion_n(t_{j+1}),\rD)$ corresponding to $q_j$ on $C(\lion_n(t_{j}),\rD)$.
Then $\|\man(t_{j+1})q'_{j+1}\|=\|q_jq'_{j+1}\|=\|\lion_n(t_j)\lion_n(t_{j+1})\|
\leq\deltaT_n$. Hence,
$\|q_{j+1}q'_{j+1}\|\geq\|q_jq_{j+1}\|-\|q_jq'_{j+1}\|\geq (1+\ee_n)\deltaT_n-\deltaT_n
=\ee_n\deltaT_n$, and we get that $\xi_{j+1}-\xi_j>\frac{\ee_n\deltaT_n}{r}$ for any $j\in\{i,\ldots,i+k-1\}$ and hence that
$\xi_{i+k}-\xi_i> \frac{k\ee_n\deltaT_n}{r}$.

Clearly, $\xi_j$ is largest when $\man(t_j)$ is exactly on
the circle $C(\lion_n(t_j),\rD)$.
In this case,
the arc on $C(\lion_n(t_j),r)$ from $\man(t_j)$ to
$q_j$ is at most
$\frac{\pi \|\man(t_j)q_j\|}{2\rD}=\frac{\pi (1+\ee_n)\deltaT_n}{2\rD}$,
so we get that
$\xi_j\in\left[\xi(t_j),\xi(t_j)+
\frac{\pi(1+\ee_n)\deltaT_n}{2\rD}\right]$. Also note that inequality~\eqref{maxAngleStep2} gives
$\eta(t_{i+k})-\eta(t_i)\leq k\deltaT_n/\rho$.
Hence we have
\begin{align*}
D&\mydef\left(\xi(t_{i+k})-\xi(t_i)\right)-
\left(\eta(t_{i+k})-\eta(t_i)\right)\\
&\geq \left(\xi_{i+k}-\frac{\pi(1+\ee_n)\deltaT_n}{2\rD}-\xi_i\right)-
\frac{k\deltaT_n}{\rho}\\
&\geq \frac{k\ee_n\deltaT_n}{\rD}-\frac{\pi(1+\ee_n)\deltaT_n}{2\rD}-\frac{k\ee_n\deltaT_n}{2\rD}\\
%&\geq \frac{k\ee_n\deltaT_n}{2\rD}-\frac{(1+\ee_n)\deltaT_n}{\rD}\\
&> \frac{k\ee_n\deltaT_n}{2\rD}-\pi.
\end{align*}
Now, if $k\geq\frac{6\pi\rD}{\ee_n\deltaT_n}$, we get
$D>2\pi$. Hence, after $\frac{6\pi\rD}{\ee_n\deltaT_n}\cdot\deltaT_n=\tau$ time,
$\xi$ has increased by at least $2\pi$ more than $\eta$. If follows that at some point
in time $t\in[t_i,t_i+\tau]$ and some $z\in\ZZ$, we have
$$
\lvert\xi(t)-\eta(t)-2z\pi\rvert = 0.
$$
Let $j\in\{i,\ldots,i+k\}$ be maximum such that $t_j\leq t$.
We now prove that Condition~\ref{xieta} is satisfied for the chosen $t_j$. Clearly,
$t-t_j\leq\deltaT_n$. It then follows from inequalities~\eqref{maxAngleStep} and
\eqref{maxAngleStep2} that
\begin{align*}
\lvert\xi(t_j)-\eta(t_j)-2z\pi\rvert &\leq
\lvert\xi(t_j)-\xi(t)\rvert+\lvert\xi(t)-\eta(t)-2z\pi\rvert+
\lvert\eta(t)+2z\pi-\eta(t_j)-2z\pi\rvert\\
&\leq
2\arcsin\frac{(2+\ee_n)\deltaT_n}{2(\rD-\deltaT_n)}+\frac{\deltaT_n}{\rho}\leq\varphi,
\end{align*}
and the claim has been proved.
\end{proof}

For $i\in\NN_0$, define the \emph{canonical interval} $I_i$
as $I_i\mydef\left[i\deltaT_{n-1},(i+1)\deltaT_{n-1}\right)$,
i.e., $I_i$ is the interval of time where the man would run from the $i$'th
to the $(i+1)$'st corner on the polygonal line defined by the strategy $\manStrat_{n-1}$.
We say that $I_i$ \emph{ends} at time
$t=(i+1)\deltaT_{n-1}$.
Note that if $t\in I_i$, then $\goal(t)=\manStrat_{n-1}((i+1)\deltaT_{n-1})$
and $\goal(t)$ moves when $I_i$ ends.

As a consequence of Claim~\ref{claim2} and Claim~\ref{claim3}, we get the following.

\begin{claim}\label{claim4}
If $t\in I_i$ and $\|\man(t)\goal(t)\|\leq\rhoD'$, then for
every $t'>t$, $t'\in I_i$, we have
$$\|\man(t')\goal(t)\|\leq \rhoD'+(1+\ee_n)\tau.$$
\end{claim}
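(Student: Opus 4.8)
The plan is to track the man's Euclidean distance to the goal $\goal\mydef\goal(t)$, which stays fixed throughout the interval $I_i$, and to exploit the fact that this distance can only grow while the man is executing avoidance moves. Fix $t'\in I_i$ with $t'>t$ and let $s$ be the last time in $[t,t']$ at which $\|\man(s)\goal\|\leq\rhoD'$; this set of times is nonempty since $t$ itself qualifies, and by continuity of the man path we have $\|\man(s)\goal\|=\rhoD'$ whenever $s<t'$. If $s=t'$ the desired bound is immediate, so assume $s<t'$, and hence $\|\man(\sigma)\goal\|>\rhoD'$ for every $\sigma\in(s,t']$.

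The first key step is to classify the moves on $(s,t']$. During a free move (case~\ref{choice:a}) or an escape move (case~\ref{choice:b}) the man travels along a straight segment pointing at $\goal$, and since $\|\man(\sigma)\goal\|>\rhoD'>(1+\ee_n)\deltaT_n$ on this interval his distance to $\goal$ strictly decreases during any such move. Therefore the move active just after $s$ must be an avoidance move: were it free or escape, the distance would drop below $\rhoD'$ immediately after $s$, contradicting the maximality of $s$. Let $t_a\le s$ be the time of choice at which this avoidance move begins.

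The second key step is to bound the avoidance phase using Claims~\ref{claim2} and~\ref{claim3}. Because $\goal$ does not move inside $I_i$ and the distance exceeds $\rhoD'$ on $(s,t']$, events (i) and (ii) of Claim~\ref{claim3} are excluded, so an escape move must occur within $\tau$ time of $t_a$; by Claim~\ref{claim2} the moves from $t_a$ to that escape move form a single uninterrupted run of avoidance moves. After the escape move, Claim~\ref{claim2} forbids any further avoidance move before the man reaches $\goal$, and reaching $\goal$ would force the distance to $0<\rhoD'$, which is impossible on $(s,t']$. Hence on $(s,t']$ every avoidance move lies in the window $[t_a,t_a+\tau)$, and all subsequent moves are free or escape moves that only decrease the distance to $\goal$. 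Assembling the estimate: the avoidance activity after $s$ lasts at most $t_a+\tau-s\le\tau$ time (using $t_a\le s$), during which the man moves at speed $1+\ee_n$, so his distance to $\goal$ can grow by at most $(1+\ee_n)\tau$, reaching at most $\rhoD'+(1+\ee_n)\tau$ at the end of the window; thereafter it only decreases. Thus $\|\man(\sigma)\goal\|\le\rhoD'+(1+\ee_n)\tau$ for all $\sigma\in(s,t']$, and in particular for $\sigma=t'$.

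The main obstacle I expect is the bookkeeping at the interface between the discrete times of choice and the continuous time $s$: one must argue carefully that the move straddling $s$ is genuinely an avoidance move, that the $\tau$-window guaranteed by Claim~\ref{claim3} (which is anchored at the time of choice $t_a\le s$ rather than at $s$ itself) still bounds the avoidance occurring after $s$, and that free and escape moves never increase the distance to the fixed goal $\goal$. Once these three points are verified, the monotonicity of the distance outside the single avoidance window makes the final inequality routine.
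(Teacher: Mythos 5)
Your architecture is the natural formalization of what the paper leaves implicit (the paper gives no proof of Claim~\ref{claim4} beyond citing Claims~\ref{claim2} and~\ref{claim3}), and most of it is sound: the move active just after $s$ must indeed be an avoidance move, free and escape moves strictly decrease the distance to $\goal$ while that distance exceeds $(1+\ee_n)\deltaT_n$, and Claim~\ref{claim2} confines all avoidance activity on $(s,t']$ to a single run ending at the first escape move.

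The gap is exactly the step you deferred as ``bookkeeping'': the assertion that events (i) and (ii) of Claim~\ref{claim3} are excluded, so that an escape move occurs within $\tau$ of $t_a$. Claim~\ref{claim3} is anchored at $t_a\le s$, so its window is $[t_a,t_a+\tau]$, which contains $[t_a,s]$ --- and on that sub-interval you have no control over the distance. It can genuinely dip below $\rhoD'$ there: during an avoidance move the man runs along a straight segment, so $\sigma\mapsto\|\man(\sigma)\goal\|$ is convex on $[t_a,t_{a+1}]$, and it can dip strictly below $\rhoD'$ and re-cross $\rhoD'$ at $s$ inside the same move. This happens, for instance, when $\lion_n(t_a)$ lies between the man and $\goal$ with $\|\lion_n(t_a)\goal\|$ just under $\rhoD'-\rD$ and the man's segment passes near the point of $C(\lion_n(t_a),\rD)$ farthest from $\goal$; there $\lvert\eta-\xi\rvert\approx\pi>\varphi$ and no witness lines for case~\ref{choice:b} exist, so this really is an avoidance move. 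In that event, Claim~\ref{claim3} anchored at $t_a$ is already satisfied by event (ii) at a time before $s$ and guarantees no escape move at all; re-anchoring at the next time of choice $t_{a+1}>s$ is the obvious repair, but it only bounds the post-$s$ avoidance phase by $(t_{a+1}-s)+\tau\le\deltaT_n+\tau$, giving $\|\man(t')\goal\|\le\rhoD'+(1+\ee_n)(\tau+\deltaT_n)$, strictly weaker than the stated bound. (Event (i) causes a separate but harmless boundary issue: $[t_a,t_a+\tau]$ may overrun the end of $I_i$, but then $t'-s\le\tau$ automatically and the bound holds.) So the proposal as written does not establish the claimed constant. The weaker constant $\tau+\deltaT_n$ would in fact suffice for everything downstream, since Claim~\ref{claim5} and the choice of $\rD$ have slack of order $\rD\gg\deltaT_n$; the honest fixes are therefore either to prove the claim with $\tau$ replaced by $\tau+\deltaT_n$ via the explicit re-anchoring at $t_{a+1}$, or to recover the exact constant by a finer argument noting that a dip forces the man to be near the far point of the avoidance circle at time $s$, so that alignment (condition~\eqref{xieta}), and hence the escape move, arrives after roughly $2\tau/3$ rather than $\tau$ additional time.
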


\begin{claim}\label{claim5}
For any $i\in\NN_0$ and at any time during the canonical interval $I_i$,
the man is at distance at most $\rhoD'+2(1+\ee_n)\tau$ away from the segment
$\manStrat_{n-1}(I_i)$ and when $I_i$ ends, the man is within distance $\rhoD'+(1+\ee_n)\tau$ from
the endpoint $\manStrat_{n-1}((i+1)\deltaT_{n-1})$ of the segment.
%, that is,
%$$\|\man((i+1)\deltaT_{n-1})\manStrat_{n-1}((i+1)\deltaT_{n-1})\|\leq
%\rhoD'+(1+\ee_n)\tau.$$
\end{claim}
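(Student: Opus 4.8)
The plan is to prove the two assertions of the claim together by induction on $i$, letting the endpoint bound for $I_{i-1}$ feed the interval bound for $I_i$. Fix $i$ and recall that throughout $I_i$ the goal is the constant point $\goal\mydef\manStrat_{n-1}((i+1)\deltaT_{n-1})$, the far endpoint of the segment $\manStrat_{n-1}(I_i)$; write $s\mydef\manStrat_{n-1}(i\deltaT_{n-1})$ for its near endpoint, so that by property~\ref{prop:3} of $\manStrat_{n-1}$ we have $\|s\goal\|=(1+\ee_{n-1})\deltaT_{n-1}$. By the inductive endpoint bound applied to $I_{i-1}$ (or, for $i=0$, since $\man(0)=m_0=s$), the man enters $I_i$ at some distance $d_0\le\rhoD'+(1+\ee_n)\tau$ from $s$.

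The combinatorial backbone is supplied by Claims~\ref{claim2} and~\ref{claim3}. Because $\goal$ is constant on $I_i$, Claim~\ref{claim2} guarantees that an avoidance move never directly precedes a free move and that, once the man makes an escape move, he runs straight toward $\goal$ with no further avoidance move until he reaches $\goal$ itself. Hence, up to the first time (if any) that the man arrives at $\goal$, all avoidance moves of $I_i$ form a single maximal run, and Claim~\ref{claim3} bounds the duration of that run by $\tau$. Thus the man's trajectory on $I_i$, up to possibly reaching $\goal$, decomposes into a straight approach toward $\goal$, then a detour of total length at most $(1+\ee_n)\tau$, then a second straight approach toward $\goal$.

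For the first (deviation) assertion I would bound the distance from this decomposition, using that the distance to a fixed segment is a convex function whose maximum on any line segment is attained at an endpoint. A straight approach toward $\goal$ starting from a point at distance $\le D$ from $\manStrat_{n-1}(I_i)$ stays within $D$ of the segment, since its two endpoints lie within $D$ of the segment (one being $\goal$, at distance $0$). The first approach therefore stays within $d_0$; the detour displaces the man by at most $(1+\ee_n)\tau$ from an already-bounded point, hence stays within $d_0+(1+\ee_n)\tau$; and the second approach, launched from a point within $d_0+(1+\ee_n)\tau$ of the segment, stays within the same bound. With $d_0\le\rhoD'+(1+\ee_n)\tau$ this yields $\rhoD'+2(1+\ee_n)\tau$. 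Once the man is within $\rhoD'$ of $\goal$, Claim~\ref{claim4} keeps him within $\rhoD'+(1+\ee_n)\tau$ of $\goal$, hence within that (stronger) distance of the segment.

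For the endpoint bound I would split into two cases. If the man ever comes within $\rhoD'$ of $\goal$ during $I_i$, Claim~\ref{claim4} immediately gives $\|\man(t')\goal\|\le\rhoD'+(1+\ee_n)\tau$ for all later $t'\in I_i$, in particular as $I_i$ ends. If the man never comes within $\rhoD'$ of $\goal$, then every free and escape move reduces his distance to $\goal$ by the full step length $(1+\ee_n)\deltaT_n$ (the parameter choice makes $\rhoD'$ exceed a single step, so there is no overshoot), while the single avoidance run, of duration at most $\tau$, increases it by at most $(1+\ee_n)\tau$. Over the interval of length $\deltaT_{n-1}$ this gives a net decrease of at least $(1+\ee_n)(\deltaT_{n-1}-2\tau)$, so the final distance is at most $D_0-(1+\ee_n)(\deltaT_{n-1}-2\tau)$, where $D_0\le(1+\ee_{n-1})\deltaT_{n-1}+d_0$ is the starting distance to $\goal$. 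The main obstacle, and the point where the definition of $\rD$ is used, is to check that this is at most $d_0$: the bound rearranges to $(\ee_n-\ee_{n-1})\deltaT_{n-1}\ge 2(1+\ee_n)\tau=12\pi(1+\ee_n)\rD/\ee_n$, which is exactly what the first term in the definition of $\rD$ guarantees. Combined with $d_0\le\rhoD'+(1+\ee_n)\tau$ this closes the induction.
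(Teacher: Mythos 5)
Your proof is correct, and its skeleton matches the paper's: induction over the canonical intervals with the endpoint bound for $I_{i-1}$ feeding the deviation bound for $I_i$, Claims~\ref{claim2} and~\ref{claim3} to isolate at most one avoidance detour, convexity of the distance to the segment along straight runs toward $\goal$, Claim~\ref{claim4} in the near-goal regime, and the first term in the definition of $\rD$ as the source of slack. The genuine difference is how the endpoint bound is closed. The paper reasons about time: with no detour the man would reach $\goal$ by time $(i+1)\deltaT_{n-1}-2\tau$; the detour plus its compensation costs at most $2\tau$; hence the man actually reaches $\goal$ before $I_i$ ends, and Claim~\ref{claim4} then gives the bound. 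You reason about distance: either the man comes within $\rhoD'$ of $\goal$ (and Claim~\ref{claim4} finishes), or every non-avoidance step gains a full $(1+\ee_n)\deltaT_n$ toward $\goal$ while the single detour loses at most $(1+\ee_n)\tau$, so the final distance is at most $d_0$. Both variants reduce to the same inequality $(\ee_n-\ee_{n-1})\deltaT_{n-1}\geq 2(1+\ee_n)\tau$, amply guaranteed by the definition of $\rD$; yours avoids the paper's intermediate step of arguing that the first avoidance move must occur before $(i+1)\deltaT_{n-1}-2\tau$, at the cost of a slightly more delicate case split. One sentence needs tightening: Claim~\ref{claim3} does not bound the duration of the avoidance run by $\tau$ outright, since the run may continue past $\tau$ when the event that occurs is the man coming within $\rhoD'$ of $\goal$ rather than an escape move; so your three-phase decomposition is valid only up to the first time the man is within $\rhoD'$ of $\goal$. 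Your case analysis does cover that situation --- from that moment on, Claim~\ref{claim4} controls both the deviation and the endpoint --- but the decomposition sentence should state this caveat explicitly.
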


\begin{proof}
We prove the claim by induction on $i$.
To easily handle the base-case, we introduce
an auxiliary canonical interval $I_{-1}=[-\deltaT_{n-1},0)$ and assume that
the lions and the man are standing at their initial positions during all of
$I_{-1}$.
The statement clearly holds for $i=-1$.

Assume
inductively that the statement holds for $I_{i-1}$ and consider the interval $I_i$.
Let $\goal\mydef \manStrat_{n-1}((i+1)\deltaT_{n-1})$.
The additional distance that the man runs during $I_i$
when his speed is $1+\ee_{n}$ as compared to the speed
$1+\ee_{n-1}$ is $\deltaT_{n-1}(\ee_n-\ee_{n-1})$.
It follows from the definition of $\rD$ that
$$\deltaT_{n-1}(\ee_n-\ee_{n-1})\geq
\rhoD+2\rD+3(1+\ee_n)\tau>
\rhoD'+3(1+\ee_n)\tau.$$

By the induction hypothesis, the man is within a distance of
$\rhoD'+(1+\ee_n)\tau$ from $\manStrat_{n-1}(i\deltaT_{n-1})$
at time $i\deltaT_{n-1}$. Thus, his distance to $\goal$ at the beginning of
interval $I_i$ is at most
$\deltaT_{n-1}(1+\ee_{n-1})+\rhoD'+(1+\ee_n)\tau$,
where $\deltaT_{n-1}(1+\ee_{n-1})$ is the length of the interval
$M_{n-1}(I_i)$.
If the man does not do any avoidance moves during $I_i$, he runs
straight to $\goal$, so it follows that he reaches $\goal$ at time
$(i+1)\deltaT_{n-1}-2\tau$ at the latest.
Therefore, the statement is clearly true in this case.

Otherwise, let $t\in I_i$ be the first time of choice at which he does an avoidance move during $I_i$.
If he is at distance at most
$\rhoD'$ from $\goal$ at time $t$,
the statement follows from Claim~\ref{claim4}.
Therefore, assume that the distance is more than
$\rhoD'$.
Then, we must have that
$t<(i+1)\deltaT_{n-1}-2\tau$,
since, if $t$ was larger, he would already have reached $\goal$ by the above discussion.
Hence, Claim~\ref{claim3} gives that
at some time $t'\leq t+\tau$,
either
\begin{enumerate}
\item\label{claim5:1}
the man gets within a distance of $\rhoD'$
from $\goal$, or

\item\label{claim5:2}
he does an escape move.
\end{enumerate}

We first prove that in the interval $[t,t']$,
the distance from the man to the segment $\manStrat_{n-1}(I_i)$
is at most $\rhoD'+2(1+\ee_n)\tau$.
To this end, note that his distance to the segment at time $t$ is
at most
$\rhoD'+(1+\ee_n)\tau$. Thus, since $t'\leq t+\tau$, his distance at
time $t'$ can be at most $\rhoD'+2(1+\ee_n)\tau$.

It remains to be proven that the man stays within distance
$\rhoD'+2(1+\ee_n)\tau$ from $\manStrat_{n-1}(I_i)$ after
time $t'$ and that he is at distance at most
$\rhoD'+(1+\ee_n)\tau$ from $\goal$ at time $(i+1)\deltaT_{n-1}$.
If we are in case~\ref{claim5:1}, the statement follows from
Claim~\ref{claim4}, so assume case~\ref{claim5:2}.

By Claim~\ref{claim2}, the man will not do an avoidance move again after time $t'$ until
he reaches $\goal$ or $I_i$ ends.
While he is running directly towards $\goal$,
his distance to the segment $\manStrat_{n-1}(I_i)$ is decreasing,
so it follows that the distance is always at most
$\rhoD'+2(1+\ee_n)\tau$, as claimed.
Since he was doing avoidance moves in a period of length at most
$\tau$ before the escape move at time $t'$,
he can completely compensate for the
delay caused by the avoidance moves in the same amount of time by running
directly towards $\goal$.
The total delay is therefore at most
$2\tau$. Since he would reach $\goal$ at time
$(i+1)\deltaT_{n-1}-2\tau$ at the latest if he did not do any avoidance moves,
it follows that he reaches $\goal$
at time $(i+1)\deltaT_{n-1}$ or earlier.
The statement then follows from Claim~\ref{claim4}.
\end{proof}

We are now ready to finish our proof of Theorem~\ref{thm:fastMan}.
In particular, it remains to define safety distances $c_{nj}$ that satisfy the stated requirements, and to prove that the man eventually escapes the convex hull of the lions.
For any $j=1,\ldots,n-1$, we define $c_{nj}\mydef c_{n-1,j}-C_n$. 
Claim~\ref{claim5} says that during interval $I_i$ for any $i$, the distance from the man to the segment $\manStrat_{n-1}(I_i)$ is at most
\begin{align}\label{eq:boundingSegDist}
\rhoD'+2(1+\ee_n)\tau<
\rhoD+2\rD+2(1+\ee_n)\tau\leq C_n.
\end{align}

Consider a subinterval $I\mydef [p\deltaT_n,(p+1)\deltaT_n]$ of $I_i$ and the segment $\man(I)$ that the man traverses during $I$.
It follows from the induction hypothesis that the distance from any point on $\man(I)$ to lion $\lion_j$ for $j=1,\ldots,n-1$ is at least $c_{nj}$.
From the definition of $C_n$, we furthermore get that
\begin{align}\label{eq:boundingDist}
c_{nj}=
c_{jj}-\sum_{k=j+1}^n C_k\geq
c_{jj}\left(1-\sum_{k=1}^{n-j}1/2^{k+1}\right)\geq c_{jj}/2>0.
\end{align}

We define the safety distance to lion $\lion_n$ as $c_{nn}\mydef \rD-(3+\ee_n)\deltaT_n$.
By Claim~\ref{claim1}, it holds that in the time interval $I$, the distance from any point on the segment $\man(I)$ to $\lion_n$ is at least $c_{nn}$.

To see that the man escapes the convex hull of the lions, we first give a bound on the distance $\|M_{n-1}(t)M_n(t)\|$ for $t\in I_i$.
Since $M_{n-1}(t)$ is a point on the segment $\manStrat_{n-1}(I_i)$ of length $\deltaT_{n-1}(1+\ee_n)$, the bound~\eqref{eq:boundingSegDist} on the distance from $M_n(t)$ to the segment $\manStrat_{n-1}(I_i)$ implies that
\begin{align}\label{eq:manmanDist}
\|M_{n-1}(t)M_n(t)\|\leq C_n+\deltaT_{n-1}(1+\ee_n)<2C_n.
\end{align}
It follows that $\|\manStrat_1(t)\manStrat_n(t)\|\leq 2\sum_{k=2}^n C_k$.
Since $\manStrat_1(t)$ traverses a ray with constant speed
$1+\ee/2>1$, the man eventually escapes the convex hull and the distance to the convex hull diverges to $\infty$ as $t\longrightarrow\infty$.
This proves the theorem.
\end{proof}

\section{The Fast Man Surviving Infinitely many Lions}\label{infLionsSec}

We now show that taking the limit of the strategy $M_n$ from the proof of Theorem~\ref{thm:fastMan} as $n\longrightarrow\infty$ gives a strategy $M_\infty$ for the man winning over any countably infinite set of lions.

\begin{theorem}
    In the plane $\mathbb{R}^2$, for any $\ee>0$, a man able to run at speed $1+\ee$ has a winning strategy
    against any countably infinite set $\{\lion_1,\lion_2,\ldots\}$ of unit-speed lions, provided that the man does not start at the same point as a lion.
\end{theorem}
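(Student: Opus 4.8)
The plan is to obtain $M_\infty$ as a limit of the finite strategies $M_n$ and argue that in this limit the man still keeps a positive safety distance from every lion. The key structural fact from the proof of Theorem~\ref{thm:fastMan} is that strategy $M_n$ is built on top of $M_{n-1}$ by only making \emph{local} detours around the $n$'th lion, and that the resulting curves stay close together. Specifically, inequality~\eqref{eq:manmanDist} shows that $\|\manStrat_{n-1}(t)\manStrat_n(t)\|<2C_n$ for all $t$, where $C_n=\min_{j<n} c_{jj}/2^{n-j+1}$ decreases geometrically in $n$. First I would fix the increasing sequence $\ee_n=(1-2^{-n})\ee\to\ee$ and the associated parameters exactly as in the finite proof, so that the strategies $M_1,M_2,\ldots$ form a coherent nested family, each refining the previous one.

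The central step is to show that the sequence of man-paths $t\mapsto M_n(\lion_1,\ldots,\lion_n)(t)$ converges uniformly on compact time intervals. This follows from the bound $\|\manStrat_{n-1}(t)\manStrat_n(t)\|<2C_n$ together with the geometric decay of $C_n$: for $m>n$ we get $\|\manStrat_n(t)\manStrat_m(t)\|\le 2\sum_{k=n+1}^{m}C_k$, which is the tail of a convergent series and hence tends to $0$ uniformly in $t$ as $n\to\infty$. Thus $(M_n)$ is uniformly Cauchy and converges to a limit path $M_\infty(t)$; being a uniform limit of unit-speed-bounded (indeed $(1+\ee_n)$-Lipschitz) paths, $M_\infty$ is itself $(1+\ee)$-Lipschitz, so it is a legal man path. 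I would then note that this construction respects the causality/strategy requirement, since each $M_n(t)$ depends only on the lion positions up to time $t$, and therefore so does the limit.

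It remains to check that $M_\infty$ avoids every lion by a positive margin. Here I would use the safety-distance bound~\eqref{eq:boundingDist}, namely $c_{nj}\ge c_{jj}/2>0$ for all $n>j$: the distance from $M_n$ to lion $\lion_j$ is at least $c_{nj}$, and since $c_{nj}$ is bounded below by the $n$-independent constant $c_{jj}/2$, passing to the limit gives $\|M_\infty(t)\lion_j(t)\|\ge c_{jj}/2>0$ for every $j$ and every $t$. Hence $M_\infty$ never coincides with any $\lion_j$, so it is a winning strategy against the whole countable family. I would finish by observing that the limit strategy need not be locally finite, which is why the statement only claims a winning strategy and not a locally finite one.

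The main obstacle I anticipate is justifying that the limit is genuinely a \emph{strategy} in the formal sense of the paper, i.e.\ a well-defined function of the lion paths satisfying the non-anticipation condition, rather than merely a limit of paths for one fixed choice of lion trajectories. The delicate point is that the parameters $\deltaT_n,\rD$ etc.\ defining $M_n$ depend on the safety constants $c_{jj}$, which in turn depend on the initial lion positions; one must check that these can be fixed in advance from the countable configuration $\{\lion_j(0)\}$ and the initial distances, so that the whole nested family is determined before the game starts and the limit inherits the causal dependence on each $\lion_j\vert_{[0,t]}$. Once this bookkeeping is in place, the convergence and safety-distance estimates are routine consequences of the geometric bounds already established.
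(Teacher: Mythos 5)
Your proposal is correct and follows essentially the same route as the paper: define $M_\infty$ as the limit of the nested strategies $M_n$, use inequality~\eqref{eq:manmanDist} and the geometric decay of $C_n$ to get (uniform) Cauchy convergence, observe that the $(1+\ee)$-Lipschitz bound passes to the limit, and conclude positive clearance from each lion $\lion_j$ via the $n$-independent lower bound $c_{nj}\ge c_{jj}/2$ from~\eqref{eq:boundingDist}. Your additional remarks on non-anticipation and on the loss of local finiteness are sensible elaborations of points the paper treats implicitly (the paper notes in the concluding remarks that $M_\infty$ is not locally finite), but they do not change the argument.
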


\begin{proof}
We consider the strategy $M_n$ from the proof of Theorem~\ref{thm:fastMan} as $n\longrightarrow\infty$.
We first prove that at any point in time $t$, the point $M_n(t)$ converges to a point $M_\infty(t)$ as $n\longrightarrow\infty$.
For $m>n$, inequality~\eqref{eq:manmanDist} yields that
$$
\|M_n(t)M_m(t)\|\leq \sum_{i=n+1}^m 2C_i\leq 2c_{11} \sum_{i=n+1}^m\frac 1{2^i}.
$$
The latter sum tends to $0$ as $n,m\longrightarrow\infty$.
Hence, each point $M_\infty(t)$ is well-defined.
We need to ensure that $M_\infty(t)$ moves with speed at most $1+\ee$.
This is indeed the case since for any two points in time $s,t$, we have for any $n$ that $\|M_n(s)M_n(t)\|\leq(1+\ee)\cdot|s-t|$.
Therefore, it must also be the case that $\|M_\infty(s)M_\infty(t)\|\leq(1+\ee)\cdot|s-t|$.
At last, we need to check that the distance from the man to each lion $\lion_j$ is bounded by some constant, so that $\lion_j$ never catches the man.
This is exactly the statement in inequality~\eqref{eq:boundingDist} as $n\longrightarrow\infty$.
\end{proof}

\section{Concluding Remarks}

We conclude the paper by suggesting some open problems.
The proof that three lions are enough to catch the man relies on the assumption that there are only finitely many lakes.
If there are infinitely many lakes, a similar strategy might not work, as the lions may never restrict the man to a region with no lakes.
It is an interesting question if three lions are also enough if there are infinitely many lakes.
It must be assumed that the sum of the lengths of all the boundaries of the lakes is finite, as otherwise, the initial distance from the lions to the man can be infinite, even though each lake is rectifiable.

We have shown that a fast man can survive any finite number of lions, and that the lions do not also have a winning strategy since the man's strategy is locally finite.
Furthermore, the man has a winning strategy against any countably infinite set of lions.
This strategy, however, is not locally finite, so the lions might also have a winning strategy, but we have not been able to find one.
For the man to have a winning strategy, it is necessary that there are ``only'' countably many lions.
Indeed, the man cannot win if the lions form a complete circle centered at the man.

%%
%% Bibliography
%%

%% Either use bibtex (recommended), but commented out in this sample

%\bibliography{dummybib}

%% .. or use bibitems explicitely

\bibliographystyle{plain}
\bibliography{bib}

\begin{thebibliography}{10}

\bibitem{abrahamsen2017best}
Mikkel Abrahamsen, Jacob Holm, Eva Rotenberg, and Christian Wulff-Nilsen.
\newblock {Best Laid Plans of Lions and Men}.
\newblock In {\em 33rd International Symposium on Computational Geometry (SoCG
  2017)}, pages 6:1--6:16, 2017.

\bibitem{abrahamsen2020escaping}
Mikkel Abrahamsen, Jacob Holm, Eva Rotenberg, and Christian Wulff-Nilsen.
\newblock Escaping an infinitude of lions.
\newblock {\em The American Mathematical Monthly}, 127(10):880--896, 2020.

\bibitem{aigner1984game}
Martin Aigner and Michael Fromme.
\newblock A game of cops and robbers.
\newblock {\em Discrete Applied Mathematics}, 8(1):1--12, 1984.

\bibitem{alon2015chasing}
Noga Alon and Abbas Mehrabian.
\newblock Chasing a fast robber on planar graphs and random graphs.
\newblock {\em Journal of Graph Theory}, 78(2):81--96, 2015.

\bibitem{berarducci1993cop}
Alessandro Berarducci and Benedetto Intrigila.
\newblock On the cop number of a graph.
\newblock {\em Advances in Applied Mathematics}, 14(4):389--403, 1993.

\bibitem{bhadauria2012capturing}
Deepak Bhadauria, Kyle Klein, Volkan Isler, and Subhash Suri.
\newblock Capturing an evader in polygonal environments with obstacles: The
  full visibility case.
\newblock {\em The International Journal of Robotics Research},
  31(10):1176--1189, 2012.

\bibitem{bollobas2012lion}
B.~Bollob{\'a}s, I.~Leader, and M.~Walters.
\newblock Lion and man---can both win?
\newblock {\em Israel Journal of Mathematics}, 189(1):267--286, 2012.
\newblock Version with appendix: \url{https://arxiv.org/abs/0909.2524}.

\bibitem{bollobas2006art}
B{\'e}la Bollob{\'a}s.
\newblock {\em The Art of Mathematics: Coffee Time in Memphis}.
\newblock Cambridge University Press, 2006.

\bibitem{bollobas2011lion}
B{\'e}la Bollob{\'a}s.
\newblock The lion and the christian, and other pursuit and evasion games.
\newblock In Dierk Schleicher and Malte Lackmann, editors, {\em An Invitation
  to Mathematics: From Competitions to Research}, pages 181--193.
  Springer-Verlag Berlin Heidelberg, 2011.

\bibitem{CHERNOUSKO1976problem}
F.L. Chernous'ko.
\newblock A problem of evasion from many pursuers.
\newblock {\em Journal of Applied Mathematics and Mechanics}, 40(1):11--20,
  1976.

\bibitem{chung2011search}
Timothy~H. Chung, Geoffrey~A. Hollinger, and Volkan Isler.
\newblock Search and pursuit-evasion in mobile robotics.
\newblock {\em Autonomous Robots}, 31:299--316, 2011.

\bibitem{croft1964lion}
Hallard~T. Croft.
\newblock ``{L}ion and man'': A postscript.
\newblock {\em Journal of the London Mathematical Society}, 39:385--390, 1964.

\bibitem{flynn1973lion}
James Flynn.
\newblock Lion and man: The boundary constraint.
\newblock {\em SIAM Journal on Control}, 11:397--411, 1973.

\bibitem{flynn1974lion}
James Flynn.
\newblock Lion and man: The general case.
\newblock {\em SIAM Journal on Control}, 12:581--597, 1974.

\bibitem{fokkink2013open}
Robbert Fokkink, Leonhard Geupel, and Kensaku Kikuta.
\newblock Open problems on search games.
\newblock In Steve Alpern, Robbert Fokkink, Leszek~Antoni G\k{a}sieniec, Roy
  Lindelauf, and V.S. Subrahmanian, editors, {\em Search Theory: A Game
  Theoretic Perspective}, chapter~5, pages 181--193. Springer-Verlag New York,
  2013.

\bibitem{fomin2010pursuing}
Fedor~V. Fomin, Petr~A. Golovach, Jan Kratochv\'{i}l, Nicolas Nisse, and Karol
  Suchan.
\newblock Pursuing a fast robber on a graph.
\newblock {\em Theoretical Computer Science}, 411:1167--1181, 2010.

\bibitem{ibragimov2012evasion}
Gafurjan~I. Ibragimov, Mehdi Salimi, and Massoud Amini.
\newblock Evasion from many pursuers in simple motion differential game with
  integral constraints.
\newblock {\em European Journal of Operational Research}, 218(2):505--511,
  2012.

\bibitem{jankovic1978about}
Vladimir Jankovi{\'c}.
\newblock About a man and lions.
\newblock {\em Matemati\v{c}ki Vesnik}, 2:359--361, 1978.

\bibitem{lewin1986lion}
J.~Lewin.
\newblock The lion and man problem revisited.
\newblock {\em Journal of Optimization Theory and Applications},
  49(3):411--430, 1986.

\bibitem{littlewood1986littlewood}
John~Edensor Littlewood.
\newblock {\em Littlewood's miscellany: edited by B{\'e}la Bollob{\'a}s}.
\newblock Cambridge University Press, 1986.

\bibitem{rado1974more}
Peter~A. Rado and Richard Rado.
\newblock More about lions and other animals.
\newblock {\em Mathematical Sprectrum}, 7(3):89--93, 1974/75.

\end{thebibliography}

% \begin{thebibliography}{50}
% \bibitem{Simpson} Homer J. Simpson. \textsl{Mmmmm...donuts}. Evergreen Terrace Printing Co., % Springfield, Somewhere, USA, 1998
% \end{thebibliography}

\end{document}